%% file: journal/manuscript/main.tex
\documentclass[aps,pra,reprint,superscriptaddress,longbibliography,floatfix]{revtex4-2}
\usepackage[english]{babel}
\usepackage{enumitem}
\usepackage{epsfig}
\usepackage{graphicx}% Include figure files
\usepackage{array}[=2016-10-06]
\usepackage{dcolumn}% Align table columns on decimal point
\usepackage{bm}%
\usepackage{amsmath,amsfonts}
\usepackage{amssymb}
\usepackage{amsthm}
\usepackage{physics}
\usepackage{comment}
\usepackage{color}
\usepackage{xcolor}
\usepackage[normalem]{ulem}
\usepackage{xurl}
\usepackage[final,colorlinks,linkcolor=blue,anchorcolor=blue,urlcolor=blue,citecolor=blue]{hyperref}
\usepackage{booktabs}
\usepackage{siunitx}
\usepackage{epstopdf}
\usepackage{diagbox}
\usepackage{framed}
\usepackage{dsfont}
\usepackage{longtable}
\usepackage[T1]{fontenc}
\usepackage{inconsolata} % optional monospace; remove if unavailable
\usepackage{listings}
\usepackage{multirow}
\usepackage{tabularx}
\usepackage{adjustbox}
\usepackage{rotating}
\usepackage{microtype}
\usepackage{tikz}
\usetikzlibrary{arrows.meta,angles,quotes,calc}

\lstdefinestyle{uvspec}{
  basicstyle=\ttfamily\small,
  columns=fullflexible,
  commentstyle=\itshape\color{gray},
  showstringspaces=false,
  frame=single,
  numbers=left,
  numbersep=6pt,
  breaklines=true,
  breakatwhitespace=true
}

\newtheorem{theorem}{Theorem}

\AtBeginDocument{%
  \setlength{\hfuzz}{\maxdimen}%
  \hbadness=10000%
  \sloppy%
}

\begin{document}

\title{
  Finite-key feasibility of geostationary quantum key distribution
}

\author{Vaisakh Mannalath}
\affiliation{Vigo Quantum Communication Center, University of Vigo, Vigo E-36310, Spain}
\affiliation{Escuela de Ingenier\'ia de Telecomunicaci\'on, Department of Signal Theory and Communications, University of Vigo, Vigo E-36310, Spain}
\affiliation{AtlanTTic Research Center, University of Vigo, Vigo E-36310, Spain}

\author{V\'ictor Zapatero}
\affiliation{Vigo Quantum Communication Center, University of Vigo, Vigo E-36310, Spain}
\affiliation{Escuela de Ingenier\'ia de Telecomunicaci\'on, Department of Signal Theory and Communications, University of Vigo, Vigo E-36310, Spain}
\affiliation{AtlanTTic Research Center, University of Vigo, Vigo E-36310, Spain}

\author{Marcos Curty}
\affiliation{Vigo Quantum Communication Center, University of Vigo, Vigo E-36310, Spain}
\affiliation{Escuela de Ingenier\'ia de Telecomunicaci\'on, Department of Signal Theory and Communications, University of Vigo, Vigo E-36310, Spain}
\affiliation{AtlanTTic Research Center, University of Vigo, Vigo E-36310, Spain}

\date{}

\input{journal/sections/abstract}

\maketitle

\input{journal/sections/introduction}
\input{journal/sections/main_protocol_definition}

\input{journal/sections/main_security_proof}
\input{journal/sections/main_channel_model}

\input{journal/sections/main_plots}

\clearpage
\input{journal/sections/conclusion}
\input{journal/sections/acknowledgements}

\clearpage
\appendix
\input{journal/sections/appendix_decoy_state_bb84_protocol}

\input{journal/sections/appendix_detection_modes_and_asymmetric_passive_bb84}

\input{journal/sections/appendix_statistical_bounds}

\input{journal/sections/appendix_channel_model}

\input{journal/sections/appendix_plots_reduced}

\clearpage
\bibliography{journal/manuscript/ref}

\end{document}

%% file: journal/sections/abstract.tex
\begin{abstract}
    Quantum key distribution (QKD) via geostationary Earth orbit (GEO) satellites offers a compelling route to continuous, continental-scale secure communications. However, operation in this regime entails extreme channel loss and significant background noise, particularly if daylight operation is desired. We present a comprehensive end-to-end feasibility study of a decoy-state BB84 protocol in a GEO downlink configuration, incorporating variable-length finite-key security and tight statistical bounds to expand the achievable positive-key regime. Our analysis encompasses the principal receiver architectures relevant to downlink QKD and employs a physically realistic channel model that captures the dominant loss and noise mechanisms. We evaluate performance across rural, urban, and coastal environments at multiple wavelengths, including visible Fraunhofer absorption minima and the telecom band. Using historical cloud data across Europe, we forecast the annual secret-key yield across the continent. Through a systematic exploration of the high-dimensional parameter space, we identify key trade-offs and performance bottlenecks that determine feasibility. These results establish practical operating thresholds and provide actionable design guidelines for future GEO-QKD missions.
\end{abstract}

%% file: journal/sections/introduction.tex
\section{Introduction}
\label{sec:introduction}
% Quantum key distribution (QKD) enables information-theoretically secure communication, providing cryptographic security grounded in quantum theory rather than computational hardness assumptions~\cite{bennett1984quantum,ekert1991quantum}. Since the original BB84 protocol, prepare-and-measure schemes have become the workhorse of QKD systems, supported by a mature security theory and an expanding engineering ecosystem~\cite{pirandola2020advances,xu2020secure,scarani2014securityDV}. The strategic value of long-distance secure key distribution has elevated QKD from the laboratory to an
% increasingly geopolitical arena, where national and international programs treat quantum communication
% as a critical infrastructure.

Quantum key distribution (QKD) enables information-theoretically secure communication, with security guaranteed by quantum theory rather than computational assumptions~\cite{bennett1984quantum,ekert1991quantum}. Today, prepare-and-measure schemes constitute the workhorse of QKD systems, supported by a mature security framework and an expanding engineering ecosystem~\cite{lo2014secure,pirandola2020advances,xu2020secure,scarani2014securityDV}. In parallel, the growing demand for long-distance secure key distribution has driven the development of QKD platforms capable of global-scale coverage.

%  While cutting-edge terrestrial protocols such as twin-field QKD have successfully pushed fiber transmission thresholds beyond 1000~km~\cite{liu2023twinfield}, they still depend on stringent phase-stabilization across long links and on dedicated terrestrial fiber infrastructure, which limits deployment flexibility across remote, oceanic, and intercontinental routes. Consequently, satellite links remain the most viable pathway to achieving flexible, continental quantum network coverage. Satellite QKD circumvents the exponential fiber attenuation limits, a paramount capability for connecting geographically dispersed metropolitan quantum networks~\cite{sidhu2021advances,bedington2017progress,pirandola2020advances,liao2017satellite}.

While advanced terrestrial protocols such as twin-field QKD have extended the achievable transmission distance in optical fiber links beyond 1000 km~\cite{liu2023twinfield}, they remain constrained by stringent phase stabilization requirements and reliance on dedicated fiber infrastructure. This limits deployment in remote, oceanic, and intercontinental scenarios. These challenges motivate satellite-based QKD as a complementary architecture for large-scale quantum networks. By avoiding the exponential attenuation inherent to long-distance fiber transmission, satellite links enable quantum communication between widely separated users~\cite{sidhu2021advances,bedington2017progress,pirandola2020advances,liao2017satellite}.

% The technological viability of satellite QKD was decisively demonstrated by China’s Micius satellite, successfully demonstrating sifted key rates of 40.2~kbits/s from an altitude of approximately 500~km, followed by an unprecedented 7600~km intercontinental key exchange via satellite relay~\cite{liao2017satellite,liao2018satellite}. These demonstrations, comprehensively reviewed in Ref.~\cite{lu2022micius}, underscored the robust operation of decoy-state BB84 protocols under fluctuating atmospheric conditions,  satisfying the rigorous size, weight, and power limits of space-qualified hardware ~\cite{hwang2003decoy,lo2005decoy,wang2005decoy,sidhu2021advances,liao2017satellite}. More recently, the 23~kg Jinan-1 mission highlighted payload miniaturization for scaling up quantum networks with satellite constellations~\cite{li2024microsatellite}, while dedicated research on aggressive temporal, spatial, and spectral filtering has demonstrated the ability to operate satellite QKD in broad daylight environments~\cite{liao2017longdistance,avesani2021full,cai2024free}.

The technological viability of satellite QKD was decisively established by China’s Micius low-Earth orbit (LEO) satellite, which achieved sifted key rates of tens of kbits/s from an altitude of approximately 500 km and enabled an unprecedented 7600 km intercontinental key exchange via satellite relay~\cite{liao2017satellite,liao2018satellite}. These landmark demonstrations confirmed the robust operation of decoy-state BB84 protocols under fluctuating atmospheric conditions, while meeting the stringent size, weight, and power constraints of space-qualified hardware~\cite{hwang2003decoy,lo2005decoy,wang2005decoy,sidhu2021advances,liao2017satellite}. More recently, the Jinan-1 LEO mission has highlighted significant progress in payload miniaturization, paving the way for scalable quantum networks based on satellite constellations~\cite{li2024microsatellite}. In addition, advances in aggressive temporal, spatial, and spectral filtering have demonstrated that satellite QKD can operate even under broad daylight conditions~\cite{liao2017longdistance,avesani2021full,cai2024free}.

Building on the experience gained from LEO QKD, geostationary Earth orbit (GEO) quantum key distribution offers a qualitatively different approach to global quantum communications. LEO links benefit from comparatively lower channel loss and correspondingly high instantaneous key rates. However, they are inherently constrained by orbital dynamics: each ground station has only brief, few-minute satellite passes, creating challenges for rapid tracking, frequent constellation handovers, and stringent scheduling requirements~\cite{bourgoin2013comprehensive,sidhu2023finite,vasylyev2019satellite}. In contrast, positioning a satellite in GEO provides persistent visibility over roughly one-third of the Earth~\cite{liege2026analysis}. This stable geometry enables uninterrupted service windows and represents a compelling advantage for operational deployment. Consequently, GEO-QKD has become the focus of several major international initiatives. China, for instance, plans to launch the “Dawn” GEO mission around 2027, building on earlier quantum communication experiments associated with Shijian-20~\cite{pan2027dawn}. In Europe, feasibility studies of GEO-based architectures have been conducted~\cite{dirks2021geoqkd,klop2021qkd}, including proposals for hybrid networks combining GEO and LEO layers~\cite{qdesign2025}, with various missions currently under development~\cite{abad2022caramuel,garbo2024}.

The principal challenge for realizing GEO-QKD is the extreme channel attenuation, which pushes GEO transmissions into a very sparse-count regime~\cite{P21,pirandola2021limits,dirks2021geoqkd,gunthner2017geostat}. In this scenario, advanced statistical inference and precise parameter estimation are essential. Previous satellite QKD analyses relied on analytical decoy estimators combined with standard statistical bounds~\cite{lim2014concise,serfling1974probability,hoeffding1963probability}, which become prohibitively pessimistic under GEO-QKD conditions. By contrast, linear programming (LP)--based decoy estimation provides tighter predictions for single-photon statistics, enabling positive key rates in regimes where traditional analytical approaches fail~\cite{curty2014finite,zhang2017improved,attema2021optimizing}. Moreover, refined statistical tools substantially reduce finite-size penalties~\cite{bancal2022simple,mannalath2025sharp,mannalath2025elementary}.

% A parallel, critical enhancement targeting volatile satellite channels is the deployment of variable-length protocols. Fixed length operation commits to a target key length before transmission and trades acceptance probability against throughput~\cite{tomamichel2012tight,hayashi2012concise}. Conversely, variable-length protocols adapt the final block extraction to mirror the varying in-situ statistics. This minimizes protocol aborts and improves expected yields despite harsh fluctuations such as rapidly shifting turbulence, background radiance, and atmospheric transparency~\cite{tupkary2025imperfect,Tupkary2024,wang2025phase}.

A parallel and equally important enhancement for volatile satellite channels is the deployment of variable-length protocols. Fixed-length operation commits to a predetermined key length before transmission, trading acceptance probability against throughput~\cite{tomamichel2012tight,hayashi2012concise}. In contrast, variable-length protocols adapt the final block extraction to reflect instantaneous in-situ statistics. This dynamic adjustment minimizes protocol aborts and improves expected key yields, even under challenging conditions such as rapidly fluctuating turbulence, background radiance, and atmospheric transparency~\cite{tupkary2025imperfect,Tupkary2024,wang2025phase}.

% Targeting the impending integration of intercontinental quantum networks, this work introduces a comprehensive, physically realistic GEO-QKD feasibility analysis. 
% % Our underlying methodology combines three main elements: (i) incorporating optimal LP decoy-state estimation within variable-length finite-key extraction utilizing sharp statistical bounds; (ii) integrating a flexible receiver modeling framework that encompasses active, symmetric and asymmetric passive detection paradigms; and (iii) implementing a dynamic atmospheric channel-and-noise profile seamlessly factoring precise terrestrial geometry, turbulence, intrinsic propagation parameters, and sun-position-coupled background via radiative transfer and sky-background tools~\cite{emde2016libradtran,mayer2005libradtran,noll2013cerro,jones2013advanced}. 
% We conduct systematic performance explorations capturing optical wavelengths, telescope apertures, adaptive optics (AO) systems, spectral filtering, and representative  detector configurations (Si APDs, InGaAs APDs, and SNSPDs), and perform explicit annual key-volume forecasting governed by empirical cloud-availability parameters from publicly available datasets~\cite{tzallas2022craas}. The same framework can also be extended to LEO scenarios, although that would require explicitly accounting for satellite motion and the associated dynamical effects.

Targeting the imminent deployment of intercontinental quantum networks, this work presents a comprehensive, physically realistic feasibility analysis of GEO-QKD. We systematically explore system performance across optical wavelengths, telescope apertures, adaptive optics (AO) systems, spectral filtering, and representative detector configurations. Our study provides explicit annual key-volume forecasts, incorporating empirical cloud-availability data drawn from publicly available datasets~\cite{tzallas2022craas}.

% The  paper is organized as follows. Section~\ref{sec:protocol_def} introduces the protocol variants considered in this work, and Section~\ref{sec:security_proof} presents the finite-key decoy-state security analysis. Section~\ref{sec:channel} establishes the unified free-space propagation and noise model. Section~\ref{sec:params} summarizes the simulation assumptions, and Section~\ref{sec:performance} presents the resulting performance maps and annual key-volume forecasts. The appendices then collect the active-protocol derivations (Appendix~\ref{sec:protocol_appendix}), the asymmetric and symmetric passive-receiver analyses (Appendices~\ref{sec:bb84_passive_asym} and~\ref{sec:detection_appendix}), the statistical bounds (Appendix~\ref{sec:bounds}), the detailed channel model (Appendix~\ref{sec:channel_appendix}), and the expanded simulation set (Appendix~\ref{sec:appendix_plots}).

The paper is organized as follows. Section~\ref{sec:protocol_def} introduces the decoy-state QKD protocols considered in this work, covering the most relevant receiver architectures. Section~\ref{sec:security_proof} presents the finite-key decoy-state security analysis, while Section~\ref{sec:channel}  establishes a unified model for free-space propagation and noise. Section~\ref{sec:feas} details the simulation assumptions and presents the resulting performance maps and annual key-volume forecasts. Several appendices provide the full calculations and derivations necessary to reproduce the results.

%% file: journal/sections/main_protocol_definition.tex
\section{Protocol definition}
\label{sec:protocol_def}

We consider a variable-length decoy-state BB84 protocol with phase-randomized weak coherent pulses (PRWCPs). In each transmission round $r$, Alice selects a basis $a_r \in \{Z, X\}$ with probability $p^A_{a_r}$, a bit value $y_r \in \{0,1\}$ uniformly at random, and a signal intensity $k_r \in \mathcal{K}:= \{\mu_0, \dots, \mu_J\}$ with probability $p_{k_r}$. These states are sent through the quantum channel and measured by Bob in a basis $b_r \in \{Z, X\}$ with a quantum receiver. We consider three main receiver architectures for Bob: asymmetric active, asymmetric passive, and symmetric passive.

Following the quantum communication phase, Alice and Bob perform sifting, partitioning the detected events by basis and intensity. Then, they perform parameter estimation, calculate the variable-length key, and execute error correction, error verification, and privacy amplification to distill the final secure key of the calculated length. For a detailed definition of the protocol steps, we refer the reader to Appendix~\ref{sec:active_description} and Appendix~\ref{sec:asym_passive_description} depending on Bob's receiver architecture.

%% file: journal/sections/main_security_proof.tex
\section{Security proof}
\label{sec:security_proof}

To distill a secure key, we apply a variable-length decoy-state analysis that combines the active and passive finite-key treatments of Refs.~\cite{tupkary2025imperfect,wang2025phase} with tight statistical bounds to extend the achievable positive-key-rate regime. For a given primary key basis, a lower bound on the secret-key length that can be extracted is given by
\begin{equation}
    \label{eq:main_keylength}
    \ell =
    \left\lfloor n_{1}^{\mathrm{L}}\left[1 - h\!\left(\phi_1^{\mathrm{U}}\right)\right] - \lambda_{\mathrm{EC}} - \log_2\!\left(\frac{1}{2\epsilon_{\mathrm{PA}}^2\epsilon_{\mathrm{cor}}}\right) \right\rfloor,
\end{equation}
where $n_1^{\mathrm{L}}$ is a lower bound on the key basis detections originating from single-photon rounds, the term $\phi_1^{\mathrm{U}}$ represents an upper bound on the single-photon phase-error rate within these rounds and $h(x)= -x\log_2(x) - (1-x)\log_2(1-x)$ is the Shannon binary entropy function. The parameter $\lambda_{\mathrm{EC}}$ denotes the information leaked during error correction, $\epsilon_{\mathrm{cor}}$ denotes the correctness parameter of the protocol, and $\epsilon_{\mathrm{PA}}$ reflects the secrecy penalty associated with privacy amplification. The overall security parameter is $\epsilon_{\mathrm{tot}}$, with $\epsilon_{\mathrm{tot}}=2\sqrt{\epsilon_{\mathrm{PE}}}+\epsilon_{\mathrm{PA}}+\epsilon_{\mathrm{cor}}$.

The single-photon detection counts and the corresponding phase errors are bounded from the measured count statistics across the chosen decoy intensities $\mu_j$. To do so, we adopt the standard counterfactual decoy-state picture in which Alice is viewed as emitting photon-number states and the intensity labels are assigned a posteriori~\cite{lim2014concise}. This allows us to relate the observed intensity-resolved counts to the underlying photon-number contributions, which we estimate via linear programming~\cite{curty2014finite,zhang2017improved,attema2021optimizing}.
As standard, the core linear constraints relate the intensity-resolved measurement counts to the unknown photon-number counts:
\begin{equation}
    \label{eq:exp_detection_counts}
    \mathbb{E}[n_{\mu_j,\mathcal{B}}] = \sum_{i=0}^{\infty} p_{(\mu_j|i)} \, n_{i,\mathcal{B}},
\end{equation}
where  $\mathbb{E}[\cdot]$ denotes the expectation, $n_{\mu_j,\mathcal{B}}$ is the observed number of detection events in which both Alice and Bob select the basis $\mathcal{B} \in \{Z,X\}$ for pulses with intensity $\mu_j$, $p_{(\mu_j|i)}$ is the conditional probability that an $i$-photon emission is labeled by intensity $\mu_j$ in the counterfactual picture, and $n_{i,\mathcal{B}}$ is the corresponding basis-resolved detection count contributed by the $i$-photon component. Eq.~\eqref{eq:exp_detection_counts} is supplemented by physical and statistical constraints, including photon-number consistency, source statistics, and finite-size confidence intervals. Together, they provide a lower bound $n_{1}^{\mathrm{L}}$. Similar relations can be used to obtain $\phi_{1}^{\mathrm{U}}$.

Finite-size fluctuations are incorporated with sharp concentration inequalities: we bound the intensity count deviations using exact Poisson-Binomial bounds~\cite{bancal2022simple}, and the underlying photon-number events using exact Binomial inversions. In the asymmetric passive architecture, these Binomial bounds also control the vacuum and multi-photon leakage in the key-generation and test data~\cite{wang2025phase}.
Estimation of the phase-error term $\phi_1^{\mathrm{U}}$ relies on the complementary-basis single-photon statistics. For the random sampling bounds required to estimate $\phi_1^{\mathrm{U}}$, we resort to exact hypergeometric confidence intervals based on the Clopper-Pearson method~\cite{mannalath2025sharp,wright1991exact}.
In the asymmetric active and symmetric passive schemes, the standard random sampling argument between the $X$ and $Z$ basis single-photon counts applies directly. In the asymmetric passive scheme, because the effective basis assignment depends on Bob's received photon number, the hypergeometric sampling argument is applied only within the virtual single-photon subspace on Bob's side~\cite{wang2025phase}.

Comprehensive derivations of the exact variable-length bounds, LP constraints, and statistical bounds are detailed in Appendices~\ref{sec:protocol_appendix}, \ref{sec:bb84_passive_asym} and~\ref{sec:bounds}.

%% file: journal/sections/main_channel_model.tex
\section{Channel model}
\label{sec:channel}

Assessing the feasibility of a GEO satellite-to-ground QKD link requires a faithful channel model that captures the main physical mechanisms governing propagation loss and background noise.
This is provided in Appendix~\ref{sec:channel_appendix}. Here, we summarize the principal modeling assumptions.

In particular, we consider a spherical-Earth model and disregard small corrections associated with Earth oblateness and atmospheric refraction, which are negligible at this level. Signal attenuation is built from diffraction, finite receiver aperture, pointing jitter, coupling losses, and atmospheric extinction as well as other system losses. Specifically, diffraction is modeled assuming Gaussian-beam propagation and taking into account the transmitter aperture, launch waist, wavelength, and beam-quality factor~\cite{siegman1986lasers,siegman1993defining}. Turbulence is described with generalized Hufnagel--Valley profiles \cite{hufnagel1964modulation,valley1980isoplanatic,hardy1998adaptive,pugh2020adaptive}. Although this effect only weakly broadens a GEO downlink beam because most of the path is in vacuum, it remains important for wavefront quality and efficient collection into the receiver, especially for single-mode-fiber operation. We model adaptive-optics correction through its improvement of the effective coherence length and Strehl ratio \cite{lanning2021quantum}. Atmospheric absorption and scattering are modeled with \texttt{libRadtran} using the 1976 U.S. Standard Atmosphere together with Shettle--Fenn aerosol models \cite{emde2016libradtran,mayer2005libradtran,united1976us,shettle1979models,shettle1990models}. The particular operating scenarios and numerical values used in the simulations are summarized in Section~\ref{sec:params} and Table~\ref{tab:sim_params2}.

Cloud losses are incorporated using the location-specific CRAAS climatology derived from the CLAAS-2.1 record over Europe. This record provides multi-year mean occurrence fractions for cloud optical thickness (COT) categories on a $1^\circ \times 1^\circ$ grid with 15-minute sampling \cite{tzallas2022craas,EUMETSAT93:online}. In our analysis we group conditions into clear-sky, thin-cloud, and thick-cloud transmission regimes, assign representative COT values, and pass the data to \texttt{libRadtran} to compute the additional zenith-angle-dependent attenuation. The resulting key rates are then weighted by the local occurrence fractions. Higher-COT categories are omitted because their attenuation is too large to support useful GEO-QKD performance. We refer the reader to Appendix~\ref{sec:clouds} for further details.

With these definitions, the overall system transmission can be written as
\begin{equation}
    \eta_{\mathrm{sys}}=\eta_{\mathrm{geo}}\eta_{\mathrm{p}}\eta_{\mathrm{cpl}}\eta_{\mathrm{atm}}\eta_{\mathrm{R}}\eta_{\mathrm{D}},
\end{equation}
where the factors describe geometric collection, pointing, receiver coupling, atmospheric transmission, receiver optics/filter transmission, and detector efficiency, respectively.

Background noise is a key part of the channel model. The mean number of background photons collected within the receiver acceptance window is denoted by $\bar{n}_{\mathrm{B}}$. This quantity is determined by the sky spectral radiance together with the effective geometric-spectral-temporal acceptance factor $\Gamma_{\mathrm{R}}$ (defined explicitly in Appendix~\ref{sec:background_noise}). This background level is mapped to the per-detector per-gate noise-click probability $p_{\mathrm{noise}}$ used in the finite-key analysis. For daytime conditions, we use \texttt{libRadtran} and consider multiple noise scenarios defined by the position of the Sun, while nighttime sky radiance is obtained from the ESO SkyCalc implementation of the Cerro Paranal Advanced Sky Model \cite{ESO_SkyCalc_2025,noll2013cerro,jones2013advanced}. This background model converts the observing geometry, illumination conditions, and receiver acceptance into the mean background-photon level that enters the finite-key performance analysis.

In Section~\ref{sec:feas} we use these models to evaluate the resulting GEO-QKD feasibility across the operating scenarios considered.

%% file: journal/sections/main_plots.tex
\section{Feasibility evaluation}
\label{sec:feas}

\subsection{Simulation assumptions and operating scenarios}
\label{sec:params}

Here we present the simulation settings for performance maps, with Table~\ref{tab:sim_params2} collecting the shared baseline values and parameter sweeps. Unless a parameter is explicitly swept or otherwise specified for a given plot, we use the boldfaced baseline entry from Table~\ref{tab:sim_params2}.

\subsubsection{Geometry and protocol scale}
We consider a GEO satellite at altitude $h_{\mathrm{sat}}=35{,}786$ km and longitude $5^\circ$E, viewed over zenith angles $\theta\in[0^\circ,80^\circ]$, spanning the European area.

The QKD protocol is evaluated in the finite-key regime with source rates $f\in\{10^8,10^9\}$ Hz,  $N\in\{10^{11},10^{12}\}$ transmitted pulses, a total security parameter $\epsilon_{\mathrm{tot}}=10^{-8}$, and an error-correction efficiency factor of $f_{\mathrm{EC}}=1.16$~\cite{sidhu2022finite}.

\subsubsection{Optical hardware and channel settings}
The optical design space covers transmitter apertures $a_T\in[0.2,1.0]$ m and receiver apertures $a_R\in[0.4,2.0]$ m, with launch waist set to $\omega_0=a_T/4$ and a beam-quality factor $M^2=1.2$, consistent with near-diffraction-limited commercial sources summarized in Table~\ref{tab:M2}. Pointing jitter is examined over representative low, moderate, and high cases of $0.5$, $1.0$, and $2.0\,\mu\mathrm{rad}$. We compare no-AO and finite-bandwidth AO configurations~\cite{lanning2021quantum}, with closed-loop bandwidths $f_c\in\{130,200,500\}$ Hz, fixed tip--tilt bandwidth $f_{tc}=60$ Hz, and a turbulent layer extending to an altitude of $h_{\mathrm{atm}}=20$ km.

\subsubsection{Locations, wavelengths, and noise regimes}
We evaluate three representative optical-ground-station (OGS) environments: coastal, urban and rural, with the site-dependent atmospheric and turbulence parameters listed in Table~\ref{tab:location_params}. Across these scenarios we study three signal wavelengths $\{656.448,\,854.445,\,1550.027\}$ nm under both daytime and nighttime operation. Daytime cases cover low, moderate and high solar-background conditions, while nighttime operation uses a full-Moon background model; the corresponding spectral-radiance values are collected in Table~\ref{tab:combined_radiance}. Daytime and nighttime background cases are tied to Figs.~\ref{fig:sky_radiance} and \ref{fig:lunar_radiance}, respectively. Furthermore, the wavelength-dependent transmission and radiance trends are shown in Fig.~\ref{fig:radandtrans}.

\subsubsection{Detector models}
The QKD receiver model includes superconducting nanowire single-photon detectors (SNSPDs) together with wavelength-appropriate avalanche photodiodes (APDs), namely Si APDs in the visible and near infrared and InGaAs/InP APDs at telecom wavelengths. For each detector family we consider two performance tiers, denoted by Spec A and Spec B, with wavelength-dependent efficiency, dark-count rate, and afterpulsing values listed in Table~\ref{tab:detector_params}. Detection is performed with a gate width of $\Delta t=1$ ns together with spectral filters in the 1--30 GHz range, with filter insertion loss included in the receiver optics budget.

\begin{table*}[t]
    \centering
    \caption{List of constants and parameters used in this work. For those quantities that vary across simulations, the default value used, unless otherwise specified, is shown in bold. In the table: SMF, single-mode fiber; FS, free space; FOV, field of view; AO, adaptive optics.}
    \label{tab:sim_params2}
    \small
    \begin{tabularx}{\textwidth}{l c X l}
        \toprule
        \textbf{Parameter}            & \textbf{Symbol}           & \textbf{Value(s)}                                                 & \textbf{Unit}     \\
        \midrule
        \multicolumn{4}{c}{\textit{Physical Constants}}                                                                                                   \\
        \midrule
        Speed of light in vacuum      & $c$                       & $2.998 \times 10^8$                                               & m\,s$^{-1}$       \\
        Reduced Planck constant       & $\hbar$                   & $1.055 \times 10^{-34}$                                           & J\,s              \\
        Earth radius                  & $R_E$                     & 6378                                                              & km                \\
        \midrule
        \multicolumn{4}{c}{\textit{Orbital Geometry}}                                                                                                     \\
        \midrule
        GEO satellite altitude        & $h_{\mathrm{sat}}$        & 35,786                                                            & km                \\
        Satellite longitude           & $\lambda_{\mathrm{sat}}$  & 5$^{\circ}$E                                                      & --                \\
        OGS height above ground level & $h_{0}$                   & 10                                                                & m                 \\
        Viewing zenith angle          & $\theta$                  & [0,80], \textbf{60}                                               & degrees           \\
        % OGS altitude above sea level & $h_{\rm OGS}$ & 0 (Coastal), \textbf{0.2} (Urban), 0.4 (Rural) & km \\
        % Zenith angle & $\theta$ & 0--80, \textbf{45} & degrees \\
        % Slant range (at $\theta=0^{\circ}$) & $\rho$ & $\sim$35,800--42,000 & km \\
        \midrule
        \multicolumn{4}{c}{\textit{Protocol Parameters}}                                                                                                  \\
        \midrule
        Signal wavelength             & $\lambda$                 & 656.448, \textbf{854.445}, 1550.027                               & nm                \\
        % Number of transmission rounds & $R$ & Protocol input & -- \\
        Source repetition rate        & $f$                       & $10^8$, $\mathbf{10^9}$                                           & Hz                \\
        Number of transmission rounds & $N$                       & $10^{11}$, $\mathbf{10^{12}}$                                     & pulses            \\
        Receiver setup                & --                        & Asymmetric active, \textbf{asymmetric passive}, symmetric passive & --                \\
        % Key basis probability & $q_X$ & 0.5 (passive), optimized (active) & -- \\
        % Number of decoy intensities & $J$ & 2 (3-intensity protocol) & -- \\
        % LP truncation photon number & $I$ & 10 & -- \\
        Security parameter            & $\epsilon_{\mathrm{tot}}$ & \textbf{$10^{-8}$}                                                & --                \\
        % PE failure probability & $\epsilon_{\rm PE}$ & $\epsilon_{\rm tot}^2/4$ & -- \\
        % EC failure probability & $\epsilon_{\rm cor}$ & $10^{-3}\epsilon_{\rm tot}$ & -- \\
        % PA failure probability & $\epsilon_{\rm PA}$ & $10^{-3}\epsilon_{\rm tot}$ & -- \\
        Error correction efficiency   & $f_{\mathrm{EC}}$         & 1.16                                                              & --                \\
        Misalignment error rate       & $e_{\mathrm{mis}}$        & \textbf{0.5} (Low), 1.0 (Mod.), 2.0 (High)                        & \%                \\
        \midrule
        \multicolumn{4}{c}{\textit{Transmitter Optics}}                                                                                                   \\
        \midrule
        Transmitter aperture          & $a_T$                     & [0.2,1], \textbf{0.75}                                            & m                 \\
        Beam waist                    & $\omega_0$                & $a_T/4$                                                           & m                 \\
        Beam quality factor           & $M^2$                     & 1.2                                                               & --                \\
        Pointing jitter (angular)     & --                        & \textbf{0.5} (Low), 1.0 (Mod.), 2.0 (High)                        & $\mu\mathrm{rad}$ \\
        \midrule
        \multicolumn{4}{c}{\textit{Atmospheric Channel}}                                                                                                  \\
        \midrule
        Location                      & --                        & Rural, \textbf{Urban}, Coastal                                    & --                \\
        Operation time                & --                        & Day (Low, Moderate, High), \textbf{Night}                         & --                \\
        Atmospheric model             & --                        & U.S. Standard Atmosphere 1976                                     & --                \\
        Top of turbulent layer        & $h_{\mathrm{atm}}$        & 20                                                                & km                \\
        % Surface albedo & -- & 0.2 & -- \\
        % \multicolumn{4}{l}{\quad\textit{Location-dependent parameters:}} \\
        % \quad Visibility (Coastal) & VIS & 5 & km \\
        % \quad Visibility (Urban) & VIS & 10 & km \\
        % \quad Visibility (Rural) & VIS & 23 & km \\
        % \quad Aerosol type & -- & Maritime (Coastal), Urban, Rural & -- \\
        % \multicolumn{4}{l}{\quad\textit{Hufnagel--Valley turbulence coefficients:}} \\
        % \quad Surface layer (Coastal) & $A$ & $5.1 \times 10^{-14}$ & m$^{-2/3}$ \\
        % \quad Surface layer (Urban) & $A$ & $1.7 \times 10^{-14}$ & m$^{-2/3}$ \\
        % \quad Surface layer (Rural) & $A$ & $4.5 \times 10^{-15}$ & m$^{-2/3}$ \\
        % \quad Troposphere (Coastal) & $B$ & $8.1 \times 10^{-16}$ & m$^{-2/3}$ \\
        % \quad Troposphere (Urban) & $B$ & $2.7 \times 10^{-16}$ & m$^{-2/3}$ \\
        % \quad Troposphere (Rural) & $B$ & $9.0 \times 10^{-17}$ & m$^{-2/3}$ \\
        % \quad Tropopause (Coastal) & $C$ & $1.08 \times 10^{-52}$ & m$^{-32/3}$ \\
        % \quad Tropopause (Urban) & $C$ & $3.6 \times 10^{-53}$ & m$^{-32/3}$ \\
        % \quad Tropopause (Rural) & $C$ & $2.0 \times 10^{-53}$ & m$^{-32/3}$ \\
        % \quad Cloud optical thickness & $\tau_{\rm cloud}$ & 0 (Clear), 0.65 (Thin), 2.45 (Thick) & -- \\
        \midrule
        \multicolumn{4}{c}{\textit{Receiver Optics}}                                                                                                      \\
        \midrule
        Receiver aperture             & $a_R$                     & $[0.4,2]$, \textbf{1.5}                                           & m                 \\
        Receiver optics loss          & --                        & 4                                                                 & dB                \\
        Coupling strategy             & --                        & \textbf{SMF}, FS                                                  & --                \\
        SMF intrinsic coupling limit  & $\eta_0$                  & 0.786                                                             & --                \\
        FS intrinsic coupling limit   & $\eta_0$                  & 0.838                                                             & --                \\
        Airy disk FOV half-angle      & $\gamma_{\mathrm{Airy}}$  & $1.22\lambda/a_R$                                                 & rad               \\
        SMF effective FOV             & $\gamma_{\mathrm{SMF}}$   & $2.24\lambda/(\pi a_R)$                                           & rad               \\
        \midrule
        \multicolumn{4}{c}{\textit{Adaptive Optics}}                                                                                                      \\
        \midrule
        AO closed-loop bandwidth      & $f_c$                     & 0 (None), \textbf{130 (Low)}, 200 (Mod.), 500 (High)              & Hz                \\
        Tip-tilt correction bandwidth & $f_{tc}$                  & 60                                                                & Hz                \\
        % OGS wind speed & $v_{\rm OGS}$ & 5 (Rural), 10 (Urban), 25 (Coastal) & km/h \\
        \midrule
        \multicolumn{4}{c}{\textit{Spectral Filtering}}                                                                                                   \\
        \midrule
        Filter bandwidth              & $\Delta\nu$               & \textbf{1} (1550.027 nm), \textbf{10} (656.448/854.445 nm), 30    & GHz               \\
        Filter loss                   & --                        & 1 (included in receiver loss)                                     & dB                \\
        \midrule
        \multicolumn{4}{c}{\textit{Detectors}}                                                                                                            \\
        \midrule
        Detector type                 & --                        & \textbf{SNSPD}, APD                                               & --                \\
        Specification                 & --                        & Spec A, \textbf{Spec B}                                           & --                \\
        % \midrule
        % \multicolumn{4}{c}{\textit{Single-Photon Detectors (SNSPD)}} \\
        % \midrule
        % Detection efficiency (Spec A / B) & $\eta_D$ & 0.85 / 0.95 & -- \\
        % Dark count rate (Spec A / B) & DCR & 10 / 0.1 & Hz \\
        % Afterpulsing probability & $p_{\rm ap}$ & 0 & \% \\
        % \midrule
        % \multicolumn{4}{c}{\textit{Single-Photon Detectors (Si APD, 656/854 nm)}} \\
        % \midrule
        % Detection efficiency (Spec A / B) & $\eta_D$ & 0.50/0.70 (@656), 0.40/0.50 (@854) & -- \\
        % Dark count rate (Spec A / B) & DCR & 100 / 10 & Hz \\
        % Afterpulsing probability (Spec A / B) & $p_{\rm ap}$ & 2.0 / 0.5 & \% \\
        % \midrule
        % \multicolumn{4}{c}{\textit{Single-Photon Detectors (InGaAs APD, 1550 nm)}} \\
        % \midrule
        % Detection efficiency (Spec A / B) & $\eta_D$ & 0.25 / 0.35 & -- \\
        % Dark count rate (Spec A / B) & DCR & 300 / 50 & Hz \\
        % Afterpulsing probability (Spec A / B) & $p_{\rm ap}$ & 3.0 / 1.0 & \% \\
        \midrule
        \multicolumn{4}{c}{\textit{Background Noise}}                                                                                                     \\
        \midrule
        Detection gate width          & $\Delta t$                & 1                                                                 & ns                \\
        \midrule
        % Solar zenith angles & $z_{\rm sun}$ & 0$^{\circ}$ (High), 60$^{\circ}$ (Mod.), 90$^{\circ}$ (Low) & degrees \\
        % Lunar zenith angle (full Moon) & $z_{\rm moon}$ & 30 & degrees \\
        \bottomrule
    \end{tabularx}
\end{table*}

\begin{table*}[t]
    \centering
    \caption{Location parameters for downlink GEO-QKD feasibility analysis. The turbulence parameters $A,B,C$ are defined in Appendix~\ref{sec:turb}.}
    \begin{tabular}{lccc r @{ = } r @{\ } l}
        \hline
        \textbf{Location}        & \textbf{Altitude [km]} & \textbf{Visibility [km]} & \textbf{Ground Wind Speed [m/s]} & \multicolumn{3}{c}{\textbf{Turbulence Parameters}}                                        \\
        \hline
        \multirow{3}{*}{Rural}   & \multirow{3}{*}{0.4}   & \multirow{3}{*}{23}      & \multirow{3}{*}{5}               & $A$                                                & $4.5 \times 10^{-15}$  & m$^{-2/3}$  \\
                                 &                        &                          &                                  & $B$                                                & $9.0 \times 10^{-17}$  & m$^{-2/3}$  \\
                                 &                        &                          &                                  & $C$                                                & $2.0 \times 10^{-53}$  & m$^{-32/3}$ \\
        \hline
        \multirow{3}{*}{Urban}   & \multirow{3}{*}{0.2}   & \multirow{3}{*}{10}      & \multirow{3}{*}{10}              & $A$                                                & $1.7 \times 10^{-14}$  & m$^{-2/3}$  \\
                                 &                        &                          &                                  & $B$                                                & $2.7 \times 10^{-16}$  & m$^{-2/3}$  \\
                                 &                        &                          &                                  & $C$                                                & $3.6 \times 10^{-53}$  & m$^{-32/3}$ \\
        \hline
        \multirow{3}{*}{Coastal} & \multirow{3}{*}{0}     & \multirow{3}{*}{5}       & \multirow{3}{*}{25}              & $A$                                                & $5.1\times 10^{-14}$   & m$^{-2/3}$  \\
                                 &                        &                          &                                  & $B$                                                & $8.1\times 10^{-16}$   & m$^{-2/3}$  \\
                                 &                        &                          &                                  & $C$                                                & $1.08 \times 10^{-52}$ & m$^{-32/3}$ \\
        \hline
    \end{tabular}
    \label{tab:location_params}
\end{table*}

\begin{table*}[t]
    \centering
    \caption{Daytime (solar) and nighttime (lunar) spectral radiance [mW\,m$^{-2}$\,nm$^{-1}$\,sr$^{-1}$] for each location and wavelength, evaluated at various solar ($0^{\circ}$, $60^{\circ}$, $90^{\circ}$) and lunar ($30^{\circ}$) zenith angles. See Appendix~\ref{sec:background_noise} for more details on the background model. }
    \label{tab:combined_radiance}
    \renewcommand{\arraystretch}{1.2}
    \setlength{\tabcolsep}{4pt}

    \begin{tabular}{|c|c||ccc|c|}
        \hline
        \multirow{3}{*}{\textbf{Location}}        &
        \multirow{3}{*}{\textbf{Wavelength (nm)}} &
        \multicolumn{4}{c|}{\textbf{Radiance [mW\,m$^{-2}$\,nm$^{-1}$\,sr$^{-1}$]}}                                                                \\
        \cline{3-6}
                                                  &          &
        \multicolumn{3}{c|}{\textbf{Solar}}       &
        \textbf{Lunar}                                                                                                                             \\
        \cline{3-6}
                                                  &          & \textbf{0° (High)} & \textbf{60° (Mod.)} & \textbf{90° (Low)} & \textbf{30°}        \\
        \hline
        \multirow{3}{*}{Rural}
                                                  & 656.448  & 18.98              & 11.94               & 0.80               & $4.39\times10^{-4}$ \\
                                                  & 854.445  & 6.40               & 4.05                & 0.42               & $1.91\times10^{-4}$ \\
                                                  & 1550.027 & 2.62               & 1.60                & 0.32               & $4.42\times10^{-5}$ \\
        \hline
        \multirow{3}{*}{Urban}
                                                  & 656.448  & 18.62              & 10.40               & 0.50               & $4.52\times10^{-4}$ \\
                                                  & 854.445  & 7.23               & 4.18                & 0.30               & $2.24\times10^{-4}$ \\
                                                  & 1550.027 & 3.90               & 2.29                & 0.24               & $6.99\times10^{-5}$ \\
        \hline
        \multirow{3}{*}{Coastal}
                                                  & 656.448  & 49.92              & 27.22               & 0.69               & $9.44\times10^{-4}$ \\
                                                  & 854.445  & 21.88              & 12.72               & 0.40               & $5.45\times10^{-4}$ \\
                                                  & 1550.027 & 21.22              & 12.90               & 0.39               & $3.34\times10^{-4}$ \\
        \hline
    \end{tabular}
\end{table*}

\begin{table*}[htbp]
    \centering
    \caption{Simulation parameters for single-photon detectors. The specifications are divided into two tiers: Spec A (Standard/Commercial) and Spec B (High-Performance/Research). Parameters listed are detection efficiency ($\eta_{\mathrm{D}}$), dark count rate (DCR), and afterpulsing probability ($p_{\mathrm{ap}}$). The adopted values are chosen based on the detector literature survey summarized in Appendix~\ref{sec:channel_appendix}, Table~\ref{tab:masteRAT_pd}.}
    \label{tab:detector_params}
    \renewcommand{\arraystretch}{1.3}
    \begin{tabular}{lc|ccc|ccc}
        \hline\hline
        \multirow{2}{*}{\textbf{Detector}} & \multirow{2}{*}{\textbf{$\lambda$ [nm]}} & \multicolumn{3}{c|}{\textbf{Spec A}} & \multicolumn{3}{c}{\textbf{Spec B}}                                                                                                                          \\ \cline{3-8}
                                           &                                          & $\boldsymbol{\eta_{\mathrm{D}}}$     & \textbf{DCR [Hz]}                   & $\boldsymbol{p_{\mathrm{ap}}}$ & $\boldsymbol{\eta_{\mathrm{D}}}$ & \textbf{DCR [Hz]} & $\boldsymbol{p_{\mathrm{ap}}}$ \\ \hline
        \multirow{3}{*}{SNSPD}             & 656                                      & 0.85                                 & 10                                  & 0.0                            & 0.95                             & 0.1               & 0.0                            \\
                                           & 854                                      & 0.85                                 & 10                                  & 0.0                            & 0.95                             & 0.1               & 0.0                            \\
                                           & 1550                                     & 0.85                                 & 10                                  & 0.0                            & 0.95                             & 0.1               & 0.0                            \\ \hline
        \multirow{2}{*}{Si APD}            & 656                                      & 0.50                                 & 100                                 & 0.02                           & 0.70                             & 10                & 0.005                          \\
                                           & 854                                      & 0.40                                 & 100                                 & 0.02                           & 0.50                             & 10                & 0.005                          \\ \hline
        InGaAs/InP APD                     & 1550                                     & 0.25                                 & 300                                 & 0.03                           & 0.35                             & 50                & 0.01                           \\ \hline\hline
    \end{tabular}
\end{table*}

\subsection{Simulations}
\label{sec:performance}
Using the source, detector, and channel models defined above, we now translate the simulation parameter set into a compact sequence of feasibility maps. We start by plotting the secret-key rate in the loss-noise plane, then connect the abstract coordinates to sky background and aperture-dependent system loss, and finally project the same constraints onto zenith angle, geography, and long-distance comparisons. At each point of every sweep, the corresponding finite-size secret-key rate is optimized over the relevant protocol parameters. Numerically, each point is obtained by grid search combined with numerical maximization over the admissible parameter region, with neighboring solutions reused as warm starts where appropriate. To keep the main text concise, we show only a representative subset of the full multidimensional parameter space. Expanded sweeps and auxiliary figures are deferred to Appendix~\ref{sec:appendix_plots}.\\

\noindent\textit{Protocol-level feasibility map}

Figure~\ref{fig:loss_noise_rate_grid} provides the protocol-level feasibility map that underlies the rest of the section.
For this asymmetric active BB84 receiver reference plot, the optimized secret-key rate is shown as a function of the two effective channel coordinates that matter most for the finite-key analysis, namely the total system loss and the per-detector per-gate noise-click probability $p_{\mathrm{noise}}$, while the four panels span representative values of misalignment and afterpulsing. The figure can be read as a compact tolerance chart: for any given background level it shows how much loss can be tolerated, and for any given loss it shows how much the background noise must be suppressed.

The contour structure shows that the positive-key region retreats diagonally toward lower loss and lower noise as misalignment and afterpulsing increase, and the high-rate corner shrinks markedly in the most imperfect case. Misalignment and afterpulsing reduce the achievable key rate and markedly remove operating margin in the loss-noise regime. The same map clarifies the practical difference between night and day operation. Nighttime links occupy lower values of $p_{\mathrm{noise}}$, allowing positive key rates at substantially larger losses, whereas daytime links are displaced upward by the higher sky background and therefore require a correspondingly stronger optical channel through larger apertures, better pointing, or stronger turbulence correction. All later figures can be read as physical mechanisms that move the operating point within this plane or determine whether a given region of the plane is accessible at all.\\

\begin{figure}[htbp]
    \includegraphics[width=\linewidth]{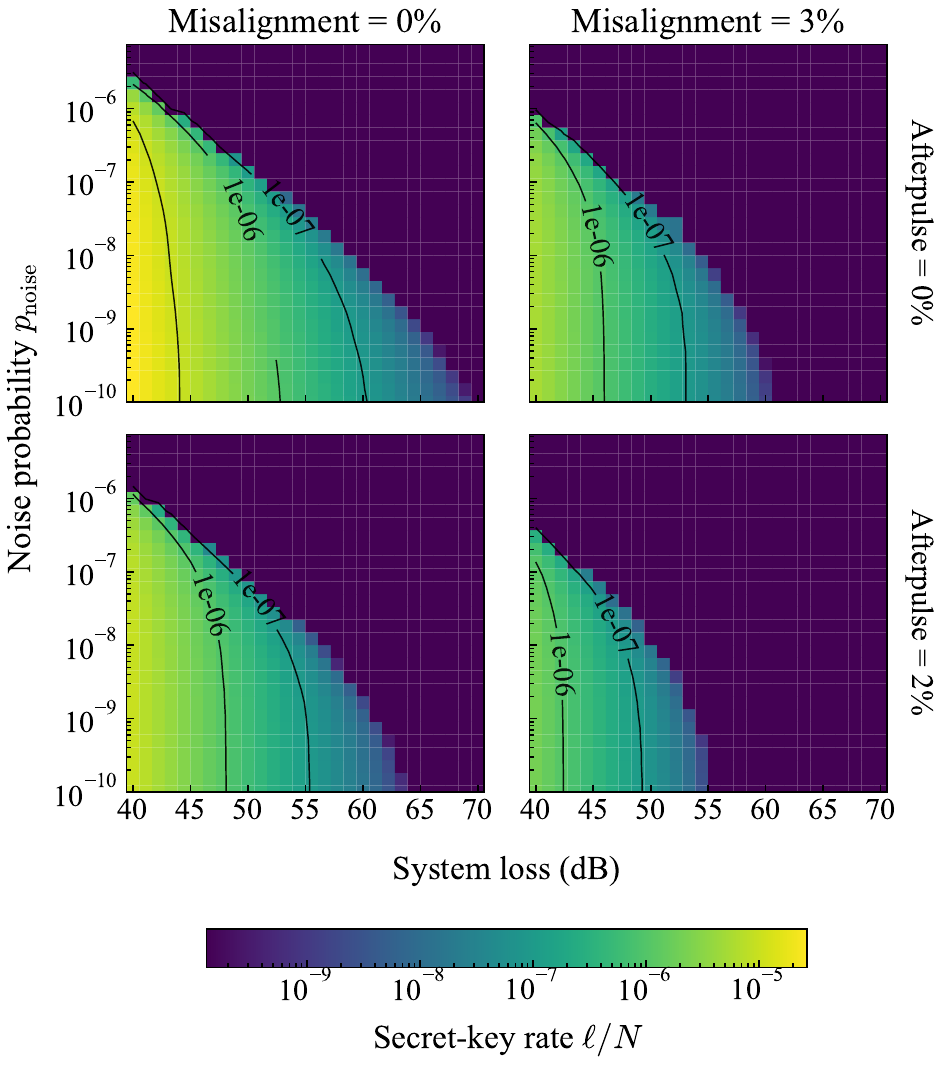}
    \caption{Finite-size secret-key rate $\ell/N$ for an asymmetric active BB84 receiver as a function of total system loss (x axis) and noise-click probability $p_{\mathrm{noise}}$ (y axis). The color map and black contour lines show the finite-size secret-key rate. Columns correspond to misalignment values of $0\%$ and $3\%$, while rows correspond to afterpulsing probabilities of $0\%$ and $2\%$.}
    \label{fig:loss_noise_rate_grid}
\end{figure}

\noindent\textit{Sky radiance, aperture loss and operating points}

Having established the protocol-level noise tolerance, Fig.~\ref{fig:sky_radiance_contours} translates that tolerance into an observational constraint set by diffuse sky radiance in a solar-position map.
The ground station is at the center and the concentric circles indicate evenly spaced zenith angles separated by 30 degrees, with the satellite line of sight fixed at 60 degrees toward the south. Given this configuration, each point in the sky map represents the radiance received by the ground station along the fixed satellite line of sight when the Sun is located at that sky position.
The maps are evaluated for a representative GEO downlink geometry with the satellite line of sight fixed at a zenith angle of $60^\circ$ toward the south. The color scale shows the spectral sky radiance $\tilde{H}_{\lambda}^{\mathrm{sky}}$ across the visible sky for rural and urban scenarios. Through the background model summarized in Appendix~\ref{sec:background_noise}, this radiance is converted to a photon-flux density $H_{\lambda}^{\mathrm{sky}}$, which sets the mean collected background photons $\bar{n}_{\mathrm{B}}=H_{\lambda}^{\mathrm{sky}}\Gamma_{\mathrm{R}}$, with $\Gamma_{\mathrm{R}}$ being the aforementioned effective geometric-spectral-temporal acceptance factor. For the asymmetric active receiver, these parameters determine the per-detector per-gate noise-click probability through $p_{\mathrm{noise}}=1-\exp(-\eta_{\mathrm{R}}\eta_{\mathrm{D}}\bar{n}_{\mathrm{B}}/2)(1-p_{\mathrm{dark}})$, where $p_{\mathrm{dark}}$ is the intrinsic detector dark-count probability. The overlaid contours mark the corresponding critical radiance threshold $\tilde{H}_{\mathrm{crit}}$, namely the sky radiance for which the background alone reaches a prescribed noise level. In the context of Fig.~\ref{fig:loss_noise_rate_grid}, this figure controls how far upward a daytime operating point is pushed along the noise axis.

We define such a critical condition through a background-click probability of $p_{\mathrm{noise}}=10^{-6}$, which is already sufficient to preclude secure-key generation for system losses of about 45 dB or higher. To obtain a simple threshold estimate, we use a diffraction-limited approximation for the receiver field of view, $\Omega_{\mathrm{FOV}}\approx \pi(\lambda/\pi a_R)^2$. With this approximation, for a chosen wavelength and filter bandwidth, the contour $\tilde{H}_{\mathrm{crit}}$ can be interpreted as the maximum tolerable sky radiance before the background alone drives the link beyond the feasible region. The contour lines therefore act as exclusion boundaries: if the background is brighter than the contour associated with a given filter bandwidth, that filter is too broad to support secret-key generation. Narrower filters move these contours inward and thereby enlarge the accessible operating window outside the contour.

\begin{figure}[htbp]
    \includegraphics[width=\linewidth]{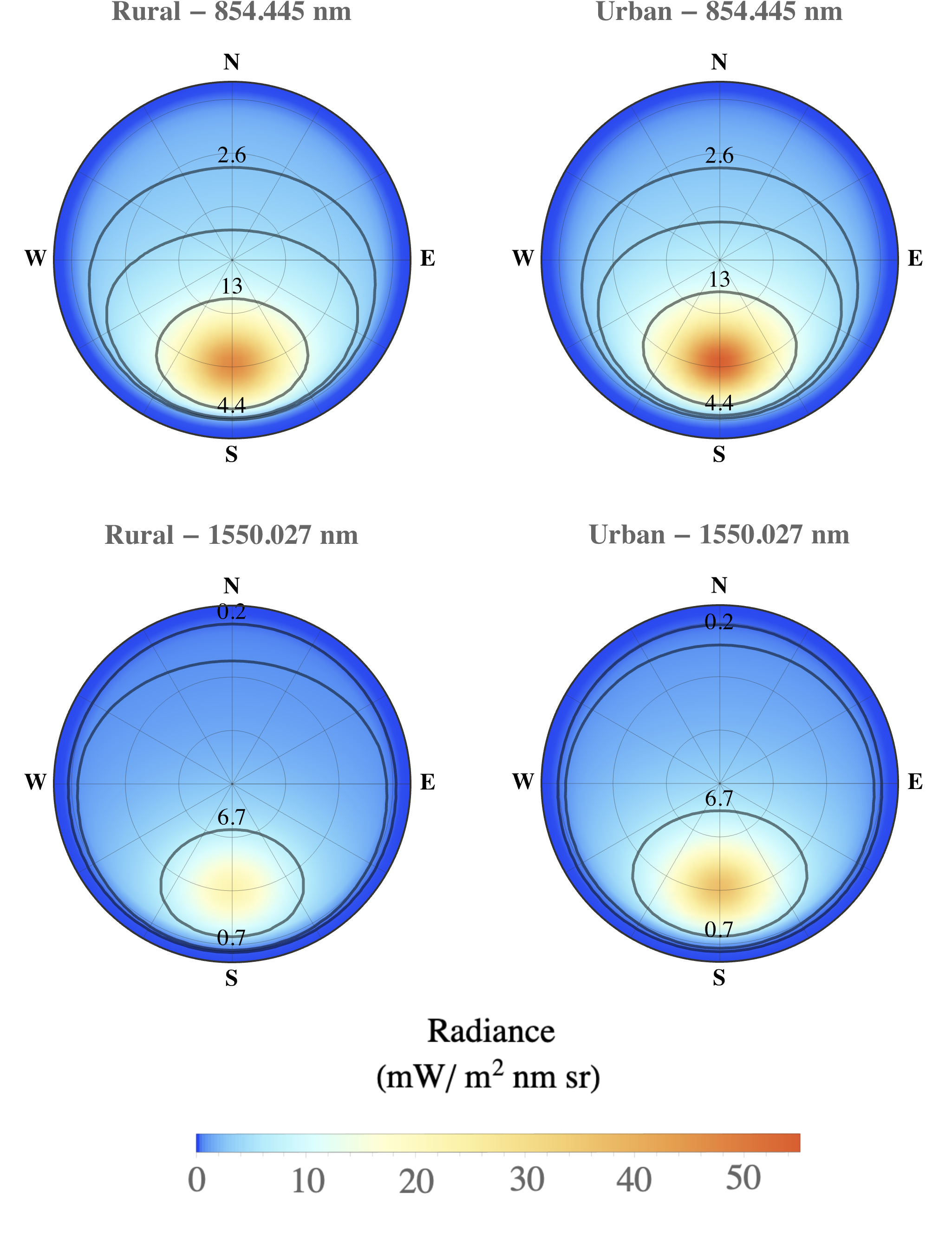}
    \caption{Sky-radiance solar position maps ($\tilde{H}_\lambda^{\mathrm{sky}}$) for rural and urban environments (columns) at the two signal wavelengths shown in the rows, $\lambda=854.445$ nm and $\lambda=1550.027$ nm, for a representative GEO line of sight at zenith angle $60^\circ$ toward the south. The color scale indicates radiance in $\mathrm{mW/ m^{2}\,nm^{1}\,sr^{1}}$. Superimposed contours represent the critical radiance thresholds ($\tilde{H}_{\mathrm{crit}}$) corresponding to a  noise-click probability of $p_{\mathrm{noise}}=10^{-6}$ for representative spectral filter bandwidths. For 854\,nm, the contours correspond to bandwidths of 10, 30, and 50\,GHz (inner to outer). For 1550\,nm, they correspond to 1, 10, and 30\,GHz (inner to outer).}
    \label{fig:sky_radiance_contours}
\end{figure}

Figure~\ref{fig:aperture_loss_contour} isolates the receiver and transmitter-aperture dependence of the deterministic system loss on a fixed urban baseline slice ($\theta=60^\circ$, low pointing jitter, and single-mode fiber (SMF) coupling) for the default SNSPD Spec B receiver. It shows that increasing receiver aperture only becomes effective once the turbulence penalty is corrected with AO. In the no-AO column, the contours are nearly horizontal, so the loss is governed mainly by the transmitter aperture $a_T$: increasing $a_T$ still reduces diffraction loss, but increasing the receiver aperture $a_R$ produces little net change.
This weak receiver-aperture dependence is a near-cancellation built into the model, not an absence of geometric gain. From Eqs.~\eqref{eq:eta_smf} and \eqref{eq:strehl}, the no-AO SMF coupling is $\eta_{\mathrm{cpl}}^{\mathrm{SMF}}=\eta_0[1+(a_R/r_0)^{5/3}]^{-6/5}$, with $r_0$ fixed by the atmosphere for a chosen wavelength and zenith angle. Increasing $a_R$ improves geometric collection while simultaneously worsening the Strehl-limited coupling into the fiber. Over the plotted 854~nm slice, for example, increasing $a_R$ from 0.4 to 2.0~m at fixed $a_T\approx0.75$~m reduces the geometric loss by about 14~dB, but increases the coupling loss by about 13.7~dB, leaving the total loss almost unchanged.
With AO, this cancellation is largely removed. Equation~\eqref{eq:r0cl} makes the corrected coherence length grow approximately in proportion to $a_R$, so the aperture-to-coherence ratio entering the Strehl factor stays nearly constant. In this scenario, the coupling penalty no longer grows markedly with receiver size, and the geometric benefit of a larger receiver reappears as a genuine reduction in total loss. In the loss-noise picture, Fig.~\ref{fig:aperture_loss_contour} mainly explains horizontal motion: aperture and AO determine how far left a configuration can be pushed along the system-loss axis.

Viewed in relation to the feasibility map of Fig.~\ref{fig:loss_noise_rate_grid}, this AO dependence has a direct protocol consequence. In the no-AO column, much of the aperture space remains at losses that are prohibitive once realistic daytime background is included. Strong AO shifts the same configurations into a substantially lower-loss regime, making daytime operation plausible over a much larger part of the design space. Figure~\ref{fig:aperture_loss_contour} therefore highlights AO as a prerequisite for practical GEO-QKD in moderately noisy daytime conditions. The remaining difference between rows is then mainly diffraction-driven, so the 1550~nm case retains a stronger dependence on transmitter aperture than the 854~nm case.

\begin{figure}[htbp]
    \includegraphics[width=\linewidth]{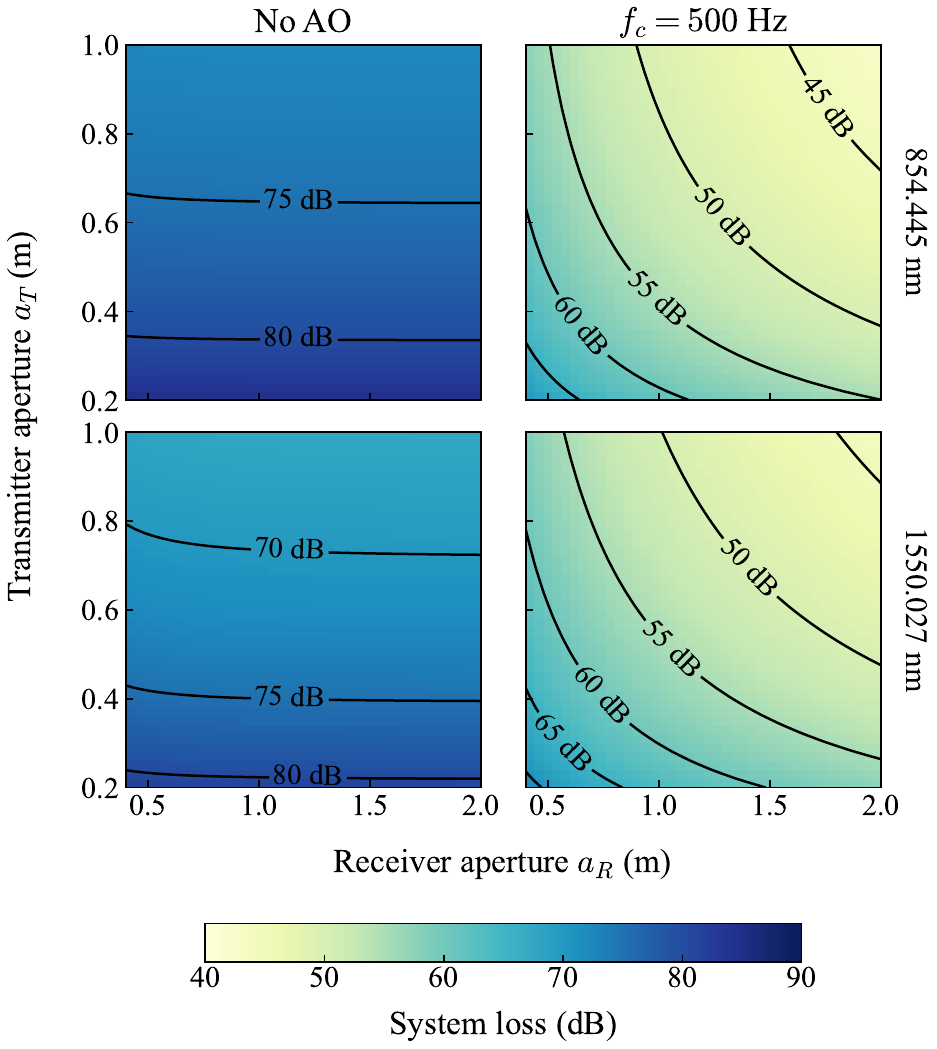}
    \caption{System loss (dB) as a function of receiver aperture $a_R$ (x axis) and transmitter aperture $a_T$ (y axis) on the fixed urban baseline slice with $\theta=60^\circ$, low pointing jitter, SMF coupling, and the default SNSPD Spec B receiver. The heat map and contour labels show iso-loss levels. Columns compare no AO with strong AO correction ($f_c=500$ Hz), while rows correspond to $\lambda=854.445$ nm and $\lambda=1550.027$ nm.}
    \label{fig:aperture_loss_contour}
\end{figure}

Figure~\ref{fig:loss_noise_context_scatter} closes the loop by placing representative asymmetric active-receiver link-model operating points back onto the same system-loss-noise plane. The colored background and black contours show the optimized finite-size secret-key rate on the zero-afterpulsing, zero-misalignment slice, while the overlaid markers place representative operating points on that map for the two wavelengths considered here, the two detector classes in Spec B, two zenith angles, rural and urban sites, and the no-AO and high-AO cases under nighttime and moderate daytime noise scenarios. This figure makes the complementary role of the preceding plots explicit: Fig.~\ref{fig:sky_radiance_contours} sets the vertical displacement through background noise, whereas Fig.~\ref{fig:aperture_loss_contour} sets the horizontal displacement through system loss.

The strongest separation is by background environment and wavefront correction: nighttime points sit much lower on the $p_{\mathrm{noise}}$ axis, while high AO shifts the operating points substantially toward lower effective loss, which is why reliable daytime operation is concentrated in the AO-corrected subsets. Figure~\ref{fig:loss_noise_context_scatter} also establishes the detector hierarchy used throughout the remaining comparisons: under matched channel conditions, SNSPD operating points sit deeper inside the positive-rate region than APD operating points. By contrast, the shifts associated with going from $45^\circ$ to $60^\circ$ zenith angles or from rural to urban conditions are more modest displacements within the same overall bands. The wavelength dependence is likewise not one-dimensional: 854~nm more often lands in the lower-noise positive-key-rate region than 1550~nm in this comparison, but the substantial overlap shows that wavelength choice cannot be inferred from a simple left-right or up-down displacement alone.\\

\begin{figure}[htbp]
    \includegraphics[width=\linewidth]{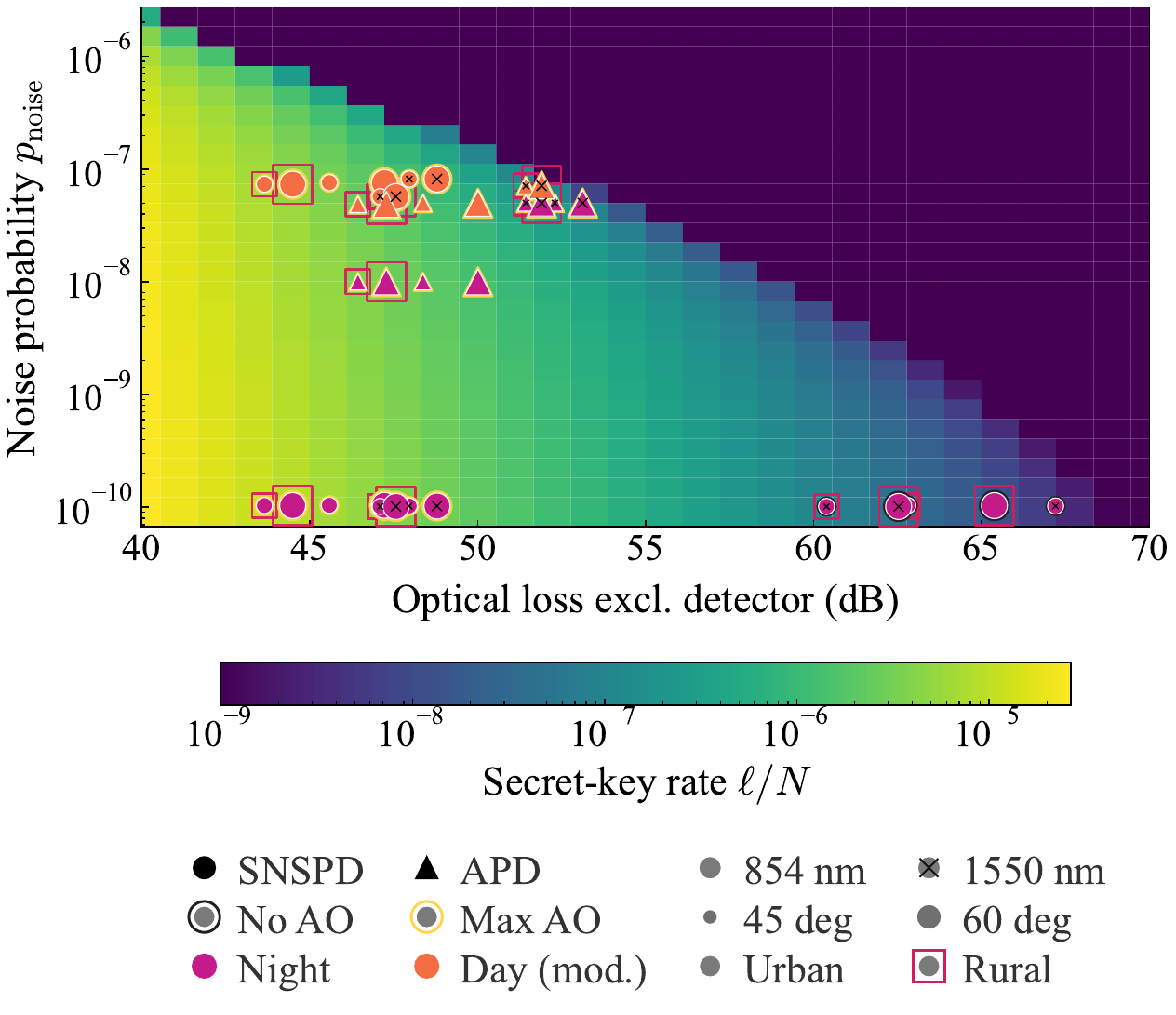}
    \caption{Finite-size secret-key rate map $\ell/N$ for the asymmetric active receiver as a function of total system loss (x axis) and noise-click probability $p_{\mathrm{noise}}$ (y axis) on the zero-afterpulsing, zero-misalignment slice, with representative positive-rate operating points from the link model overlaid. The operating points compare $\lambda=854.445$ nm and $\lambda=1550.027$ nm, zenith angles $45^\circ$ and $60^\circ$, rural and urban sites, no AO and high AO, and Spec B SNSPD/APD receivers under nighttime and moderate-noise daytime conditions.}
    \label{fig:loss_noise_context_scatter}
\end{figure}

\noindent\textit{Aperture, zenith-angle, and finite-key sweeps}

From this point onward, the finite-key rates are computed for the asymmetric passive BB84 receiver described in Appendix~\ref{sec:bb84_passive_asym}. Passive basis selection is generally preferred over active basis switching for its easier implementation, and the asymmetric design provides higher secret-key rates than the symmetric one in the high-loss regime.

Figure~\ref{fig:aperture_rate_contour} illustrates how the finite-size secret-key rate varies across the transmitter-receiver aperture plane for the default urban geometry ($\theta=60^\circ$), low AO correction ($f_c=130$ Hz), and the default SNSPD Spec B receiver.
% When compared  to Fig.~\ref{fig:aperture_loss_contour}, it shows the same aperture trade space after the finite-key rate optimization and the \blue{background model have been restored}. 
The night panels are broadly favorable, but in daytime the zero-rate wedge pushes inward from small apertures and forces operation toward the upper-right corner of the plots. The comparison between rows shows that, under a shared moderate-noise assumption, 854~nm preserves a visibly larger high-rate region than 1550~nm, consistent with the benefit of operating inside a Fraunhofer dip.

\begin{figure}[htbp]
    \includegraphics[width=\linewidth]{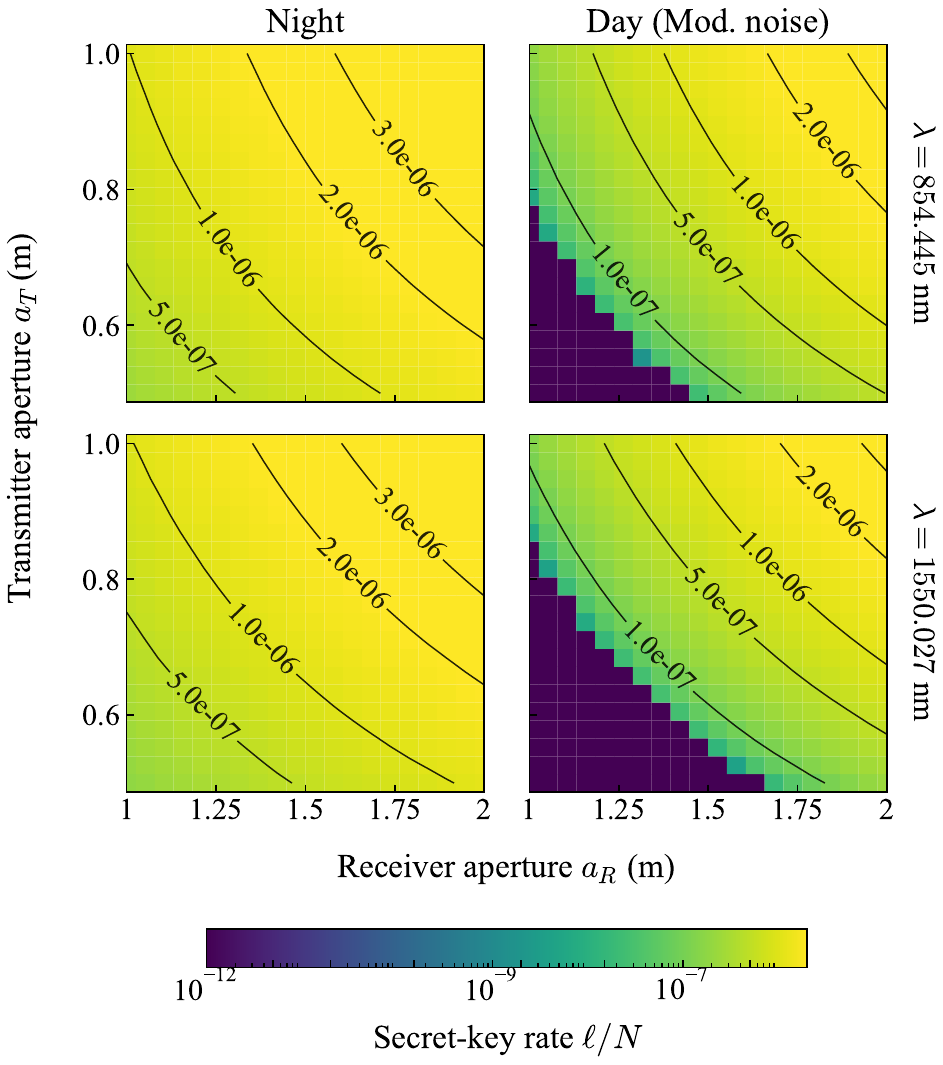}
    \caption{Finite-size secret-key rate $\ell/N$ as a function of receiver aperture $a_R$ (x axis) and transmitter aperture $a_T$ (y axis) for the default urban geometry with $\theta=60^\circ$, low AO correction ($f_c=130$ Hz), and an SNSPD Spec B receiver. The color scale and black contour lines show the finite-size secret-key rate. Columns compare nighttime and daytime operation under moderate background noise, while rows correspond to $\lambda=854.445$ nm and $\lambda=1550.027$ nm.}
    \label{fig:aperture_rate_contour}
\end{figure}

Figure~\ref{fig:zenith_aperture_loss} sweeps the zenith angle at fixed transmitter aperture $a_T=0.75$ m on the same urban low-AO baseline. In effect, it asks how far the aperture compensation identified in Fig.~\ref{fig:aperture_rate_contour} can be pushed before the longer atmospheric path and detector limitations overwhelm the link.
To remain on the same loss contour, the receiver aperture must increase with zenith angle, and this upward trend is visibly steeper at 854~nm than at 1550~nm. This indicates that the shorter-wavelength link is more sensitive to the longer atmospheric path at large zenith angles, so the loss penalty grows faster as the line of sight moves away from zenith.
The detector-threshold curves sharpen the hierarchy visible in Fig.~\ref{fig:loss_noise_context_scatter}. The Spec A APDs are close to unusable over much of the plotted space, and at 1550~nm the APD boundaries retreat so strongly that useful operation is essentially driven by the SNSPD cases.

\begin{figure}[htbp]
    \includegraphics[width=\linewidth]{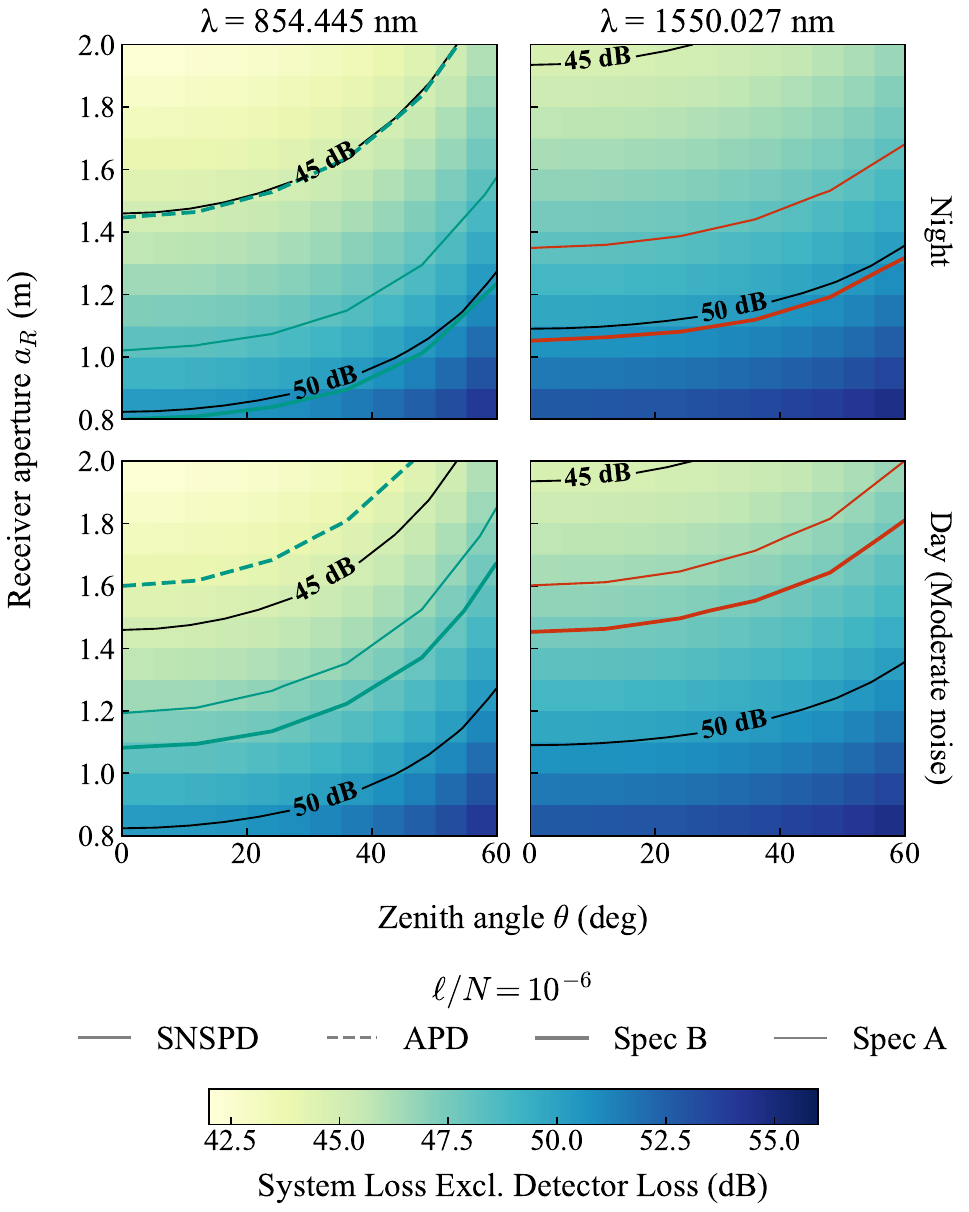}
    \caption{Combined loss and performance map as a function of zenith angle (x axis) and receiver aperture $a_R$ (y axis) at a fixed transmitter aperture $a_T=0.75$ m for the urban low-AO baseline with low pointing jitter and SMF coupling. The background heat map and black contours show system loss excluding detector loss. Colored boundary curves mark the $\ell/N=10^{-6}$ threshold for the different detector models: solid lines denote SNSPDs, dashed lines denote APDs, and thicker lines denote Spec B. Columns correspond to $\lambda=854.445$ nm and $\lambda=1550.027$ nm, and rows compare night with daytime operation at moderate background noise.}
    \label{fig:zenith_aperture_loss}
\end{figure}

Figure~\ref{fig:zenith_rate_finite_key}, on the other hand, shows the corresponding nighttime finite-size secret-key rate behavior once the receiver aperture and site quality are varied directly, still for the default transmitter aperture and low AO correction. Whereas Fig.~\ref{fig:zenith_aperture_loss} emphasizes the threshold geometry, this figure shows the full secret-key-rate decay inside the feasible region. Consistent with Fig.~\ref{fig:loss_noise_context_scatter}, the 1550~nm APD curves become useful for QKD only in the most favorable large-aperture cases. The rural-to-urban transition also affects the shorter wavelengths more strongly than 1550~nm, which is consistent with the stronger turbulence sensitivity of the shorter-wavelength channel together with its larger angle-dependent atmospheric penalty. At small zenith angles, 656~nm and 854~nm can still outperform 1550~nm because the path through the atmosphere is short and the intrinsic long-wavelength diffraction penalty has not yet been compensated by the stronger absorption losses of shorter wavelengths. As the zenith angle increases, however, the atmospheric and turbulence penalties grow more rapidly for the shorter wavelengths, so the 1550~nm curves often decay more slowly and remain competitive deeper into the large-angle regime, especially with SNSPD receivers.

\begin{figure}[htbp]
    \includegraphics[width=\linewidth]{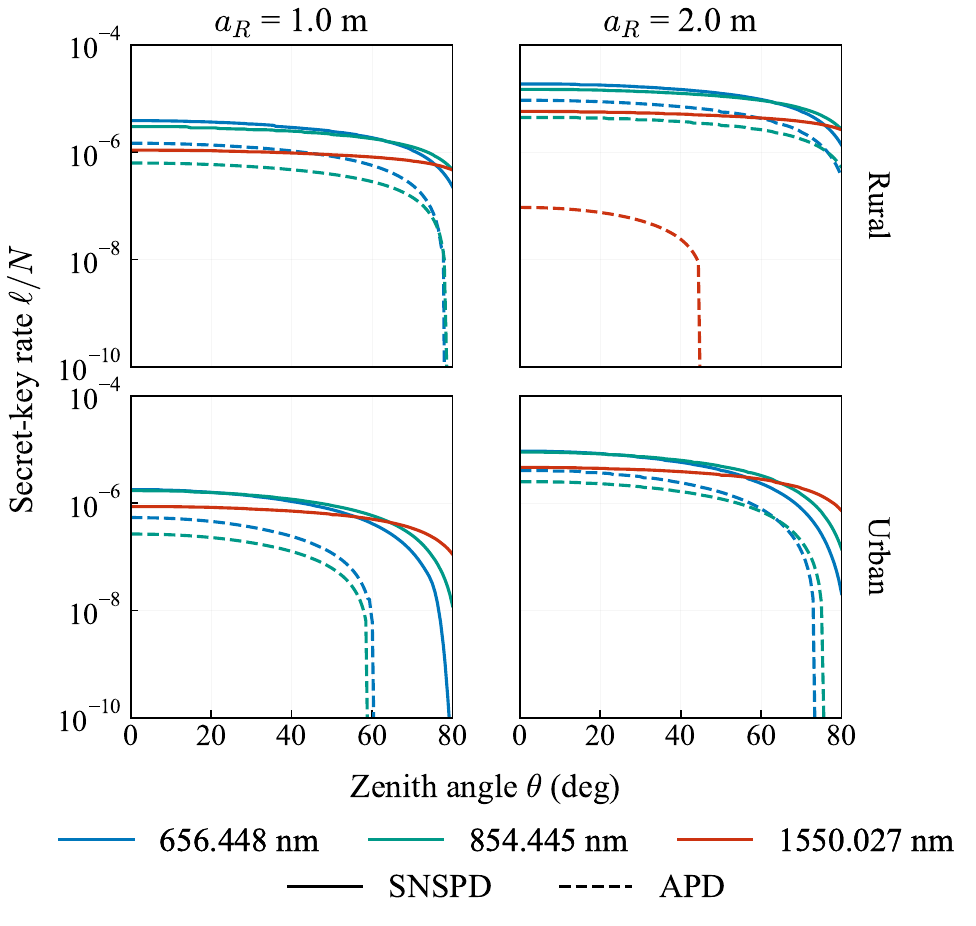}
    \caption{Finite-size secret-key rate $\ell/N$ (y axis) versus zenith angle (x axis) for different receiver apertures and operating environments under nighttime operation, with fixed transmitter aperture $a_T=0.75$ m, low AO correction, and Spec B detectors. Rows correspond to rural and urban conditions, while columns correspond to receiver apertures $a_R$ of $1.0$ m and $2.0$ m. Curve color denotes wavelength $\{656.448,\,854.445,\,1550.027\}$ nm, and line style distinguishes SNSPDs (solid) from APDs (dashed).}
    \label{fig:zenith_rate_finite_key}
\end{figure}

Figure~\ref{fig:min_b} makes the finite-size cost explicit for the default urban daytime architecture with $a_T=0.75$ m, $a_R=1.5$ m, and low AO correction. It is the block-size counterpart of the preceding daytime feasibility plots: channels that are admissible in the loss-noise and zenith-angle maps can still be operationally unattractive if they require impractically many transmission rounds before yielding a positive secret-key rate. The curves stay pinned to zero until a threshold value of $N$ is reached, after which the rate rises gradually. Across all panels, a $45^\circ$ zenith angle outperforms $60^\circ$, while the 1550~nm APD cases are the most demanding, failing to produce a positive key anywhere in the plotted range. Within the daytime slices shown here, the 854~nm curves also tend to cross the positive-key threshold earlier than the 1550~nm curves, especially at $45^\circ$, although earlier figures show that the relative wavelength advantage depends on zenith angle and operating regime rather than being universal. A practically important point is the required accumulation scale: in the favorable cases the onset of positive key is typically around $N\sim 10^{10}$--$10^{11}$ rounds. This corresponds to about $100$--$1000$ s at a $100$ MHz source, or about $10$--$100$ s at a $1$ GHz source.\\

\begin{figure}[htbp]
    \includegraphics[width=\linewidth]{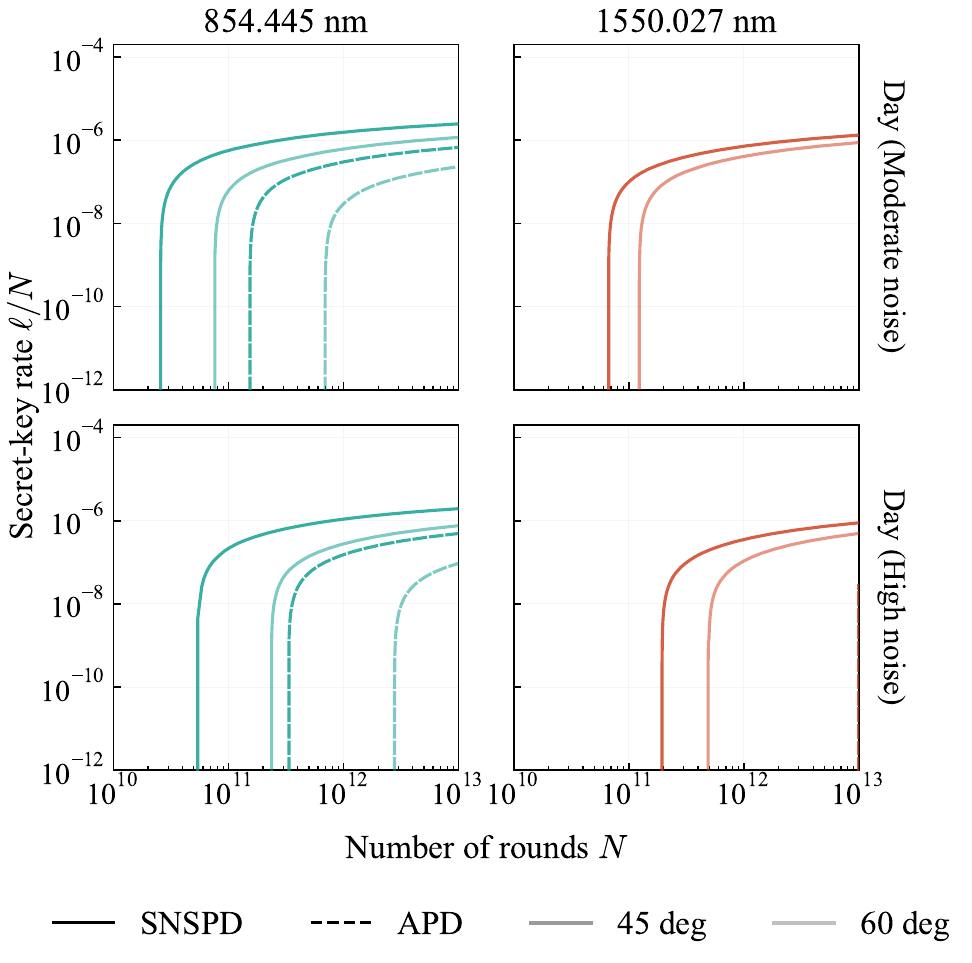}
    \caption{Finite-key rate $\ell/N$ (y axis) versus the number of transmission rounds $N$ (x axis), with both axes on logarithmic scales, for the default urban link with $a_T=0.75$ m, $a_R=1.5$ m, low AO correction, and Spec B detectors. Columns correspond to $\lambda=854.445$ nm and $\lambda=1550.027$ nm, and rows compare daytime operation under moderate and high background noise. Solid curves denote SNSPD receivers, dashed curves denote APD receivers, and the line opacity distinguishes zenith angles of $45^\circ$ and $60^\circ$.}
    \label{fig:min_b}
\end{figure}

\noindent\textit{Geographic annual yield}

We now translate these instantaneous feasibility trends into a geographic annual secret-key-yield prediction by combining the cloud statistics with a day/night averaging model in Fig.~\ref{fig:annual_keyrate_maps}.
For each location and retained cloud regime (clear-sky, thin-cloud, and thick-cloud), we evaluate the finite-size secret-key rate using the local channel loss and background conditions. The annual secret-key yield is then obtained by treating the year as equal daytime and nighttime intervals.
Within the daytime half, motivated by the Sun-track and sky-radiance behavior discussed in Appendix~\ref{sec:background_noise}, we assume that one third of the daylight interval is unusable due to the brightness of the sky when the Sun lies too close to the GEO line of sight. We split the remaining two thirds equally among the low, moderate and high background daytime scenarios. This daylight partition is part of the background model described in Section~\ref{sec:channel} and Appendix~\ref{sec:background_noise}.
We then calculate the secret-key rate, weighted by the local CRAAS cloud fractions, and multiply by the annual number of transmitted pulses. In particular, this map assumes repeated protocol runs with $N=10^{12}$ transmission signals and a source rate of $f = 1$ GHz.

The map shows a clear south-to-north degradation, with the best yields concentrated in the southern and eastern part of the domain and the weakest yields over the cloudier, lower-elevation northwestern region. This confirms that annual GEO-QKD performance is controlled jointly by viewing geometry and climatology, not by clear-sky optics alone.
Across the plotted cells, the geographic variation is substantial: the annual secret-key yield spans $2.7\times10^{9}$--$4.8\times10^{10}$ bits/year at 854.445~nm and $3.2\times10^{9}$--$2.8\times10^{10}$ bits/year at 1550.027~nm. The 854.445~nm yield is larger in about $86\%$ of the plotted cells, with a pointwise 854.445-to-1550.027~nm yield ratio spanning $0.71$--$1.74$. The cells where 854.445~nm underperforms 1550.027~nm are confined mainly to the northern, high-zenith-angle edge of the map, where the shorter wavelength's larger atmospheric and SMF-coupling penalties outweigh its geometric advantage. \\

\begin{figure}[htbp]
    \includegraphics[width=\linewidth]{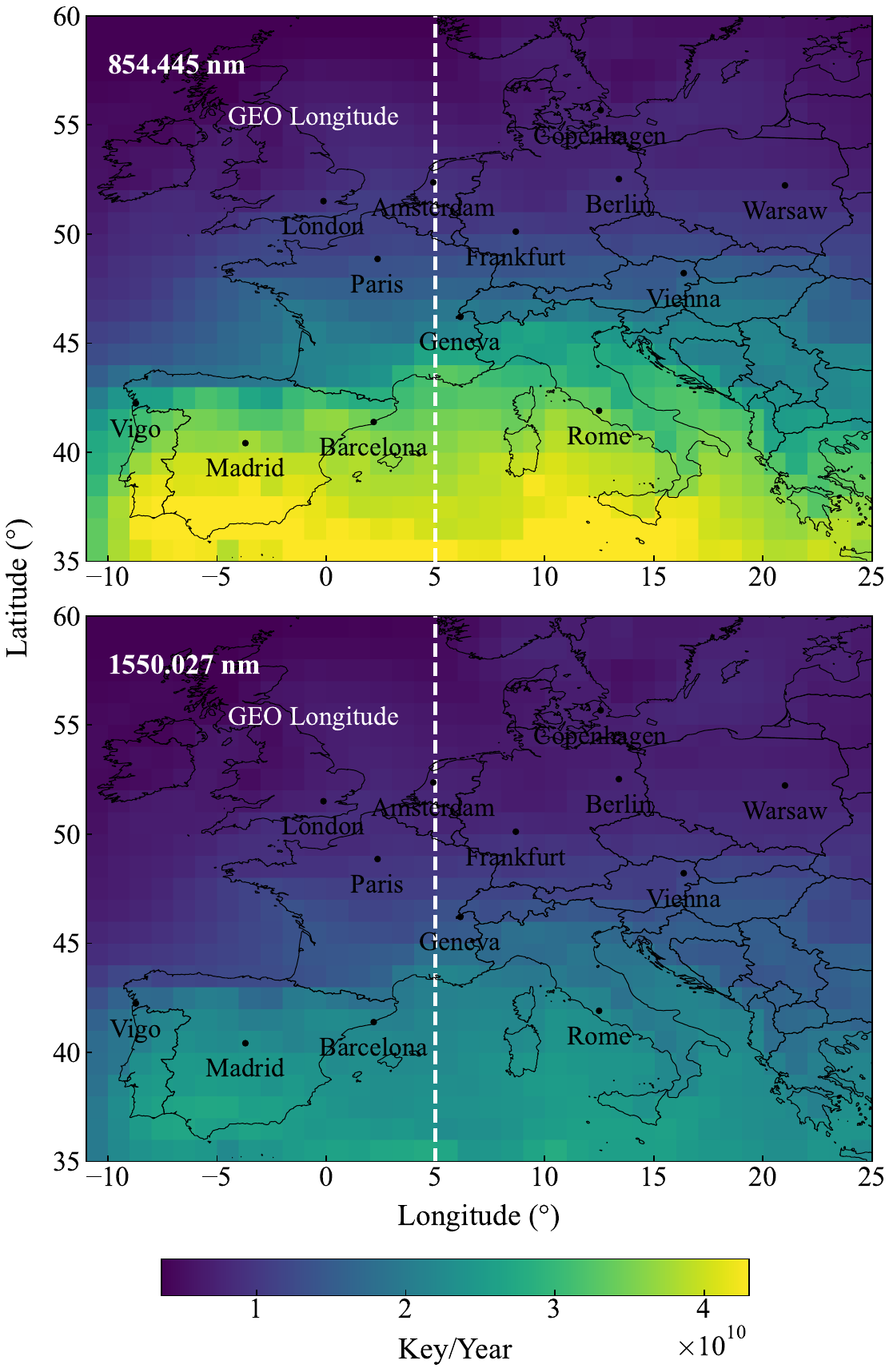}
    \caption{Geographic map of the cloud-weighted annual secret-key yield over the European latitude-longitude window. The top panel shows $\lambda=854.445$ nm and the bottom panel shows $\lambda=1550.027$ nm. The color scale gives the expected annual secret-key yield in bits/year for a GEO downlink, obtained by combining the finite-size secret-key rate with the local cloud-regime fractions and the day/night background model for the default aperture pair $a_T=0.75$ m and $a_R=1.5$ m, the urban site model, SNSPD Spec B, $N=10^{12}$, and $f=1$ GHz. Black markers indicate representative cities, and the dashed vertical line marks the satellite longitude at $5^\circ$E.}
    \label{fig:annual_keyrate_maps}
\end{figure}

\noindent\textit{Long-distance benchmark against fiber}

Finally, Fig.~\ref{fig:plob_bounds_wavelength_grid} provides a system-level benchmark by comparing the GEO link against terrestrial fiber limits.
It translates zenith-angle performance into the ground-arc distance $d_{\mathrm{arc}}$ between two locations on the Earth's surface, using the construction given in Appendix~\ref{sec:ground_separation}. Naturally, this construction accounts for the fact that, in order to establish a key between the ground stations, each of them must establish a key with the satellite in advance. The comparison is made for 854.445 nm and 1550.027 nm, using the urban night and moderate-noise daytime curves from the default $a_T=0.75$ m, $a_R=1.5$ m, and the SNSPD Spec B configuration. As fiber benchmarks, we show the repeaterless Pirandola--Laurenza--Ottaviani--Banchi (PLOB) bound~\cite{pirandola2017fundamental} and the corresponding idealized repeater-assisted bounds for $N_{\mathrm{rep}}=1,5,30$ repeaters \cite{pirandola2019end}, evaluated for a fiber attenuation of $0.2$ dB/km using $R_{\mathrm{fib}}^{\mathrm{rep}}=-\log_2\!\left(1-\eta_{\mathrm{fib}}^{1/(N_{\mathrm{rep}}+1)}\right)$ with $\eta_{\mathrm{fib}}=10^{-\alpha_{\mathrm{fib}} d_{\mathrm{arc}}/10}$.
The figure shows the contrast in long-distance scaling: the GEO curves remain nearly flat out to continental and transoceanic ranges, whereas the repeaterless and low-repeater fiber bounds fall off rapidly on the log-log axes.
Reaching distances of order $10^4$ km in fiber requires on the order of tens of repeaters, as illustrated by the $N_{\mathrm{rep}}=30$ reference curve, leaving aside the fact that quantum repeaters with reliable photon-memory interfaces and long coherence times are not yet available \cite{azuma2023quantum_repeaters}.
More importantly, a fiber-based repeater chain must follow the terrestrial route between the endpoints, which becomes especially restrictive for intercontinental or subsea connections. By contrast, the GEO architecture requires suitable ground stations at the two ends of the link, so its viability is largely independent of the terrain or ocean crossings in between.

\begin{figure}[htbp]
    \includegraphics[width=\linewidth]{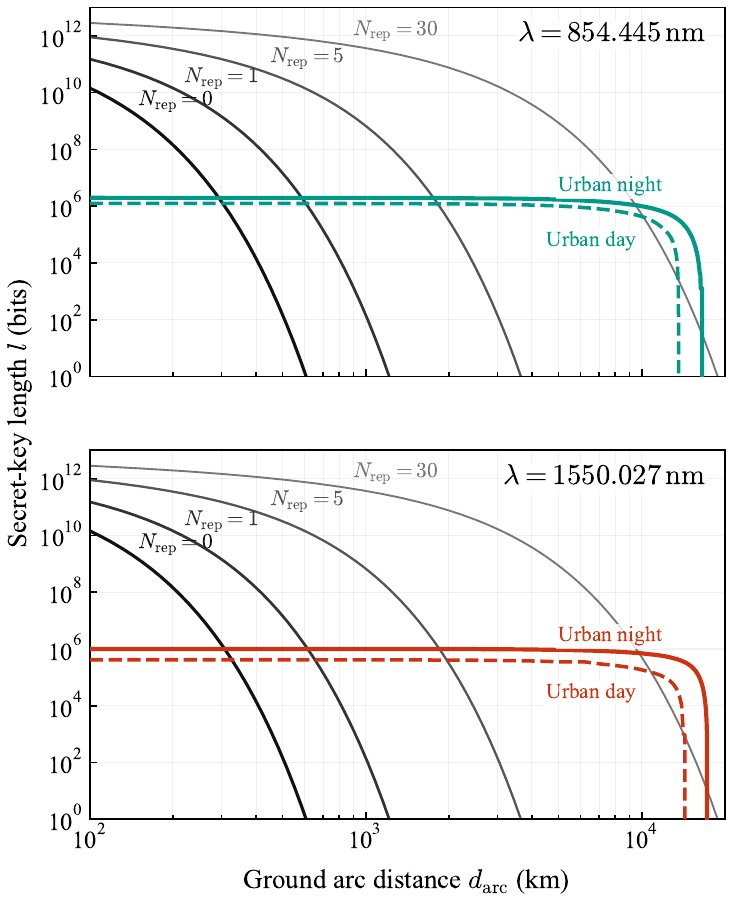}
    \caption{Secret-key length $\ell$ (y axis) versus ground arc distance $d_{\mathrm{arc}}$ (x axis), both on logarithmic scales, for the two signal wavelengths shown in the panel titles, $\lambda=854.445$ nm and $\lambda=1550.027$ nm. In each panel, the colored solid and dashed curves give the urban nighttime and moderate-noise daytime GEO-link results, respectively, for the default aperture pair $a_T=0.75$ m and $a_R=1.5$ m with an SNSPD Spec B receiver. The gray reference curves show the repeaterless PLOB bound and the idealized fiber repeater bounds for $N_{\mathrm{rep}}=1,5,30$ with fiber attenuation coefficient $0.2$ dB/km \cite{pirandola2017fundamental,pirandola2019end}. The same transmission budget, $N=10^{12}$ signals, is used for both the GEO and fiber cases. Since the GEO-QKD requires two satellite--OGS key-generation links, this budget is split evenly between them, allotting $N/2$ transmitted signals per satellite--OGS link.}
    \label{fig:plob_bounds_wavelength_grid}
\end{figure}

%% file: journal/sections/conclusion.tex
\section{Conclusion}
\label{sec:conclusion}

We have presented an end-to-end feasibility analysis of downlink GEO-QKD that combines a variable-length decoy-state BB84 security proof with a physically detailed channel and background model. The framework covers common active and passive receiver architectures, and ties together security analysis, receiver design, atmospheric transmission, and background noise within a single simulation pipeline. Taken together, the results show that GEO-QKD can support finite-key generation under realistic operating conditions and therefore constitutes a credible architecture for continental quantum networking.

At the protocol level, the study shows that finite-key effects have a relevant impact in the GEO regime, given the narrow feasible window in the combined loss-noise regime and the sparseness of the detected counts. The sweeps over the number of transmission rounds make this explicit: even favorable daytime links generally require at least $N\sim 10^{10}$--$10^{11}$ transmitted pulses before a positive key can be distilled, so finite-key performance must be treated as a design constraint from the outset.

The main engineering picture that emerges is likewise clear. GEO-QKD performance is governed by how effectively the system moves toward lower loss and lower background noise at the same time. Under representative urban daytime conditions, bringing the deterministic loss into the $\sim 50$--$60$ dB range requires apertures on the order of a meter, and efficient single-mode operation is not realistically sustained without adaptive optics. Within that design space, 854~nm is especially attractive for daylight operation because the Fraunhofer minimum reduces the solar background, while 1550~nm remains competitive and can become more robust at larger zenith angles because it degrades more slowly with pointing errors and atmospheric path length. Across all wavelengths, adaptive optics also unlock the benefits of using larger receiver apertures, which reduce loss without an equivalent penalty in collected background.

Detector choice sharpens these trends further. SNSPD receivers consistently preserve the broadest feasible operating region and retain positive key rates deeper into high-loss and high-background regimes, whereas APD implementations are much more constrained, especially at 1550~nm. When these instantaneous results are folded into the cloud-weighted annual key-yield analysis, the achievable annual secret-key yield shows a pronounced south-to-north degradation across Europe, demonstrating that GEO-QKD performance is controlled jointly by optical design, satellite elevation, and climatology.

The long-distance comparison with terrestrial fiber benchmarks reinforces the systems-level motivation for GEO links: once feasibility is established, the GEO curves remain comparatively flat with ground distance, whereas terrestrial secret-key capacity falls off rapidly.

Importantly, several aspects deserve further investigation. The present model neglects time-varying effects; it relies on representative turbulence and cloud statistics rather than site-specific forecasting or real-time adaptation. In addition, the network picture considered here is still that of a trusted GEO node. Natural next steps are to place GEO links inside wider QKD network architectures, to compare a persistent GEO backbone service against LEO constellations using explicit orbital visibility and handover dynamics, and to study hybrid GEO--LEO systems in which GEO provides continuous coverage while LEO segments add lower-loss or regionally flexible links.

More broadly, the value of the present study is to turn GEO-QKD from a qualitative possibility into a quantitatively delimited design problem. The operating windows are narrow, but they are now explicit enough to guide the choice of wavelength, detector technology, optical apertures, and observing conditions for future GEO-QKD missions.

%% file: journal/sections/acknowledgements.tex
\section*{Acknowledgments}
The authors specially thank Marcos Reyes, Hannah Thiel, and Alvaro Magdalena for helpful discussions. This work was supported by the European Union’s Horizon Europe Framework Programme under the Marie Sklodowska-Curie Grant No. 101072637 (Project QSI), the Galician Regional Government (consolidation of research units: atlanTTic), the Spanish Ministry of Science, Innovation and Universities (MICIU), the Fondo Europeo de Desarrollo Regional (FEDER) through the grant No. PID2024-162270OB-I00, the “Hub Nacional de Excelencia en Comunicaciones Cuanticas” funded by the Spanish Ministry for Digital Transformation and the Public Service and the European Union NextGenerationEU, the European Union’s Horizon Europe Framework Programme under the project Quantum Secure Networks Partnership (QSNP, grant agreement No 101114043), the European Union under the Project IberianQCI (grant 101249593), and the “Programa de Cooperación Interreg VI-A España–Portugal” (POCTEP) 2021–2027 through the project QUANTUM IBER\_IA.

%% file: journal/sections/appendix_decoy_state_bb84_protocol.tex
\section{Asymmetric active decoy-state BB84 protocol}
\label{sec:protocol_appendix}

\subsection{Protocol description}
\label{sec:active_description}
We consider a variant of the variable-length decoy-state BB84 protocol presented in \cite{tupkary2025imperfect} with an active BB84 receiver. The basis $Z$ is used for key generation and the basis $X$ for testing. Alice and Bob agree on the protocol inputs $N$, $p^A_X$, $J$, $\mathcal{K}:=\left\{\mu_0, \cdots, \mu_J\right\}$, $\left\{p_{\mu_0}, \cdots, p_{\mu_J}\right\}$, $\epsilon_{\mathrm{PE}}$, $\epsilon_{\mathrm{cor}}$, and $\epsilon_{\mathrm{PA}}$, specified in Table~\ref{bb84inputs}.
The protocol runs as follows. For $r=1,\ldots,N$, steps 1 and 2 below are repeated.
\begin{enumerate}
    \item{}\emph{State preparation}: Alice chooses a bit value $y_r$ uniformly at random, a basis setting $a_{r}\in \{Z,X\}$ with probability $p^A_{a_{r}}$,  with $p^A_Z=1-p^A_X$, and  an intensity {setting $k_r\in{}\mathcal{K}$ with probability $p_{k_{r}}$}. Then, she prepares a {phase-randomized weak coherent pulse (PRWCP) encoded with the above settings and sends it to Bob  through the quantum channel.}

    \item{}\emph{Active measurement}: Bob measures the {incident pulse in the basis $b_r\in \{Z,X\}$ with probability $p^B_{Z}=p^A_{Z}$. He records the measurement outcome as $y_r^{\prime}\in\{0,1, \emptyset\}$, where $\emptyset$ refers to the ``no-click'' event. Double clicks are assigned to 0 or 1 uniformly at random.}

    \item{}\emph{{Sifting}}: {The bases and intensity settings selected are publicly revealed and Alice and Bob identify the sets $\mathcal{Z}^{\mathrm{key}}=\bigl\{r: a_r=b_r=Z,\ y_r^{\prime} \neq \emptyset\bigr\}$,
          partitioned as $\mathcal{Z}^{\mathrm{key}}=\cup_{k \in \mathcal{K}} \mathcal{Z}^{\mathrm{key}}_{k}$ with $\mathcal{Z}^{\mathrm{key}}_k=\left\{r\in\mathcal{Z}^{\mathrm{key}}: k_r=k\right\}$ of size  $n_{k, Z}$,
          and $\mathcal{X}^{\mathrm{test}}=\left\{r: a_r=b_r=X,\ y_r^{\prime} \neq \emptyset\right\}$,
          partitioned as $\mathcal{X}^{\mathrm{test}}=\cup_{k \in \mathcal{K}} \mathcal{X}^{\mathrm{test}}_{k}$ with $\mathcal{X}^{\mathrm{test}}_k=\left\{r\in\mathcal{X}^{\mathrm{test}}: k_r=k\right\}$ of size  $n_{k,X}$. }

    \item{}\emph{Parameter estimation} (PE): {For each $k \in \mathcal{K}$, they disclose the bit values in the rounds $\mathcal{X}_k^{\mathrm{test}}$ and compute the corresponding numbers of bit errors, $m_{k, {X}}$.  Using the methods presented in Appendix~\ref{dsa}, they estimate the parameters $n^{\mathrm{L}}_{1,Z}$ (lower bound on the single-photon counts in the $Z$ basis) and $\phi^{\mathrm{U}}_{1,Z}$ (upper bound on the phase-error rate of the single-photon counts in the $Z$ basis), which depend on the requested PE error, $\epsilon_{\mathrm{PE}}$. }

    \item{}\emph{Variable-length decision}:{ Alice and Bob calculate the number of bits to be used for error correction, $\lambda_{\mathrm{EC}}$ \footnote{In practice, the users determine $\lambda_{\mathrm{EC}}$ as a function of the observed data. For our simulations, we set $\lambda_{\mathrm{EC}}=1.16n_{Z}h(e_Z)$, where $e_Z$ is set to the expected quantum bit error rate (QBER) of the channel.}  and the final key length  $\ell$ given by
          \begin{equation}
              \begin{split}
                  \ell =
                  \max\Biggl\{
                  \Bigl\lfloor n^{\mathrm{L}}_{1,Z}\left[1-h(\phi^{\mathrm{U}}_{1,Z})\right] \\
                  - \lambda_{\mathrm{EC}} - \log_2\left(\frac{1}{2\epsilon_{\mathrm{PA}}^2 \epsilon_{\mathrm{cor}}}\right)
                  \Bigr\rfloor,\,0
                  \Biggr\}.
              \end{split}
          \end{equation}
          Aborting is modeled as producing a key of length zero. If they do not abort, they proceed to steps 6,7 and 8.}

    \item{}\emph{Error correction} (EC): {Alice and Bob implement a one-way EC protocol that reveals  $\lambda_{\mathrm{EC}}$ bits  of syndrome information.}

    \item{}\emph{Error verification} (EV): {Alice and Bob perform an EV step based on 2-universal hashing, using EV tags of length $\log(2/\epsilon_{\mathrm{cor}})$ bits at most.  If the EV tags do not match, they abort the protocol. Otherwise, they proceed to privacy amplification.}

    \item{}\emph{Privacy amplification} (PA): {Alice and Bob perform a PA step based on 2-universal hashing, with failure probability at most $\epsilon_{\mathrm{PA}}$. In doing so they obtain an $\epsilon_{\mathrm{tot}}$-secure output key of length $\ell$, with
          \begin{equation}
              \label{eq:eqtot}
              \epsilon_{\mathrm{tot}} = 2\sqrt{\epsilon_{\mathrm{PE}}}+\epsilon_{\mathrm{PA}}+\epsilon_{\mathrm{cor}}.
          \end{equation}

          }

\end{enumerate}

\begin{table}[htbp]
    \centering
    \caption{Decoy-state BB84 protocol inputs.}
    \label{bb84inputs}
    \begin{tabularx}{\columnwidth}{|c|X|}
        \hline
        $N$                                   & Number of transmission rounds                         \\
        \hline
        $p^A_X$                               & Test basis probability                                \\
        \hline
        $\mathcal{K}:=\{\mu_0,\cdots,\mu_J\}$ & Set of intensities                                    \\
        \hline
        $\{p_{\mu_0},\cdots,p_{\mu_J}\}$      & Probabilities of the intensity settings               \\
        \hline
        $\epsilon_{\mathrm{PE}}$              & Failure probability of the parameter estimation step  \\
        \hline
        $\epsilon_{\mathrm{cor}}$             & Failure probability of the error verification step    \\
        \hline
        $\epsilon_{\mathrm{PA}}$              & Failure probability of the privacy amplification step \\
        \hline
    \end{tabularx}
\end{table}

\subsection{Decoy-state analysis}\label{dsa}

Here we describe the linear programs (LP) used to compute the quantities \( n^{\mathrm{L}}_{1,Z} \) and \( \phi^{\mathrm{U}}_{1,Z} \) in the protocol, based on the observed values \( n_{k,Z}, n_{k,X}\) and  \(m_{k,X} \) for all \( k \in \mathcal{K} \). Our analysis is based on the earlier work presented in Ref. \cite{attema2021optimizing}, with further additions.

In the expressions below  \( N(n) \) denotes the number of transmitted (detected) pulses, and \( m \)  the number of observed errors. As for the subscripts, \( i \) denotes the number of photons in a pulse, \( \mu_j \) denotes the intensity setting, and \( \mathcal{B} \) denotes the matched basis selected both by Alice and Bob. For example, \( n_{\mu_j,\mathcal{B}} \) denotes the number of detected pulses prepared and measured in the  \(\mathcal{B}\)-basis  and generated  with intensity $\mu_j$; \( N_{i,\mathcal{B}} \) denotes the number of transmitted \( i \)-photon pulses prepared and measured in the  \( \mathcal{B}\)-basis ; and \( m_X \) denotes the total number of errors observed in the \( X \) basis, aggregated over all intensity settings.

To take into account the effect of finite statistics, we consider the standard counterfactual scenario according to which Alice prepares and sends to Bob photon-number states and the assignment of the intensity settings to the detected signals is performed a posteriori~\cite{lim2014concise}. In this scenario, the expected value of $n_{{\mu_j},\mathcal{B}}$ is given by
\begin{equation}
    \label{countfactequ}
    \mathbb{E}\left[n_{{\mu_j},\mathcal{B}}\right] = \sum_{i=0}^{\infty} p_{\left(\mu_j|i\right)}n_{i,\mathcal{B}},
\end{equation}
where $p_{(\mu_j|i)}$ is the conditional probability that Alice assigns the intensity setting $\mu_j$ to an $i$-photon state.
Due to Bayes' rule, the conditional probability $ p_{\left(\mu_j|i\right)}$ can be written as
\begin{equation}
    p_{\left(\mu_j|i\right)}=\frac{p_{(i|\mu_j)}p_{\mu_j}}{p_i}.
\end{equation}
In the case of PRWCPs, the photon-number statistics $p_{(i|\mu_j)}$ of the source follow a Poisson distribution, meaning that \(
p_{(i|\mu_j)}=e^{-\mu_j}\mu_j^i/i!,
\)
and, in addition, we have that

\begin{equation}
    p_i =
    \sum_{j=0}^Jp_{(i|\mu_j)}p_{\mu_j}=
    \sum_{j=0}^J\frac{p_{\mu_j}e^{-\mu_j} \mu_j^i}{i!},
\end{equation}
where $p_{\mu_j}$ is the probability of Alice selecting the intensity $\mu_j$.

In the counterfactual scenario, the observed variables $n_{{\mu_j},\mathcal{B}}$ arise via independent Bernoulli sampling from $n_{\mathcal{B}}$ and hence, one can apply Theorem~\ref{PoissonBinomial} introduced in Appendix~\ref{sec:bounds} to bound the quantity $\mathbb{E}\left[n_{{\mu_j},\mathcal{B}}\right]$.  In doing so we obtain

\begin{equation}
    \label{multchern}
    \begin{split}
        \Pr \big[ \mathbb{E}[n_{\mu_j,\mathcal{B}}]/n_{\mathcal{B}} \geq \mathcal{F}^{\mathrm{U}}_{n_{\mathcal{B}},\epsilon^{\mathrm{U}}_{\mu_j,n_\mathcal{B}}} \left(n_{\mu_j,\mathcal{B}}/n_{\mathcal{B}}\right) \big] \\ \leq \epsilon^{\mathrm{U}}_{\mu_j,n_\mathcal{B}},
    \end{split}
\end{equation}
\begin{equation}
    \label{multchern2}
    \begin{split}
        \Pr \big[ \mathbb{E}[n_{\mu_j,\mathcal{B}}]/n_{\mathcal{B}} \leq \mathcal{F}^{\mathrm{L}}_{n_{\mathcal{B}},\epsilon^{\mathrm{L}}_{\mu_j,n_\mathcal{B}}}\left(n_{\mu_j,\mathcal{B}}/n_{\mathcal{B}}\right) \big] \\ \leq \epsilon^{\mathrm{L}}_{\mu_j,n_\mathcal{B}},
    \end{split}
\end{equation}
for all $0\leq j \leq J$, $\mathcal{B} \in \{Z,X\}$, and $\epsilon^{\Delta}_{\mu_j,n_\mathcal{B}}>0$ for \(\Delta \in \{\mathrm{U}, \mathrm{L}\}\).
The functions $\mathcal{F}^{\Delta}_{n_{\mathcal{B}},\epsilon^{\Delta}_{\mu_j,n_\mathcal{B}}}$ are given by Eqs.~\eqref{eq:poisslb}, \eqref{eq:poissub}.
Importantly, this allows us to relate  the observed parameters $n_{\mu_j,\mathcal{B}}$ to their expected values as

\begin{equation}
    \label{expectdetected}
    n_{\mu_j,\mathcal{B}}+ \delta_{\mu_j, n_\mathcal{B}}=\mathbb{E}\left[ n_{\mu_j,\mathcal{B}}\right],\end{equation}
with
\(
\delta^{\mathrm{L}}_{\mu_j,n_\mathcal{B}}
\leq
\delta_{\mu_j, n_\mathcal{B}}
\leq
\delta^{\mathrm{U}}_{\mu_j,n_\mathcal{B}},
\)
and where
\(
\delta^{\Delta}_{\mu_j,n_\mathcal{B}} =n_{\mathcal{B}}\mathcal{F}^{\Delta}_{n_{\mathcal{B}},\epsilon^{\Delta}_{\mu_j,n_\mathcal{B}}}\left(n_{\mu_j,\mathcal{B}}/n_{\mathcal{B}}\right)-n_{\mu_j,\mathcal{B}},\nonumber
\)
for \(\Delta\in\{\mathrm{U},\mathrm{L}\}\), except with probability at most $\epsilon^{\mathrm{U}}_{\mu_j,n_\mathcal{B}}+ \epsilon^{\mathrm{L}}_{\mu_j,n_\mathcal{B}}$.

Since $\sum_{j=0}^J\mathbb{E}\left[ n_{\mu_j,\mathcal{B}}\right]=n_{\mathcal{B}}$, by using Eq.~(\ref{expectdetected}) we also have the constraint,
\begin{equation}
    \sum_{j=0}^J \delta_{\mu_j, n_\mathcal{B}} =0.
\end{equation}

Let us now turn our attention to the parameters  $n_{i,\mathcal{B}}$, which are the unknown variables of the LP below.  To start  with, we note that
$n_{i,\mathcal{B}}$ is trivially upper bounded by the number of $i$-photon pulses sent and measured in the $\mathcal{B}$-basis, $N_{i,\mathcal{B}}$. That is,
\begin{equation}
    \label{eq:constraint2}
    n_{i,\mathcal{B}} \leq N_{i,\mathcal{B}},
\end{equation}
for all $i\geq0$ and $\mathcal{B} \in \{Z,X\}$. Moreover, we have that
\begin{equation}
    \mathbb{E}\left[N_{i,\mathcal{B}}\right] = p_i N_{\mathcal{B}}.
\end{equation}
Since the variables \( N_{i,\mathcal{B}} \) are sums of independent and identically distributed Bernoulli random variables, applying Theorem~\ref{Binomial} presented in Appendix~\ref{sec:bounds} provides a bound on \( N_{i,\mathcal{B}} \).

In particular, we obtain

\begin{equation}
    \label{binomupper}
    \Pr\left[N_{i,\mathcal{B}}/N_{\mathcal{B}} \geq
        \mathcal{G}^{\mathrm{U} }_{N_{\mathcal{B}},\epsilon^{\mathrm{U}}_{i,\mathcal{B}}}\left(p_i\right)\right]
    \leq \epsilon^{\mathrm{U}}_{i,\mathcal{B}},
\end{equation}

\begin{equation}
    \label{binomlower}
    \Pr\left[N_{i,\mathcal{B}}/N_{\mathcal{B}} \leq
        \mathcal{G}^{\mathrm{L} }_{N_{\mathcal{B}},\epsilon^{\mathrm{L}}_{i,\mathcal{B}}}\left(p_i\right)\right]
    \leq \epsilon^{\mathrm{L}}_{i,\mathcal{B}},
\end{equation}
for all $i\geq0$, $\mathcal{B} \in \{Z,X\}$, and  $\epsilon^{\{L,U\}}_{i,\mathcal{B}}>0$. The functions $\mathcal{G}^{\Delta}_{N_{\mathcal{B}},\epsilon^{\Delta}_{i,\mathcal{B}}}$ are given by Eqs.~\eqref{eq:binub}, \eqref{eq:binlb}. By combining  Eqs.~\eqref{eq:constraint2}--\eqref{binomlower}, along with the condition $n_{i,\mathcal{B}}\leq n_{\mathcal{B}}$ for all $i\geq 0$, we obtain

\begin{equation}
    \label{photonnumberrelation}
    0 \leq n_{i,\mathcal{B}} \leq \min(N_{i,\mathcal{B}},n_{\mathcal{B}})=\min \left(p_i N_{\mathcal{B}}+\delta_{i,\mathcal{B}},n_\mathcal{B}\right),
\end{equation}
with the parameter $\delta_{i,\mathcal{B}}$ satisfying

\begin{equation}
    \label{delta1bound}\delta^{\mathrm{L}}_{i,\mathcal{B}}\leq\delta_{i,\mathcal{B}}\leq\delta^{\mathrm{U}}_{i,\mathcal{B}},
\end{equation}
except with probability at most $\epsilon^{\mathrm{U}}_{i,\mathcal{B}}+ \epsilon^{\mathrm{L}}_{i,\mathcal{B}}$, where
\(
\delta^{\Delta}_{i,\mathcal{B}}=N_{\mathcal{B}}\mathcal{G}^{\Delta}_{N_{\mathcal{B}},\epsilon^{\Delta}_{i,\mathcal{B}}}\left(p_i\right)-p_i N_{\mathcal{B}}
\)
for \(\Delta\in\{\mathrm{U}, \mathrm{L}\}\).

To enable a finite set of unknowns in the LPs, we define a threshold photon number $I$ and upper-bound the number of detected pulses within $n_{\mathcal{B}}$ that contain more than $I$ photons, i.e., $\sum_{i=I+1}^{\infty} n_{i,\mathcal{B}}$.
Note that the number of detections is trivially upper bounded by the number of transmitted signals, i.e., $\sum_{i=I+1}^{\infty} n_{i,\mathcal{B}}\leq\sum_{i=I+1}^{\infty} N_{i,\mathcal{B}}$.
To upper bound the latter quantity, note that $\sum_{i=I+1}^{\infty} N_{i,\mathcal{B}}$ counts the number of the $N_{\mathcal{B}}$ transmitted signals that contain more than $I$ photons; hence it is a binomial random variable with parameters $N_{\mathcal{B}}$ and $\sum_{i=I+1}^{\infty} p_i$, and expectation $N_{\mathcal{B}}\sum_{i=I+1}^{\infty} p_i$. Therefore, one can apply Theorem~\ref{Binomial} provided in Appendix~\ref{sec:bounds} to obtain

\begin{equation}
    \begin{split}
        \Pr\left[ \left(\sum_{i=I+1}^{\infty} N_{i,\mathcal{B}}\right)/N_{\mathcal{B}}  \geq \mathcal{G}^{\mathrm{U}}_{N_{\mathcal{B}},\epsilon^{\mathrm{U}}_{> I,\mathcal{B}}}\left(\sum_{i=I+1}^{\infty} p_i\right)\right] \\ \leq \epsilon^{\mathrm{U}}_{> I,\mathcal{B}},
    \end{split}
\end{equation}
\begin{equation}
    \begin{split}
        \Pr\left[ \left(\sum_{i=I+1}^{\infty} N_{i,\mathcal{B}}\right)/N_{\mathcal{B}} \leq \mathcal{G}^{\mathrm{L}}_{N_{\mathcal{B}},\epsilon^{\mathrm{L}}_{> I,\mathcal{B}}}\left(\sum_{i=I+1}^{\infty} p_i\right)\right] \\ \leq \epsilon^{\mathrm{L}}_{> I,\mathcal{B}},
    \end{split}
\end{equation}
for $\mathcal{B}\in \{Z,X\}$, and $\epsilon^{\Delta}_{> I,\mathcal{B}}>0$ for $\Delta\in\{\mathrm{U},\mathrm{L}\}$. This leads us to the relation

\begin{equation}
    \begin{split}
        \sum_{i=I+1}^{\infty} n_{i,\mathcal{B}} & \leq \sum_{i=I+1}^{\infty} N_{i,\mathcal{B}}                                 \\
                                                & = N_{\mathcal{B}}\sum_{i=I+1}^{\infty} p_i +\delta_{>I,\mathcal{B}}          \\
                                                & = N_{\mathcal{B}}\left(1-\sum_{i=0}^{I} p_i\right) +\delta_{>I,\mathcal{B}},
    \end{split}
\end{equation}
with the parameter $\delta_{>I,\mathcal{B}}$ satisfying

\begin{equation}
    \delta^{\mathrm{L}}_{>I,\mathcal{B}}\leq\delta_{>I,\mathcal{B}}\leq\delta^{\mathrm{U}}_{>I,\mathcal{B}},
\end{equation}
except with probability at most $\epsilon^{\mathrm{U}}_{> I,\mathcal{B}}+\epsilon^{\mathrm{L}}_{> I,\mathcal{B}}$, with
\begin{equation}
    \begin{split}
        \delta^{\Delta}_{>I,\mathcal{B}} & =N_{\mathcal{B}}\mathcal{G}^{\Delta}_{N_{\mathcal{B}},\epsilon^{\Delta}_{> I,\mathcal{B}}}\left(1-\sum_{i=0}^{I} p_i\right) \\
                                         & \quad -N_{\mathcal{B}}\left(1-\sum_{i=0}^{I} p_i\right) ,
    \end{split}
\end{equation}
for $\Delta\in\{\mathrm{U}, \mathrm{L}\}$. Hence an upper bound on the quantity $\sum_{i=0}^{\infty} p_{\left(\mu_j|i\right)}n_{i,\mathcal{B}} $  follows as
\begin{equation}
    \label{eq:truncation}
    \begin{split}
        \sum_{i=0}^{\infty} p_{\left(\mu_j|i\right)}n_{i,\mathcal{B}} & \leq \sum_{i=0}^{I} p_{\left(\mu_j|i\right)}n_{i,\mathcal{B}}+\sum_{i=I+1}^{\infty} n_{i,\mathcal{B}} \\
                                                                      & \leq \sum_{i=0}^{I} p_{\left(\mu_j|i\right)}n_{i,\mathcal{B}}                                         \\
                                                                      & +N_{\mathcal{B}}\left(1-\sum_{i=0}^{I} p_i\right)+\delta_{>I,\mathcal{B}}.
    \end{split}
\end{equation}
We also have the following lower bound,
\begin{equation}
    \label{eq:trunclow}
    \sum_{i=0}^{\infty} p_{\left(\mu_j|i\right)}n_{i,\mathcal{B}}\geq \sum_{i=0}^{I} p_{\left(\mu_j|i\right)}n_{i,\mathcal{B}}.
\end{equation}

Importantly, Eqs.~\eqref{eq:truncation} and~\eqref{eq:trunclow} bound the quantity $\sum_{i=0}^{\infty} p_{\left(\mu_j|i\right)}n_{i,\mathcal{B}}$ with a finite number of terms and make the LPs below numerically feasible. They respectively form an upper and lower bound for Eq.~\eqref{expectdetected} via Eq.~\eqref{countfactequ}. Specifically, using Eqs.~\eqref{countfactequ},
\eqref{expectdetected}, and~\eqref{eq:truncation}, we have that
\begin{multline}
    n_{\mu_j,\mathcal{B}}+ \delta_{\mu_j, n_\mathcal{B}} = \mathbb{E}\left[n_{{\mu_j},\mathcal{B}}\right] \\
    = \sum_{i=0}^{\infty} p_{\left(\mu_j|i\right)}n_{i,\mathcal{B}} \\
    \leq \sum_{i=0}^{I} p_{\left(\mu_j|i\right)}n_{i,\mathcal{B}}+N_{\mathcal{B}}\left(1-\sum_{i=0}^{I} p_i\right)+\delta_{>I,\mathcal{B}},
\end{multline}
and using Eqs.~\eqref{countfactequ},
\eqref{expectdetected}, and~\eqref{eq:trunclow}, we also have
\begin{equation}
    \begin{split}
        n_{\mu_j,\mathcal{B}}+ \delta_{\mu_j, n_\mathcal{B}} & = \mathbb{E}\left[n_{{\mu_j},\mathcal{B}}\right] = \sum_{i=0}^{\infty} p_{\left(\mu_j|i\right)}n_{i,\mathcal{B}} \\
                                                             & \geq \sum_{i=0}^{I} p_{\left(\mu_j|i\right)}n_{i,\mathcal{B}}.
    \end{split}
\end{equation}

The truncation provided by Eq.~\eqref{eq:truncation} also imposes a constraint on the variables $\delta_{i,\mathcal{B}}$ that appear in Eq.~\eqref{photonnumberrelation}. In particular, by using the fact that $N_{i,\mathcal{B}}= p_i N_{\mathcal{B}}+\delta_{i,\mathcal{B}}$ (except with probability $\epsilon^{\mathrm{U}}_{i,\mathcal{B}}+ \epsilon^{\mathrm{L}}_{i,\mathcal{B}}$) and  applying Theorem~\ref{Binomial}, we have that

\begin{equation}
    \begin{split}
        \Pr\bigg[ \frac{1}{N_{\mathcal{B}}} \left( \sum_{i=0}^I p_iN_\mathcal{B}+\delta_{i,\mathcal{B}}\right) \qquad \qquad \\
            \geq \mathcal{G}^{\mathrm{U}}_{N_{\mathcal{B}},\epsilon^{\mathrm{U}}_{\leq I,\mathcal{B}}}\left(\sum_{i=0}^I p_i\right) \bigg] \leq \epsilon^{\mathrm{U}}_{\leq I,\mathcal{B}},
    \end{split}
\end{equation}
\begin{equation}
    \begin{split}
        \Pr\bigg[ \frac{1}{N_{\mathcal{B}}} \left( \sum_{i=0}^I p_iN_\mathcal{B}+\delta_{i,\mathcal{B}}\right) \qquad \qquad \\
            \leq \mathcal{G}^{\mathrm{L}}_{N_{\mathcal{B}},\epsilon^{\mathrm{L}}_{\leq I,\mathcal{B}}}\left(\sum_{i=0}^I p_i\right) \bigg] \leq \epsilon^{\mathrm{L}}_{\leq I,\mathcal{B}}.
    \end{split}
\end{equation}
This implies that the parameters $\delta_{i,\mathcal{B}}$ satisfy
\begin{equation}
    \delta^{\mathrm{L}}_{\leq I,\mathcal{B}}\leq \sum_{i=0}^I \delta_{i,\mathcal{B}}\leq \delta^{\mathrm{U}}_{\leq I,\mathcal{B}},\end{equation}
for
\begin{equation}
    \begin{split}
        \delta^{\mathrm{U}}_{\leq I,\mathcal{B}} & =N_{\mathcal{B}}\mathcal{G}^{\mathrm{U}}_{N_{\mathcal{B}},\epsilon^{\mathrm{U}}_{\leq I,\mathcal{B}}}\left(\sum_{i=0}^I p_i\right)-\sum_{i=0}^I p_iN_{\mathcal{B}}, \\
        \delta^{\mathrm{L}}_{\leq I,\mathcal{B}} & =N_{\mathcal{B}}\mathcal{G}^{\mathrm{L}}_{N_{\mathcal{B}},\epsilon^{\mathrm{L}}_{\leq I,\mathcal{B}}}\left(\sum_{i=0}^I p_i\right)-\sum_{i=0}^I p_iN_{\mathcal{B}},
    \end{split}
\end{equation}
except with probability at most $\sum_{i=0}^I(\epsilon^{\mathrm{U}}_{i,\mathcal{B}}+ \epsilon^{\mathrm{L}}_{i,\mathcal{B}})+\epsilon^{\mathrm{U}}_{\leq I,\mathcal{B}}+\epsilon^{\mathrm{L}}_{\leq I,\mathcal{B}}$.

Moreover, taking into account that $\sum_{i=0}^{\infty}N_{i,\mathcal{B}}=N_{\mathcal{B}}$ we find that $\sum_{i=0}^{\infty}\delta_{i,\mathcal{B}}=0$. This means that we also have the following constraint,
\begin{equation}
    \sum_{i=0}^{I}\delta_{i,\mathcal{B}}+\delta_{>I,\mathcal{B}}=0
\end{equation}
except with probability at most $\sum_{i=0}^I(\epsilon^{\mathrm{U}}_{i,\mathcal{B}}+ \epsilon^{\mathrm{L}}_{i,\mathcal{B}})+\epsilon^{\mathrm{U}}_{> I,\mathcal{B}}+\epsilon^{\mathrm{L}}_{> I,\mathcal{B}}$.

Now, we have all the necessary constraints to construct the LPs to evaluate  $n^{\mathrm{L}}_{1, Z}$ and $\phi^{\mathrm{U}}_{1, Z}$, as detailed in the following subsections.

\subsubsection{Lower bound on \texorpdfstring{$n_{1, Z}$}{n1,Z}}
Combining the constraints from the previous discussion, a lower bound $n_{1,Z}^{\mathrm{L}}$ on the parameter $n_{1,Z}$ that holds except with probability at most
\begin{equation}
    \begin{split}
        \epsilon_{1,Z}= & \sum_{i = 0}^{I} \left(\epsilon^{\mathrm{U}}_{i,Z}+\epsilon^{\mathrm{L}}_{i,Z}\right) + \sum_{j=0}^J\left(\epsilon^{\mathrm{U}}_{\mu_j,n_Z}+\epsilon^{\mathrm{L}}_{\mu_j,n_Z}\right) \\
                        & +\epsilon^{\mathrm{L}}_{\leq I,Z}+\epsilon^{\mathrm{U}}_{\leq I,Z}+\epsilon^{\mathrm{L}}_{>I,Z}+\epsilon^{\mathrm{U}}_{> I,Z},
    \end{split}
\end{equation}
is given by the following LP,
\begin{equation}
    \label{eq:LP_Z0}
    \begin{aligned}
        n_{1,Z}^{\mathrm{L}} = &  &  & \min        &  & n_{1,Z},                                                  \\
                               &  &  & \text{s.t.} &  & n_{\mu_j, Z} + \delta_{\mu_j,n_Z}
        \geq \sum_{i=0}^{I} p_{(\mu_j|i)} n_{i,Z},                                                                \\
                               &  &  &             &  & n_{\mu_j, Z} + \delta_{\mu_j,n_Z}
        \leq \sum_{i=0}^{I} p_{(\mu_j|i)} n_{i,Z}                                                                 \\
                               &  &  &             &  & +N_{Z}\!\left(1-\sum_{i=0}^{I} p_i\right)+\delta_{>I,Z},  \\
                               &  &  &             &  & 0 \leq n_{i,Z}
        \leq \min\!\left(p_iN_Z+\delta_{i,Z},\, n_Z\right),                                                       \\
                               &  &  &             &  & \sum_{j=0}^J \delta_{\mu_j,n_Z} = 0,                      \\
                               &  &  &             &  & \delta^{\mathrm{L}}_{\mu_j,n_Z}
        \leq \delta_{\mu_j,n_Z} \leq \delta^{\mathrm{U}}_{\mu_j,n_Z},                                             \\
                               &  &  &             &  & \delta^{\mathrm{L}}_{>I,Z}
        \leq \delta_{>I,Z} \leq \delta^{\mathrm{U}}_{>I,Z},                                                       \\
                               &  &  &             &  & \delta^{\mathrm{L}}_{i,Z}
        \leq \delta_{i,Z} \leq \delta^{\mathrm{U}}_{i,Z},                                                         \\
                               &  &  &             &  & \delta^{\mathrm{L}}_{\leq I,Z}
        \leq \sum_{i=0}^I \delta_{i,Z} \leq \delta^{\mathrm{U}}_{\leq I,Z},                                       \\
                               &  &  &             &  & \sum_{i=0}^{I}\delta_{i,Z}+\delta_{>I,Z}=0                \\
                               &  &  &             &  & \forall \, 0 \leq j \leq J,\; \forall \, 0 \leq i \leq I.
    \end{aligned}
\end{equation}

\subsubsection{Upper bound on \texorpdfstring{$\phi_{1, Z}$}{phi1,Z}}
An upper bound $\phi^{\mathrm{U}}_{1,Z}$ on the phase error rate of the single-photon events in the $Z$ basis, $\phi_{1,Z}$, can be obtained by casting it in terms of a random sampling problem, in which one observes $m_{1,X}$ errors in $n_{1,X}$ instances, randomly sampled from a population of size $N_{\mathrm{H}}:=n_{1,X}+n_{1,Z}$, with test-sample size $n_{\mathrm{H}}:=n_{1,X}$. Defining $e_{1,X}=m_{1,X}/n_{1,X}$, the single-photon bit error rate in the $X$ basis, and utilizing Theorem~\ref{Hyper} provided in Appendix~\ref{sec:bounds},  we have that
\begin{equation}
    \label{phiproper}
    \begin{split}
        \phi_{1,Z} \leq \frac{1}{n_{1,Z}} \bigl[ & (n_{1,Z}+n_{1,X}) \times                                                                                                \\
                                                 & \mathcal{H}^{\mathrm{U}}_{N_{\mathrm{H}},\,n_{\mathrm{H}},\,\epsilon_{\mathrm{S}}}\left(e_{1,X}\right) -m_{1,X} \bigr],
    \end{split}
\end{equation}
except with probability at most $\epsilon_{\mathrm{S}}$. The function $\mathcal{H}^{\mathrm{U}}_{N,n,\epsilon}$ is given by Eq.~\eqref{eq:hyperub}. Note that the parameters on the right-hand side are not observed during the experiment. In order to relate them to the observables, we define the quantity,

\begin{equation}
    \label{phiupperbound}
    \begin{split}
        \phi^{\mathrm{U}}_{1,Z} = \frac{1}{n_{1,Z}^{\mathrm{L}}} \bigl[ & (n_{1,Z}^{\mathrm{U}}+n_{1,X}^{\mathrm{U}}) \times                                                                                                                          \\
                                                                        & \mathcal{H}^{\mathrm{U}}_{N_{\mathrm{H}}^{\mathrm{U}},\,n_{\mathrm{H}}^{\mathrm{L}},\,\epsilon_{\mathrm{S}}}\left(e^{\mathrm{U}}_{1,X}\right) -m_{1,X}^{\mathrm{L}} \bigr],
    \end{split}
\end{equation}
where  \(N_{\mathrm{H}}^{\mathrm{U}}:=n_{1,Z}^{\mathrm{U}}+n_{1,X}^{\mathrm{U}}\) and \(n_{\mathrm{H}}^{\mathrm{L}}:=n_{1,X}^{\mathrm{L}}\) , \(e_{1,X}^{\mathrm{U}}:=m_{1,X}^{\mathrm{U}}/n_{1,X}^{\mathrm{L}}\), and the bounds \(m_{1,X}^{\mathrm{L}}\), \(m_{1,X}^{\mathrm{U}}\), \(n_{1,X}^{\mathrm{L}}\), \(n_{1,X}^{\mathrm{U}}\), and \(n_{1,Z}^{\mathrm{U}}\) entering Eq.~\eqref{phiupperbound} are constructed immediately below.
Following a similar reasoning as before, we find that an upper bound $m_{1,X}^{\mathrm{U}}$ on the number of single-photon errors in the $X$ basis, $m_{1,X}$, that holds except with probability at most
\begin{equation}
    \begin{split}
        \epsilon_{m,X} & = \sum_{i = 0}^{I} \left(\epsilon^{\mathrm{U}}_{i,X}+\epsilon^{\mathrm{L}}_{i,X}\right)
        + \sum_{j=0}^J\left(\epsilon^{\mathrm{U}}_{\mu_j,m_X}+\epsilon^{\mathrm{L}}_{\mu_j,m_X}\right)                                                  \\
                       & +\epsilon^{\mathrm{U}}_{\leq I,X}+\epsilon^{\mathrm{L}}_{\leq I,X}+\epsilon^{\mathrm{U}}_{>I,X}+\epsilon^{\mathrm{L}}_{> I,X},
    \end{split}
\end{equation}
is given by the following LP,
\begin{equation}
    \label{eq:LP_E}
    \begin{aligned}
        m_{1,X}^{\mathrm{U}} = &  &  & \max        &  & m_{1,X}                                                                                          & \\
                               &  &  & \text{s.t.} &  & m_{\mu_j,X}+ \delta_{\mu_j,m_X} \geq \sum_{i=0}^{I} p_{\left(\mu_j|i\right)} m_{i,X},              \\
                               &  &  &             &  & m_{\mu_j,X}+ \delta_{\mu_j,m_X} \leq \sum_{i=0}^{I} p_{\left(\mu_j|i\right)} m_{i,X}               \\
                               &  &  &             &  & + N_{X}\left(1-\sum_{i=0}^{I} p_i\right) +\delta_{>I,X},                                           \\
                               &  &  &             &  & 0 \leq m_{i,X} \leq \min\left(p_iN_X+\delta_{i,X}, m_X \right),                                    \\
                               &  &  &             &  & \sum_{j=0}^J \delta_{\mu_j,m_X} =0,                                                                \\
                               &  &  &             &  & \delta^{\mathrm{L}}_{\mu_j,m_X}\leq \delta_{\mu_j,m_X} \leq \delta^{\mathrm{U}}_{\mu_j,m_X},       \\
                               &  &  &             &  & \delta^{\mathrm{L}}_{>I,X}\leq \delta_{>I,X} \leq \delta^{\mathrm{U}}_{>I,X},                      \\
                               &  &  &             &  & \delta^{\mathrm{L}}_{i,X}\leq \delta_{i,X} \leq \delta^{\mathrm{U}}_{i,X},                         \\
                               &  &  &             &  & \delta^{\mathrm{L}}_{\leq I,X}\leq \sum_{i=0}^I \delta_{i,X}\leq \delta^{\mathrm{U}}_{\leq I,X},   \\
                               &  &  &             &  & \sum_{i=0}^{I}\delta_{i,X}+\delta_{>I,X}=0                                                         \\
                               &  &  &             &  & \forall \, 0 \leq j \leq J,\; \forall \, 0 \leq i \leq I.
    \end{aligned}
\end{equation}

A lower bound $m^{\mathrm{L}}_{1,X}$ on the number of single-photon errors in the $X$ basis, $m_{1,X}$, that holds except with probability $\epsilon_{m,X}$ can also be obtained by minimizing the LP given by Eq.~\eqref{eq:LP_E}. Similarly, a lower bound $n_{1,X}^{\mathrm{L}}$ on the parameter $n_{1,X}$ that holds except with probability at most
\begin{equation}
    \begin{split}
        \epsilon_{1,X} & = \sum_{i = 0}^{I} \left(\epsilon^{\mathrm{U}}_{i,X}+\epsilon^{\mathrm{L}}_{i,X}\right)                                       \\
                       & + \sum_{j=0}^J\left(\epsilon^{\mathrm{U}}_{\mu_j,n_X}+\epsilon^{\mathrm{L}}_{\mu_j,n_X}\right)                                \\
                       & +\epsilon^{\mathrm{U}}_{\leq I,X}+\epsilon^{\mathrm{L}}_{\leq I,X}+\epsilon^{\mathrm{U}}_{>I,X}+\epsilon^{\mathrm{L}}_{> I,X}
    \end{split}
\end{equation}
is given by the LP obtained by changing the basis from $Z$ to $X$ in Eq.~(\ref{eq:LP_Z0}).

The parameters $m^{\mathrm{U}}_{1,X}$ and $n^{\mathrm{L}}_{1,X}$ can be used to estimate an upper bound $e_{1,X}^{\mathrm{U}}$ on $e_{1,X}$,  as
\begin{equation}
    \label{erroreqn}
    e_{1,X} \leq e_{1,X}^{\mathrm{U}}=  \frac{m_{1,X}^{\mathrm{U}}}{n_{1,X}^{\mathrm{L}}},
\end{equation}
that holds except with probability at most
\begin{equation}
    \begin{split}
        \epsilon_{e,X} & = \epsilon_{m,X}+\epsilon_{1,X}                                                                                                                                  \\
                       & = \sum_{i = 0}^{I} (\epsilon^{\mathrm{U}}_{i,X}+\epsilon^{\mathrm{L}}_{i,X}) + \sum_{j=0}^J(\epsilon^{\mathrm{U}}_{\mu_j,n_X}+\epsilon^{\mathrm{L}}_{\mu_j,n_X}) \\
                       & + \sum_{j=0}^J(\epsilon^{\mathrm{U}}_{\mu_j,m_X}+\epsilon^{\mathrm{L}}_{\mu_j,m_X})                                                                              \\
                       & +\epsilon^{\mathrm{U}}_{\leq I,X}+\epsilon^{\mathrm{L}}_{\leq I,X}+\epsilon^{\mathrm{U}}_{>I,X}+\epsilon^{\mathrm{L}}_{> I,X}.
    \end{split}
\end{equation}
As for the parameters $n^{\mathrm{U}}_{1,Z}$ and $n^{\mathrm{U}}_{1,X}$ in Eq.~(\ref{phiupperbound}), they can be evaluated by maximizing the  LP used to obtain the corresponding lower bounds. The failure probability is equal to that associated with the corresponding lower bound.

Now, we have all the ingredients needed to evaluate $\phi^{\mathrm{U}}_{1,Z}$. Comparing Eqs.~(\ref{phiproper}) and (\ref{phiupperbound}), we have that from Boole's inequality~\cite{bonferroni1936teoria},
\begin{equation}
    \phi_{1,Z}\leq \phi^{\mathrm{U}}_{1,Z}
\end{equation}
except with probability at most $\epsilon_{\mathrm{PE}}=\epsilon_{1,Z}+\epsilon_{e,X}+\epsilon_{\mathrm{S}}$.

\subsubsection{Security parameter allocation}
The total security parameter of the protocol, $\epsilon_{\mathrm{tot}}$, is  related to $\epsilon_{\mathrm{PE}}$, $\epsilon_{\mathrm{cor}}$, and $\epsilon_{\mathrm{PA}}$ via Eq.~\eqref{eq:eqtot}. For our simulations, we  set for simplicity $\epsilon_{\mathrm{cor}}= \epsilon_{\mathrm{PA}} = 10^{-3}\epsilon_{\mathrm{tot}}$ and allocate the remaining budget to the failure probability of the parameter estimation step, resulting in $\epsilon_{\mathrm{PE}} \approx \epsilon_{\mathrm{tot}}^2/4$. This failure probability $\epsilon_{\mathrm{PE}}$ must be further distributed among the $S_n$ independent statistical bounds used to construct the LPs. For simplicity, we set the failure probability for each individual constraint to $\epsilon_{\mathrm{ind}} = \epsilon_{\mathrm{PE}}/S_n$.

For a full implementation of the LPs described in Eqs.~\eqref{eq:LP_Z0} and \eqref{eq:LP_E}, considering a truncation level $I$ and $J+1$ intensities, the total number of statistical constraints includes:
$4(I+1)$ bounds for the photon number deviations $\delta_{i,Z}$ and $\delta_{i,X}$;
$4(J+1)$ bounds for the intensity count deviations $\delta_{\mu_j, n_Z}$ and $\delta_{\mu_j, n_X}$;
$2(J+1)$ bounds for the intensity error deviations $\delta_{\mu_j, m_X}$;
eight bounds for the truncation terms ($\delta_{>I,X}$, $\delta_{>I,Z}$, $\delta_{\leq I,X}$, and $\delta_{\leq I,Z}$); and one additional constraint for the hypergeometric bound. Summing these contributions for both bases, the total number of constraints scales as $S_n = 4I + 31$ for the relevant case $J=2$.
%For a typical truncation of $I=10$, this results in over 70 constraints, significantly penalizing $\epsilon_{\rm ind}$.

In the high-loss regime characteristic of geostationary Earth orbit (GEO) satellite links, the total number of detections is generally small. In this specific regime, it turns out that the explicit constraints on the individual photon number deviations $\delta_{i,\mathcal{B}}$ provide negligible tightening of the feasible region of $n_{i,\mathcal{B}}$ (and $m_{i,\mathcal{B}}$) while consuming a large portion of the epsilon budget. In addition, one can  use $\delta^{\mathrm{U}}_{>I,\mathcal{B}}$ directly instead of relying on the variable $\delta_{>I,\mathcal{B}}$. Hence, for our simulations, we employ a \emph{reduced} LP formulation that omits the  constraints on $\delta_{i,\mathcal{B}}$ and the lower constraint on $\delta_{>I,\mathcal{B}}$. This simplification is strictly conservative as it only relaxes the LP (potentially yielding a lower key rate) and reduces the constraint count to a constant $S_n = 21$, independent of $I$. This reduction in $S_n$ allows for a tighter $\epsilon_{\mathrm{ind}}$, improving the overall key rate performance in the finite-size regime.

%% file: journal/sections/appendix_detection_modes_and_asymmetric_passive_bb84.tex
% Combined appendix sections: active detection, passive detection, and asymmetric passive BB84.
% Improve line breaking for dense two-column appendix text.
\setlength{\emergencystretch}{1.2em}
\sloppy

% Include-safe local helpers used by the asymmetric passive appendix.
\providecommand{\rv}[1]{\bm{#1}} % random variable (bold), including subscripts/superscripts
\providecommand{\mc}{\mathrm{mc}}
\makeatletter
\@ifundefined{c@lemma}{\newtheorem{lemma}[theorem]{Lemma}}{}
\makeatother

\section{Asymmetric active BB84 receiver}
\label{sec:asym_active_BB84_rec}

Assuming unpolarized background light and an asymmetric active BB84 receiver with two detectors, the total mean number of background photons collected within a temporal window $\Delta t$ and spectral bandwidth $\Delta\lambda$ is $\bar{n}_{\mathrm{B}}$ (given in Eq.~\eqref{eq:bgphot}). Accounting for the receiver efficiency, the mean photon number admitted to the detection path is $\eta_{\mathrm{R}}\eta_{\mathrm{D}}\bar{n}_{\mathrm{B}}$, where $\eta_{\mathrm{R}}$ is the receiver optics/filter transmission and $\eta_{\mathrm{D}}$ is the detection efficiency of Bob's detectors, which we assume is the same for both of them.

Because the background field is unpolarized and split equally by the polarizing beam splitter (PBS), each detector receives on average half the number of  photons. The resulting per-detector per-gate noise-click probability is therefore
\begin{equation}
    p_{\mathrm{noise}} = 1 - \exp\!\left(-\frac{\eta_{\mathrm{R}} \eta_{\mathrm{D}}\bar{n}_{\mathrm{B}}}{2}\right)\,(1 - p_{\mathrm{dark}}),
    \label{eq:pec}
\end{equation}
where $p_{\mathrm{dark}}$ is the intrinsic dark-count probability per gate of the detector. We assume that $p_{\mathrm{dark}}$ is the same for both detectors.
% The exponential form arises from the Poisson statistics of photon arrivals: the probability of detecting no background photons is $\exp(-\eta_R\bar{n}_{\rm B}/2)$, so the complement gives the probability of one or more background detections.

Given a signal intensity $\mu$, the joint probability of a detection event arises from either a signal or noise.
The probability that at least one signal photon is detected is $1 - \exp(-\eta_{\mathrm{sys}} \mu)$, where $\eta_{\mathrm{sys}}$ is the overall system transmission efficiency, given in Eq.~\eqref{eq:totloss}.
The total probability of a detection (excluding afterpulses) at Bob's receiver is therefore
\begin{equation}
    D_\mu = 1 - (1 - p_{\mathrm{noise}})^2\,\exp(-\eta_{\mathrm{sys}} \mu).
    \label{eq:Dmu}
\end{equation}

Afterpulses are spurious detection clicks triggered by carriers trapped during a previous detection event; consequently, the probability of an afterpulse in a certain time window depends on whether a detection occurred in the preceding window. Given a previous click occurring with probability $D_\mu$, an afterpulse is produced with conditional probability $p_{\mathrm{ap}}$.
The total probability of no afterpulse click in the current window is therefore $1 - D_\mu p_{\mathrm{ap}}$.
This modifies the no-click probabilities for each of the two detectors.

For the \emph{correct} detector (i.e., the one aligned with Alice's transmitted polarization state), the no-click probability from signal and dark counts is
\begin{equation}
    p_{\mathrm{nc},c} = (1 - p_{\mathrm{noise}})\,\exp\!\left[-\eta_{\mathrm{sys}}\mu\, (1 - e_{\mathrm{mis}})\right],
\end{equation}
where $e_{\mathrm{mis}}$ is the optical misalignment error rate.
This latter parameter represents the fraction of signal detections that yields an incorrect bit value despite the detection of a genuine photon.
In satellite-based quantum key distribution (QKD) channels, this intrinsic error arises from several physical mechanisms:
(i) imperfect alignment between the transmitter and receiver polarization bases due to residual calibration errors;
(ii) time-dependent polarization rotation caused by relative satellite motion and the changing orientation of the optical link;
(iii) finite polarization extinction ratio and contrast of beam splitters and polarizing optics;
and (iv) birefringence introduced by telescope mirrors, optical windows, and fibers along the path.
These effects cause deviations between the transmitted and measured polarization states, even when noise is negligible. See Appendix~\ref{sec:misalignment} for a  discussion.

Including the afterpulse contribution, the total no-click probability becomes
\begin{equation}
    P_{\mathrm{nc},c} = p_{\mathrm{nc},c}\,(1 - D_\mu p_{\mathrm{ap}}).
    \label{eq:Pnc_correct}
\end{equation}

Similarly, for the \emph{wrong} detector (i.e., the one associated with a signal orthogonal to Alice's transmitted state), the no-click probability is
\begin{equation}
    p_{\mathrm{nc},w} = (1 - p_{\mathrm{noise}})\,\exp\!\left(-\eta_{\mathrm{sys}}\, \mu\, e_{\mathrm{mis}}\right),
\end{equation}
with the total no-click probability (including the effect of afterpulses) given by
\begin{equation}
    P_{\mathrm{nc},w} = p_{\mathrm{nc},w}\,(1 - D_\mu p_{\mathrm{ap}}).
    \label{eq:Pnc_wrong}
\end{equation}

The corresponding click probabilities are $P_{\mathrm{c},c} = 1 - P_{\mathrm{nc},c}$ and $P_{\mathrm{c},w} = 1 - P_{\mathrm{nc},w}$.

The detection yield (i.e., the probability of observing a  valid detection event in a given time window at Bob's receiver) is given by
\begin{equation}
    Q_\mu = 1 - P_{\mathrm{nc},c}\,P_{\mathrm{nc},w},
    \label{eq:Qmu_detailed}
\end{equation}
accounting for the possibility that either or both detectors click.

Errors arise when either (i) the wrong detector clicks while the correct detector does not (always an error), or (ii) both detectors click simultaneously, in which case the bit is assigned randomly, resulting in an error probability of $1/2$.
The error probability per detection is therefore given by
\begin{equation}
    E_\mu = P_{\mathrm{c},w}\,P_{\mathrm{nc},c} + \tfrac{1}{2}\,P_{\mathrm{c},w}\,P_{\mathrm{c},c}
    = \tfrac{1}{2}\,P_{\mathrm{c},w}\,(1 + P_{\mathrm{nc},c}),
    \label{eq:Emu_detailed}
\end{equation}
with the quantum bit error rate (QBER) given by the ratio $E_\mu/Q_\mu$.

\section{Asymmetric passive BB84 protocol}
\label{sec:bb84_passive_asym}

Here, we particularize the methods introduced in Ref. \cite{wang2025phase} to the case where Bob uses an asymmetric passive receiver with a fixed beam-splitting ratio $s\in(0,1)$ to the $X$ arm (and $1-s$ toward the $Z$ arm), together with four threshold memoryless single-photon detectors with identical detection efficiency $\eta_{\mathrm{D}}$ and per-gate dark-count probability $p_{\mathrm{dark}}$. The key difficulty for asymmetric passive receivers remains that the probability of being routed to the $X$ or $Z$ arm depends on the photon number entering Bob's device and is therefore, in general, under Eve's control. As a result, the standard BB84 random-sampling argument for phase-error estimation does not apply here directly.

The security proof strategy presented in Ref. \cite{wang2025phase} circumvents this by working within Bob's side photon-number subspaces. Because Bob's POVM is block-diagonal in the Fock basis, one may insert at Bob's receiver a virtual QND measurement without changing any observed classical statistics. We denote the QND measurement outcome by $\rv{\tilde{\imath}}\in\tilde{A}$, where $\tilde{A}=\{0, 1, {>}1\}$. The final key length calculation requires two ingredients: a lower bound on the number of key-generation rounds when $\rv{\tilde{\imath}}=1$, and an upper bound on the phase-error rate on that same set of rounds. We reuse the same protocol structure and security reduction from \cite{wang2025phase}, and adapt it to the scenario considered here: (a) fixed detector parameters $(s,\eta_{\mathrm{D}},p_{\mathrm{dark}})$, (b) explicit conditioning on the photon number subspace that Bob receives, and (c) exact binomial/hypergeometric inversions for the finite-size bounds used to determine the key length. Table~\ref{tab:asym_passive_notation} summarizes the notation used in this section.

\begin{table}[htbp]
    \centering
    \caption{Notation used in the asymmetric passive finite-key analysis. Here \(\mathcal{B}\in\{X,Z\}\), \(k\) is the intensity setting, \(1_A\) denotes Alice-emitted single photons, \(1_B\) denotes the single-photon outcome in the virtual quantum non-demolition (QND) measurement on Bob's side, and tildes denote virtual quantities.}
    \label{tab:asym_passive_notation}
    \renewcommand{\arraystretch}{1.1}
    \scriptsize
    \begin{tabularx}{\columnwidth}{|>{\raggedright\arraybackslash}p{0.31\columnwidth}|>{\raggedright\arraybackslash}X|}
        \hline
        \textbf{Symbol}                         & \textbf{Meaning}                                                                                                                                                                                                                                \\
        \hline
        \(\rv{n_{k,\mathcal{B}}}\)              & Single-click counts in basis \(\mathcal{B}\) at intensity \(k\).                                                                                                                                                                                \\
        \hline
        \(\rv{m_{k,X}}\)                        & Error counts in \(X\)-basis test rounds at intensity \(k\).                                                                                                                                                                                     \\
        \hline
        \(\rv{n_{1_A,\mathcal{B}}}\)            & Single-click counts in \(\mathcal{B}\)-basis rounds in which Alice emitted a single photon; superscript \(L/U\) denotes lower/upper-bound counts.                                                                                               \\
        \hline
        \(\rv{m_{1_A,X}}\)                      & Error counts in \(X\)-basis test rounds in which Alice emitted a single photon; superscript \(L/U\) denotes lower/upper-bound error counts.                                                                                                     \\
        \hline
        \(\rv{\tilde n_{1_A,1_B,\mathcal{B}}}\) & Virtual counts of \(\mathcal{B}\)-basis single-click rounds in which Alice emitted one photon (\(1_A\)) and Bob's virtual QND measurement has a single-photon outcome (\(1_B\)); untilded superscript \(L/U\) denotes lower/upper-bound counts. \\
        \hline
        \(\rv{\tilde m_{1_A,1_B,X}}\)           & Virtual error counts in \(X\)-basis test rounds in which Alice emitted one photon and Bob's virtual QND measurement has a single-photon outcome; untilded superscript \(L/U\) denotes lower/upper-bound error counts.                           \\
        \hline
        \(\rv{\tilde \phi_{1_A,1_B,Z}}\)        & Virtual phase-error rate for \(Z\)-basis rounds in which Alice emitted one photon and Bob's virtual QND measurement has a single-photon outcome; untilded superscript \(U\) denotes its upper bound.                                            \\
        \hline
        \(\rv{\tilde n_0}\)                     & Counts of Bob-side zero-photon QND outcomes; \(\rv{\tilde n_{0,\mathcal{B}}}\) restricts to \(\mathcal{B}\)-basis single-click rounds.                                                                                                          \\
        \hline
        \(\rv{\tilde n_{>1}}\)                  & Counts of Bob-side more-than-one-photon QND outcomes; adding the subscript ``\(\mc\)'' restricts to multi-click outcomes.                                                                                                                       \\
        \hline
        \(\rv{n_{\mc}}\)                        & Observed total multi-click counts, where two or more detectors click in the same round.                                                                                                                                                         \\
        \hline
    \end{tabularx}
    \renewcommand{\arraystretch}{1.0}
\end{table}

\subsection{Protocol description}
\label{sec:asym_passive_description}
Like in the case of the asymmetric active receiver analyzed in Appendix~\ref{sec:protocol_appendix}, we shall consider that the $Z$ basis is used for key generation and the $X$ basis for testing.

Alice fixes the basis probabilities $p_X^A$ and $p_Z^A:=1-p_X^A$, and a set of decoy intensities $\mathcal{K}:=(\mu_0,\ldots,\mu_J)$ with probabilities $\{p_{\mu_0},\ldots,p_{\mu_J}\}$. Bob fixes the beam-splitting ratio $s\in(0, 1)$. In addition to the privacy amplification and error verification failure  parameters $\epsilon_{\mathrm{PA}}$ and $\epsilon_{\mathrm{cor}}$, respectively,  we allocate the following failure budgets: $\epsilon_{0Z}$ (associated with the vacuum contribution in the key set), $\epsilon_{0X}$ (associated with the vacuum contribution in the test set), $\epsilon_{\mathrm{PNE}}$ (associated with the photon-number estimation via multi-click events), and $\epsilon_{\mathrm{S}}$ (associated with the hypergeometric sampling for phase errors). Like in Appendix~\ref{sec:protocol_appendix}, we obtain the decoy-state bounds on Alice-emitted photon-number states using LP. For convenience, we aggregate all statistical failure terms entering the LP constraints into a single failure parameter $\epsilon_{\mathrm{decoy}}$. In addition, we combine this into a single parameter-estimation failure probability
\begin{equation}
    \epsilon_{\mathrm{PE}}
    :=
    \epsilon_{0Z} + \epsilon_{0X} + \epsilon_{\mathrm{PNE}}  + \epsilon_{\mathrm{decoy}} + \epsilon_{\mathrm{S}}.
    \label{eq:epsilon_PE_def}
\end{equation}

Below, tilded quantities denote unobserved random variables conditioned on the virtual QND measurement outcomes, denoted by the subscripts $0,1, {>}1$. For both tilded and untilded quantities, we distinguish random variables from their realized values using bold and plain symbols, respectively.

The protocol runs as follows. For $r=1,\ldots,N$, steps 1 and 2 below are repeated.
\begin{enumerate}
    \item \emph{State preparation}:
          Alice chooses a bit value $\rv{y_r}$ uniformly at random, a basis value $\rv{a_r}\in\{Z,X\}$ with probability $p_{\rv{a_r}}^A$,
          and the intensity value $\rv{k_r}\in\mathcal{K}$ with probability $p_{\rv{k_r}}$.
          She generates a PRWCP encoded with the settings $(\rv{a_r},\rv{y_r})$ at intensity $\rv{k_r}$ and sends it to Bob.

    \item \emph{Asymmetric passive  measurement}:
          Bob’s passive module outputs a click pattern on its four detectors.
          Define the coarse-grained outcome
          $\rv{o_r}\in\{(X,0),(X, 1),(Z,0),(Z, 1),\emptyset,\mc\}$ where
          $\emptyset$ refers to no click,
          $(X,y)$ corresponds to exactly one click in the $X$ arm giving the bit value $y\in\{0, 1\}$,
          $(Z,y)$ indicates exactly one click in the $Z$ arm giving the bit value $y\in\{0, 1\}$ and
          $\mc$ refers to two or more detectors producing a click.
          If $\rv{o_r}\in\{X,Z\}\times\{0, 1\}$, write $\rv{o_r}=(\rv{b_r},\rv{y_r'})$ with $\rv{b_r}\in\{X,Z\}$ and $\rv{y_r'}\in\{0, 1\}$; otherwise set $\rv{y_r'}=\emptyset$.

    \item \emph{Sifting}:
          Alice reveals $\{\rv{a_r}\}$ and $\{\rv{k_r}\}$.
          Bob reveals whether the round outcome is $\emptyset$ (no click), $\mc$ (multi-click), or a valid single click. In addition, in this latter case he reveals the basis label $\rv{b_r}\in\{X,Z\}$.
          They define the key/test sets as  follows
          \begin{equation}
              \begin{aligned}
                  \mathcal{Z}^{\mathrm{key}}
                   & := \{r:\ \rv{a_r}=Z,                 \\
                   & \qquad \rv{o_r}\in\{(Z,0),(Z,1)\}\}, \\
                  \mathcal{X}^{\mathrm{test}}
                   & := \{r:\ \rv{a_r}=X,                 \\
                   & \qquad \rv{o_r}\in\{(X,0),(X,1)\}\}.
              \end{aligned}
          \end{equation}

          that is, both sets contain basis-matched single-click rounds.
          In addition, they define the intensity-resolved subsets
          $\mathcal{Z}^{\mathrm{key}}_{k}:=\{r\in\mathcal{Z}^{\mathrm{key}}:\rv{k_r}=k\}$ and
          $\mathcal{X}^{\mathrm{test}}_{k}:=\{r\in\mathcal{X}^{\mathrm{test}}:\rv{k_r}=k\}$ with sizes
          $n_{k,Z}:=|\mathcal{Z}^{\mathrm{key}}_{k}|$ and $n_{k,X}:=|\mathcal{X}^{\mathrm{test}}_{k}|$.
          They also record the total multi-click counts as
          \begin{equation}
              \rv{n_{\mc}}:=|\{r:\ \rv{o_r}=\mc\}|.
          \end{equation}

    \item \emph{Parameter estimation} (PE):
          For each $k\in\mathcal{K}$, Alice and Bob disclose the bit values on $\mathcal{X}^{\mathrm{test}}_{k}$ and compute the test-bit errors $\rv{m_{k,X}}$. Using the methods presented in Appendix~\ref{sec:estimation}, they calculate $\rv{n^L_{1_A,1_B,Z}}$ (lower bound on the key rounds in which Alice emitted and Bob's virtual QND measurement output a single photon) and $\rv{\phi_{1_A,1_B,Z}^{U}}$ (upper bound on the phase error rate in those rounds), which depend on the requested PE error, $\epsilon_{\mathrm{PE}}$.
          % They also randomly choose a  subset $\mathcal{Z}^{\rm EC}\subseteq \mathcal{Z}^{\rm keep}$ to estimate the bit error rate. That is, they compute the observed $Z$-basis error rate $e_Z$ on $\mathcal{Z}^{\rm EC}$, and define the post-revelation key set
          % \begin{equation}
          %     \mathcal{Z}^{\rm key}:=\mathcal{Z}^{\rm keep}\setminus \mathcal{Z}^{\rm EC}.
          % \end{equation}

          % For each intensity $k$, define $\mathcal{Z}^{\rm key}_{k}:=\{r\in\mathcal{Z}^{\rm key}:\rv{k_r}=k\}$ and $\rv {n_{k,Z}}:=|\mathcal{Z}^{\rm key}_{k}|$.

    \item \emph{Variable-length decision}:
          Alice and Bob calculate the number of bits to be used for error correction, $\lambda_{\mathrm{EC}}$. The final key $\ell$ is given by
          \begin{align}
              \rv{\ell}
               & :=
              \max\Biggl\{
              \Bigl\lfloor
              \rv{n^L_{1_A,1_B,Z}}\bigl[1-h(\rv{\phi_{1_A,1_B,Z}^{U}})\bigr]
              - \lambda_{\mathrm{EC}} \notag                                                              \\
               & \qquad - \log_2\!\left(\frac{1}{2\epsilon^2_{\mathrm{PA}}\epsilon_{\mathrm{cor}}}\right)
              \Bigr\rfloor
              ,\,0\Biggr\}.
              \label{eq:keylen_passive_exact}
          \end{align}

          Aborting is modeled as producing a key of length zero. If they do not abort, they proceed to steps 6, 7 and 8.

    \item{}\emph{Error correction} (EC): Alice and Bob implement a one-way EC protocol that reveals  $\lambda_{\mathrm{EC}}$ bits  of syndrome information.

    \item{}\emph{Error verification} (EV): Alice and Bob perform an EV step based on 2-universal hashing, using EV tags of length $\log(2/\epsilon_{\mathrm{cor}})$ bits at most.  If the EV tags do not match, they abort the protocol. Otherwise, they proceed to privacy amplification.

    \item{}\emph{Privacy amplification} (PA): Alice and Bob perform a PA step based on 2-universal hashing, with failure probability at most $\epsilon_{\mathrm{PA}}$. In doing so they obtain an $\epsilon_{\mathrm{tot}}$-secure output key of length $\ell$, with
          \begin{equation}
              \label{eq:asymeqtot}
              \epsilon_{\mathrm{tot}} = 2\sqrt{\epsilon_{\mathrm{PE}}}+\epsilon_{\mathrm{PA}}+\epsilon_{\mathrm{cor}}.
          \end{equation}

\end{enumerate}

\subsection{Decoy-state analysis}
\label{sec:estimation}
Let $\rv{n_{1_A,Z}}$ and $\rv{n_{1_A,X}}$ be the random variables counting the key/test rounds in which Alice emitted a single-photon, and $\rv{m_{1_A,X}}$  the  random variable counting the $X$-basis bit errors among the single-photon rounds. Using the LPs presented in Appendix~\ref{dsa}, Alice and Bob estimate lower and upper bounds of these random variables,
\begin{align}
     & \rv{n_{1_A,Z}^{L}} \le \rv{n_{1_A,Z}} \le \rv{n_{1_A,Z}^{U}}, \\
     & \rv{n_{1_A,X}^{L}} \le \rv{n_{1_A,X}} \le \rv{n_{1_A,X}^{U}}, \\
     & \rv{m_{1_A,X}^{L}} \le \rv{m_{1_A,X}} \le \rv{m_{1_A,X}^{U}},
\end{align}
that hold except with joint failure probability at most $\epsilon_{\mathrm{decoy}}$.

\medskip

\paragraph{Upper bound on the vacuum contributions.}
Let $\rv{\tilde n_{0}}$ denote the  number of rounds in which Bob's virtual QND measurement outputs zero photons, and let $\rv{\tilde n_{0,Z}}$ denote the  number of events in the key set arising from those vacuum rounds.
Let $q_Z$ denote the vacuum single-click parameter, i.e., the maximum probability that an event is assigned to the key set when the virtual QND measurement outputs zero photons. When all detectors have the same dark count probability $p_{\mathrm{dark}}$, we have that
\begin{equation}
    q_Z = p_Z^A\,2p_{\mathrm{dark}}(1-p_{\mathrm{dark}})^3.
    \label{eq:qZ_identical}
\end{equation}
Then, it is shown in Ref.~\cite{wang2025phase} that for all $\tilde n_{0}\in\{0,\ldots,N\}$ and all $b\in\{\lceil \tilde n_0 q_Z \rceil+1,\ldots,N\}$, it is satisfied that
\begin{equation}
    \Pr\!\left(\rv{\tilde n_{0,Z}} \ge b \mid \rv{\tilde n_{0}}=\tilde n_{0}\right)
    \le
    \Pr\!\left[\mathrm{Bin}(\tilde n_{0},q_Z)\ge b\right].
    \label{eq:vacuum_key_domination}
\end{equation}
Since $\rv{\tilde n_{0}}\le N$, we have that $\Pr\left[\mathrm{Bin}(\tilde n_{0},q_Z)\ge b\right]\le \Pr\left[\mathrm{Bin}(N,q_Z)\ge b\right]$, and therefore for any $b\in\{\lceil N q_Z\rceil+1,\ldots,N\}$,
\begin{multline}
    \Pr\!\left(\rv{\tilde n_{0,Z}} \ge b\right)
    \le
    \Pr\!\left[\mathrm{Bin}(N,q_Z)\ge b\right] \\
    =
    \sum_{i=b}^{N}\binom{N}{i}q_Z^{i}(1-q_Z)^{N-i}.
    \label{eq:vacuum_key_tail}
\end{multline}
Fixing a target failure probability $\epsilon_{0Z}$, one can invert the exact upper tail of the binomial distribution and define the smallest threshold $b$ for which $\Pr\left[\mathrm{Bin}(N,q_Z)\ge b\right]\le \epsilon_{0Z}$. This yields the exact quantile

\begin{multline}
    B_0^{\mathrm{ex}}(N,q,\epsilon)
    :=
    \min\Biggl\{
    b\in\{\lceil N q\rceil+1,\dots,N+1\}: \\
    \sum_{i=b}^{N}\binom{N}{i}q^{i}(1-q)^{N-i}
    \le \epsilon
    \Biggr\}.
    \label{eq:B0_exact_def}
\end{multline}
By definition, $B_0^{\mathrm{ex}}(N,q,\epsilon)$ is the smallest integer threshold such that $\Pr[\mathrm{Bin}(N,q)\ge B_0^{\mathrm{ex}}(N,q,\epsilon)]\le \epsilon$. Combining this with Eq.~\eqref{eq:vacuum_key_tail}, one obtains
\begin{equation}
    \Pr\!\left[\rv{\tilde n_{0,Z}} \ge B_0^{\mathrm{ex}}(N,q_Z,\epsilon_{0Z})\right]\le \epsilon_{0Z}.
    \label{eq:vacuum_key_upper_exact}
\end{equation}
The analogous bound for the $X$-arm single-click operator can be obtained by exchanging $Z\leftrightarrow X$, taking into account that
\begin{equation}
    q_X = p_X^A\,2p_{\mathrm{dark}}(1-p_{\mathrm{dark}})^3.
    \label{eq:qX_identical}
\end{equation}
Precisely, let $\rv{\tilde n_{0,X}}$ denote the random variable counting the $X$-arm single-click outcomes arising from Bob-side vacuum signals. We have that
\begin{equation}
    \Pr\!\left[\rv{\tilde n_{0,X}} \ge B_0^{\mathrm{ex}}(N,q_X,\epsilon_{0X})\right]\le \epsilon_{0X}.
    \label{eq:vacuum_test_upper_exact}
\end{equation}

\medskip

\paragraph{Upper bound on the multi-photon contributions.}

Let $\rv{\tilde n_{>1}}$ denote the number of pulses with more than one photon that arrive at Bob's side, and let $\rv{n_{\mc}}$ denote the number of multi-click outcomes. Also, let $\lambda_{\min}\in(0,1]$ be a lower bound on the multi-click probability conditioned on Bob receiving a pulse with more than one photon. Precisely, one can take
\begin{equation}
    \lambda_{\min} := \eta_{\mathrm{D}}^2\,2s(1-s),
    \label{eq:lambda_min_closed_form}
\end{equation}
which corresponds to the case where Bob receives a two photon pulse~\cite{wang2025phase}.
Let $\rv{\tilde n_{\mc,>1}}$ be the  number of multi-click outcomes originating from the ${>}1$-photon subspace. This quantity trivially satisfies
\begin{equation}
    \rv{n_{\mc}} \ge \rv{\tilde n_{\mc,>1}}.
    \label{eq:mc_domination}
\end{equation}
In this context it can be shown that for all
$t\in\{0,\ldots,N\}$ and $k\in\{0,\ldots,N\}$, we have \cite{wang2025phase}
\begin{align}
    \Pr(\rv{n_{\mc}}\le k \mid \rv{\tilde n_{>1}}=t)
     & \le
    \Pr(\rv{\tilde n_{\mc,>1}}\le k \mid \rv{\tilde n_{>1}}=t) \notag \\
     & \le
    \Pr\!\left[\mathrm{Bin}(t,\lambda_{\min})\le k\right] \notag      \\
     & =
    \sum_{i=0}^{k} \binom{t}{i}\lambda_{\min}^{i} (1-\lambda_{\min})^{t-i}.
    \label{eq:pne_binom_cdf}
\end{align}

Now define the exact inversion (i.e., an upper confidence bound on the number of trials) by
\begin{multline}
    V^{\mathrm{ex}}(k,\epsilon,\lambda)
    := \\
    \min\!\Biggl(
    \left\{
    t\in\{k,\dots,N\}:\;
    \sum_{i=0}^{k} \binom{t}{i}\lambda^{i}(1-\lambda)^{t-i}
    \le \epsilon
    \right\} \\
    \cup \{N+1\}
    \Biggr).
    \label{eq:V_exact_def}
\end{multline}
In words, $V^{\mathrm{ex}}(k,\epsilon,\lambda)$ is the smallest trial count whose binomial cumulative distribution function at $k$ is at most $\epsilon$; any larger value is also valid but looser. Now we state an exact upper bound on the multi-photon contributions. The bound is exact in the sense that it inverts the binomial tail directly rather than relying on looser exponential bounds, i.e., Chernoff bounds.
\begin{lemma}[Exact upper bound on the multi-photon contributions]
    \label{lem:pne_exact}
    With the parameter $V^{\mathrm{ex}}(k,\epsilon,\lambda)$ as defined in Eq.~\eqref{eq:V_exact_def}, we have that
    \begin{equation}
        \Pr\!\left[\rv{\tilde n_{>1}} \ge V^{\mathrm{ex}}(\rv{n_{\mc}},\epsilon_{\mathrm{PNE}},\lambda_{\min})\right]\le \epsilon_{\mathrm{PNE}}.
        \label{eq:pne_upper_exact}
    \end{equation}
\end{lemma}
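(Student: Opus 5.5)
The plan is to condition on the value $t$ of $\rv{\tilde n_{>1}}$, use the domination of Eq.~\eqref{eq:pne_binom_cdf} together with the monotonicity of the binomial CDF in the number of trials, and then average over $t$. Write $V(k):=V^{\mathrm{ex}}(k,\epsilon_{\mathrm{PNE}},\lambda_{\min})$. First I will note that $V$ is nondecreasing in $k$. If $t$ is feasible for $k$, i.e.\ $t\ge k$ and $\Pr[\mathrm{Bin}(t,\lambda)\le k]\le\epsilon$, then for $k'\le k$ we have $t\ge k'$ and $\Pr[\mathrm{Bin}(t,\lambda)\le k']\le\Pr[\mathrm{Bin}(t,\lambda)\le k]\le\epsilon$. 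So the feasible set for $k$ is contained in the feasible set for $k'$, and $V(k')\le V(k)$; the fallback value $N+1$ is consistent with this.

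Next I will rewrite the event. For a fixed $t\in\{0,\dots,N\}$, $\{V(k)\le t\}$ holds exactly when $k\le K(t)$, where
\begin{equation}
K(t):=\max\{k: V(k)\le t\},
\end{equation}
with the convention $K(t)=-1$ if no such $k$ exists. This uses the monotonicity just proved. By the definition of $V$ as a minimum over a feasible set, $V(K(t))\le t$ means that some $t'\le t$ satisfies $\Pr[\mathrm{Bin}(t',\lambda)\le K(t)]\le\epsilon$. Since $\Pr[\mathrm{Bin}(t,\lambda)\le k]$ is nonincreasing in $t$ (more trials stochastically increase the count, e.g.\ by coupling), this gives $\Pr[\mathrm{Bin}(t,\lambda_{\min})\le K(t)]\le\epsilon_{\mathrm{PNE}}$. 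The same conclusion also follows directly if one only uses feasibility of $t$ itself.

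Then, conditioning on $\rv{\tilde n_{>1}}=t$,
\begin{equation}
\Pr\!\left[V(\rv{n_{\mc}})\le t \mid \rv{\tilde n_{>1}}=t\right]=\Pr\!\left[\rv{n_{\mc}}\le K(t)\mid \rv{\tilde n_{>1}}=t\right]\le \Pr\!\left[\mathrm{Bin}(t,\lambda_{\min})\le K(t)\right]\le\epsilon_{\mathrm{PNE}},
\end{equation}
where the first inequality is Eq.~\eqref{eq:pne_binom_cdf}. The case $K(t)=-1$ is trivial because the probability is then zero. Summing over $t$ with weights $\Pr(\rv{\tilde n_{>1}}=t)$ yields $\Pr[\rv{\tilde n_{>1}}\ge V(\rv{n_{\mc}})]\le\epsilon_{\mathrm{PNE}}$. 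Rounds with $\rv{\tilde n_{>1}}\le N<N+1$ never violate the bound through the fallback value.

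The main obstacle I expect is the care needed around Eq.~\eqref{eq:pne_binom_cdf}. It is stated only conditionally and is established in Ref.~\cite{wang2025phase} via an adversarial, possibly non-i.i.d.\ domination argument, so I will take it as given. I also need to check the stochastic monotonicity of the binomial CDF in $t$ and the edge cases $k>t$ (where $V(k)>t$ automatically) and $V=N+1$.
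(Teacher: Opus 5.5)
Your proof is correct and follows essentially the same route as the paper's: condition on $\rv{\tilde n_{>1}}=t$, rewrite the event $\{t\ge V^{\mathrm{ex}}(\rv{n_{\mc}},\epsilon_{\mathrm{PNE}},\lambda_{\min})\}$ as $\{\rv{n_{\mc}}\le \text{threshold}(t)\}$, apply Eq.~\eqref{eq:pne_binom_cdf}, and average over $t$. The only difference is minor. The paper defines the threshold directly as the largest $k$ with $\Pr[\mathrm{Bin}(t,\lambda_{\min})\le k]\le\epsilon_{\mathrm{PNE}}$, so it needs only an inclusion rather than your exact equivalence, and hence not the monotonicity of $V$ in $k$. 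Conversely, your explicit use of the monotonicity of the binomial CDF in the number of trials makes precise a step the paper passes over as ``by definition of $V^{\mathrm{ex}}$''.
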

\begin{proof}
    Fix any $\tilde n_{>1}\in\{0,\ldots,N\}$. Define
    \begin{multline}
        k_{\tilde n_{>1}}:=\max\Biggl\{k\in\{-1,\ldots,N\}: \\
        \Pr[\mathrm{Bin}(\tilde n_{>1},\lambda_{\min})\le k]\le \epsilon_{\mathrm{PNE}}\Biggr\},
    \end{multline}
    and let $F_{\tilde n_{>1}}(k):=\Pr[\mathrm{Bin}(\tilde n_{>1},\lambda_{\min})\le k]$. By definition of $V^{\mathrm{ex}}$, for every integer $k$,
    \begin{multline}
        \tilde n_{>1}\ge V^{\mathrm{ex}}(k,\epsilon_{\mathrm{PNE}},\lambda_{\min}) \\
        \Rightarrow F_{\tilde n_{>1}}(k)\le \epsilon_{\mathrm{PNE}}
        \Rightarrow k\le k_{\tilde n_{>1}}.
    \end{multline}
    Hence
    \begin{equation}
        \left\{\tilde n_{>1}\ge V^{\mathrm{ex}}(\rv{n_{\mc}},\epsilon_{\mathrm{PNE}},\lambda_{\min})\right\}
        \subseteq
        \left\{\rv{n_{\mc}}\le k_{\tilde n_{>1}}\right\}.
    \end{equation}
    Conditioning on $\rv{\tilde n_{>1}}=\tilde n_{>1}$ and using Eq.~\eqref{eq:pne_binom_cdf}, we obtain
    \begin{align}
         & \Pr\!\left[\tilde n_{>1}\ge V^{\mathrm{ex}}(\rv{n_{\mc}},\epsilon_{\mathrm{PNE}},\lambda_{\min})\mid \rv{\tilde n_{>1}}=\tilde n_{>1}\right] \notag \\
         & \quad\le \Pr[\rv{n_{\mc}}\le k_{\tilde n_{>1}}\mid \rv{\tilde n_{>1}}=\tilde n_{>1}] \notag                                                         \\
         & \quad\le \Pr[\mathrm{Bin}(\tilde n_{>1},\lambda_{\min})\le k_{\tilde n_{>1}}]
        \le \epsilon_{\mathrm{PNE}}.
    \end{align}
    Finally taking the expectation over $\rv{\tilde n_{>1}}$ yields Eq.~\eqref{eq:pne_upper_exact}.
\end{proof}

\medskip
\noindent
\paragraph{Lower bound on the single-photon key rounds.}
Let $\rv{\tilde n_{1_A,1_B,Z}}$ denote the  random variable counting the number of rounds in which simultaneously Alice emits a single-photon pulse, Bob's virtual QND measurement outputs a single-photon pulse, and the round is kept and assigned to the $Z$-basis key set. A lower bound on this random variable can be obtained by subtracting worst-case vacuum and multi-photon contributions from a lower bound on $\rv{n_{1_A,Z}}$ obtained with the decoy-state method \cite{wang2025phase}. The vacuum contribution is controlled by the vacuum single-click parameter $q_Z$, while the multi-photon contribution is controlled by the observed multi-click count $\rv{n_{\mc}}$ together with the lower bound $\lambda_{\min}$ on the multi-click probability conditioned on Bob receiving more than one photon. Below we replace the Hoeffding relaxations employed in Ref.~\cite{wang2025phase} by the exact bounds provided by Eqs.~\eqref{eq:vacuum_key_upper_exact} and \eqref{eq:pne_upper_exact}. In particular define the exact-statistics by
\begin{multline}
    \rv{n^L_{1_A,1_B,Z}}
    :=
    \max\!\Bigl\{0,\; \rv{n_{1_A,Z}^{L}} - B_0^{\mathrm{ex}}(N,q_Z,\epsilon_{0Z}) \\
    - V^{\mathrm{ex}}(\rv{n_{\mc}},\epsilon_{\mathrm{PNE}},\lambda_{\min}) \Bigr\},
    \label{eq:B1_exact}
\end{multline}
where $n_{1_A,Z}^L$ is the lower bound on $n_{1_A,Z}$.
The two subtracted terms are explicit upper bounds on the undesired contributions (vacuum and multi-photon rounds) that are not certified by the single-photon sampling argument.
This means that, by a union bound over Eqs.~\eqref{eq:vacuum_key_upper_exact},  and \eqref{eq:pne_upper_exact}, and the decoy-state LP failure probability, we have
\begin{equation}
    \Pr\!\left(\rv{\tilde n_{1_A,1_B,Z}} < \rv{n^L_{1_A,1_B,Z}}\right)\le \epsilon_{0Z}+\epsilon_{\mathrm{PNE}}+\epsilon_{\mathrm{decoy}}.
    \label{eq:B1_exact_fail}
\end{equation}

For the sampling argument we also need bounds on the corresponding  test rounds. Let $\rv{\tilde n_{1_A,1_B,X}}$ denote the  random variable counting the number of rounds in which simultaneously Alice emits a single-photon pulse, Bob's virtual QND measurement outputs a single-photon pulse, and the round is kept and assigned to the $X$-basis test set. Define
\begin{multline}
    \rv{n_{1_A,1_B,X}^{L}}
    :=
    \max\!\Bigl\{0,\; \rv{n_{1_A,X}^{L}} - B_0^{\mathrm{ex}}(N,q_X,\epsilon_{0X}) \\
    - V^{\mathrm{ex}}(\rv{n_{\mc}},\epsilon_{\mathrm{PNE}},\lambda_{\min}) \Bigr\}.
    \label{eq:n1X_L}
\end{multline}
Then, by the same subtraction logic discussed above and a union bound over Eqs.~\eqref{eq:vacuum_test_upper_exact} and \eqref{eq:pne_upper_exact}, together with the LP failure probability associated with the estimation of $\rv{n_{1_A,X}^{L}}$, we have
\begin{equation}
    \Pr\!\left(\rv{\tilde n_{1_A,1_B,X}} < \rv{n_{1_A,1_B,X}^{L}}\right)\le \epsilon_{0X}+\epsilon_{\mathrm{PNE}}+\epsilon_{\mathrm{decoy}}.
    \label{eq:n1X_fail}
\end{equation}

\medskip
\noindent

\paragraph{Hypergeometric Clopper-Pearson phase-error rate bound.}
Here we use the exact hypergeometric Clopper-Pearson inversion stated in
Theorem~\ref{Hyper} presented in Appendix~\ref{sec:bounds}.
This sampling step is applied only after conditioning both on Alice emitting a single photon and on Bob's virtual QND outcome being \(1_B\), because on this restricted population the passive beam splitter acts as an effective active basis selector.

We first state the sampling relation in terms of the random variables. Specifically, consider the rounds in which Alice emits a single-photon pulse, Bob's QND measurement outputs a single-photon pulse, and Alice's and Bob's basis match. Let $\rv{\tilde N_{\mathrm{H}}}:=\rv{\tilde n_{1_A,1_B,X}}+\rv{\tilde n_{1_A,1_B,Z}}$ be its size, partitioned into an $X$-arm test subset of size $\rv{\tilde n_{\mathrm{H}}}:=\rv{\tilde n_{1_A,1_B,X}}$ and a $Z$-arm key subset of size $\rv{\tilde n_{1_A,1_B,Z}}$. Let $\rv{\tilde m_{1_A,1_B,X}}$ denote the number of $X$-basis bit errors within the test subset. Since the test subset is a uniform sample without replacement from the population given by $\rv{\tilde n_{1_A,1_B,X}}+\rv{\tilde n_{1_A,1_B,Z}}$, Theorem~\ref{Hyper} yields the following relation for the phase-error rate.
\begin{multline}
    \Pr\!\bigl[
        \rv{\tilde \phi_{1_A,1_B,Z}}
        \ge
        \frac{1}{\rv{\tilde n_{1_A,1_B,Z}}}
        \bigl(
        \rv{\tilde n_{1_A,1_B,X}}+\rv{\tilde n_{1_A,1_B,Z}}
        \bigr) \\
        \times \mathcal{H}^{\mathrm{U}}_{\rv{\tilde N_{\mathrm{H}}},\,\rv{\tilde n_{\mathrm{H}}},\,\epsilon_{\mathrm{S}}}\!\left(
        \frac{\rv{\tilde m_{1_A,1_B,X}}}{\rv{\tilde n_{1_A,1_B,X}}}
        \right)
        - \frac{\rv{\tilde m_{1_A,1_B,X}}}{\rv{\tilde n_{1_A,1_B,Z}}}
        \bigr]
    \le
    \epsilon_{\mathrm{S}}.
    \label{eq:phi_ideal}
\end{multline}

Next, we choose the relevant bounds for the individual terms. The decoy-state LP bounds the total $X$-test single-photon error counts $\rv{m_{1_A,X}}$ (for Alice-emitted \emph{single} photons). On the other hand, the random variable $\rv{\tilde m_{1_A,1_B,X}}$ counts only those errors within the received single-photon subspace. Since the received vacuum and multi-photon contributions can each contribute at most one error per round, a lower bound on $\rv{\tilde m_{1_A,1_B,X}}$ can be obtained by subtracting from $\rv{m_{1_A,X}^{L}}$ the upper bounds on these non-single-photon contributions:
\begin{multline}
    \rv{m_{1_A,1_B,X}^{L}}
    := \\
    \max\!\Bigl\{
    0,\;
    \rv{m_{1_A,X}^{L}}-B_0^{\mathrm{ex}}(N,q_X,\epsilon_{0X}) \\
    -V^{\mathrm{ex}}(\rv{n_{\mc}},\epsilon_{\mathrm{PNE}},\lambda_{\min})
    \Bigr\}.
    \label{eq:m1X_bounds}
\end{multline}
Define
\begin{align}
    \rv{\phi_{1_A,1_B,Z}^{U}}
     & :=
    \frac{1}{\rv{n^L_{1_A,1_B,Z}}}
    \Biggl[
        \left(\rv{n_{1_A,1_B,X}^{U}}+\rv{n_{1_A,1_B,Z}^{U}}\right)
    \notag           \\
     & \qquad \times
        \mathcal{H}^{\mathrm{U}}_{\rv{N_{\mathrm{H}}^{U}},\,\rv{n_{\mathrm{H}}^{L}},\,\epsilon_{\mathrm{S}}}\!\left(
        \frac{\rv{m_{1_A,1_B,X}^{U}}}{\rv{n_{1_A,1_B,X}^{L}}}
        \right)
        -
        \rv{m_{1_A,1_B,X}^{L}}
        \Biggr],
    \label{eq:phi_bound}
\end{align}
where the upper bounds on the different quantities are given by
\begin{align}
    \rv{m_{1_A,1_B,X}^{U}} & := \rv{m_{1_A,X}^{U}}, \\
    \rv{n_{1_A,1_B,X}^{U}} & := \rv{n_{1_A,X}^{U}}, \\
    \rv{n_{1_A,1_B,Z}^{U}} & := \rv{n_{1_A,Z}^{U}},
\end{align}
and
\begin{align}
    \rv{N_{\mathrm{H}}^{U}}
     & :=
    \rv{n_{1_A,1_B,X}^{U}}+\rv{n_{1_A,1_B,Z}^{U}},
     &
    \rv{n_{\mathrm{H}}^{L}}
     & :=
    \rv{n_{1_A,1_B,X}^{L}}.
    \label{eq:hypergeom_shorthand_bounds}
\end{align}
Using the hypergeometric sampling statement provided by Eq.~\eqref{eq:phi_ideal} together with a union bound over the auxiliary events needed to justify these bounds, we obtain
\begin{equation}
    \Pr\!\left(\rv{\tilde \phi_{1_A,1_B,Z}} > \rv{\phi_{1_A,1_B,Z}^{U}}\right)\le \epsilon_{0Z}+\epsilon_{0X}+\epsilon_{\mathrm{PNE}}+\epsilon_{\mathrm{decoy}}+\epsilon_{\mathrm{S}}.
    \label{eq:phi_fail}
\end{equation}
\fussy
\setlength{\emergencystretch}{0pt}

\section{Asymmetric passive BB84 receiver}
\label{sec:detection_appendix}

Here we provide the model that we use to simulate the detection statistics in an asymmetric passive BB84 receiver with splitting ratio $s\in (0,1)$.

In general, in a passive BB84 receiver, Bob employs a static beam splitter that routes a fraction $s\in(0,1)$ of the incident intensity to the $X$ arm and the remaining fraction $1-s$ to the $Z$ arm. A round assigned to basis $\mathcal{B}\in\{Z,X\}$ is accepted only when the corresponding arm produces a valid single click and the complementary arm remains silent. This single-click veto rule is the common criterion used throughout the passive receiver architectures
discussed in this manuscript.

For a sifted $Z$-basis event, the matching arm receives intensity $(1-s)\mu$ and the veto arm (i.e., the arm associated with the $X$-basis) receives intensity $s\mu$. For a sifted $X$-basis event, these roles are reversed: the matching arm receives intensity $s\mu$ and the veto arm receives intensity $(1-s)\mu$. Within the matching arm, the correct detector receives a fraction $(1-e_{\mathrm{mis}})$ of the signal intensity and the wrong detector receives a fraction $e_{\mathrm{mis}}$ due to optical misalignment.

\subsection{Click probabilities}

As in the asymmetric active receiver (see Appendix~\ref{sec:asym_active_BB84_rec}), we denote by $p_{\mathrm{noise}}$ the per-detector per-gate probability of a noise-induced click, namely a click caused by background photons and/or detector dark counts in the absence of signal.

To express this quantity from the basic receiver parameters, let $\bar{n}_{\mathrm{B}}$ denote the total mean number of background photons arriving at Bob's receiver in one gate. In the asymmetric passive scenario, the beam splitter routes a fraction $(1-s)$ of this background light to the $Z$ arm and a fraction $s$ to the $X$ arm, after which the PBS divides the incoming background light equally between its two detectors. Therefore, the mean background-photon number incident on a single detector, which we shall denote by $\bar{n}_{\mathrm{B,det}}^Z$ and $\bar{n}_{\mathrm{B,det}}^X$ depending on the basis, is
\begin{equation}
    \bar{n}_{\mathrm{B,det}}^{Z} = \frac{(1-s)\,\eta_{\mathrm{R}}\eta_{\mathrm{D}}\,\bar{n}_{\mathrm{B}}}{2},
\end{equation}
\begin{equation}
    \bar{n}_{\mathrm{B,det}}^{X} = \frac{s\,\eta_{\mathrm{R}}\eta_{\mathrm{D}}\,\bar{n}_{\mathrm{B}}}{2},
\end{equation}
where $\eta_{\mathrm{R}}$ is the receiver optics efficiency and $\eta_{\mathrm{D}}$ is the detector efficiency.
Hence the corresponding per-detector per-gate noise-click probabilities are
\begin{equation}
    p_{\mathrm{noise}}^{Z} = 1 - \exp\!\left(-\frac{(1-s)\,\eta_{\mathrm{R}}\eta_{\mathrm{D}}\,\bar{n}_{\mathrm{B}}}{2}\right)(1-p_{\mathrm{dark}}),
\end{equation}
\begin{equation}
    p_{\mathrm{noise}}^{X} = 1 - \exp\!\left(-\frac{s\,\eta_{\mathrm{R}}\eta_{\mathrm{D}}\,\bar{n}_{\mathrm{B}}}{2}\right)(1-p_{\mathrm{dark}}),
\end{equation}
where the superscript indicates the basis.

For a valid detection assigned to basis $\mathcal{B}$, the intensities arriving at the matching-arm detectors are $\eta_{\mathrm{sys}}(1-s)\mu(1-e_{\mathrm{mis}})$ and $\eta_{\mathrm{sys}}(1-s)\mu e_{\mathrm{mis}}$ in the $Z$ arm, and $\eta_{\mathrm{sys}}s\mu(1-e_{\mathrm{mis}})$ and $\eta_{\mathrm{sys}}s\mu e_{\mathrm{mis}}$ in the $X$ arm, for the correct and wrong detectors respectively. The corresponding probability that one of the detectors in the matching arm produced at least one click is
\begin{equation}
    D_{\mu,Z} = 1 - (1 - p_{\mathrm{noise}}^{Z})^2 \exp(-\eta_{\mathrm{sys}}(1-s)\mu),
\end{equation}
\begin{equation}
    D_{\mu,X} = 1 - (1 - p_{\mathrm{noise}}^{X})^2 \exp(-\eta_{\mathrm{sys}}s\mu).
\end{equation}

The probability of observing no click (due to signal or dark counts) in the matching arm detectors is given by
\begin{align}
    p_{\mathrm{nc},c}^{Z} & = (1 - p_{\mathrm{noise}}^{Z}) \exp\bigl[-\eta_{\mathrm{sys}}(1-s)\mu(1 - e_{\mathrm{mis}})\bigr], \\
    p_{\mathrm{nc},w}^{Z} & = (1 - p_{\mathrm{noise}}^{Z}) \exp\bigl[-\eta_{\mathrm{sys}}(1-s)\mu e_{\mathrm{mis}}\bigr],      \\
    p_{\mathrm{nc},c}^{X} & = (1 - p_{\mathrm{noise}}^{X}) \exp\bigl[-\eta_{\mathrm{sys}}s\mu(1 - e_{\mathrm{mis}})\bigr],     \\
    p_{\mathrm{nc},w}^{X} & = (1 - p_{\mathrm{noise}}^{X}) \exp\bigl[-\eta_{\mathrm{sys}}s\mu e_{\mathrm{mis}}\bigr],
\end{align}
where the subscripts $c$ and $w$ indicate whether the click happens in the correct or in the wrong detector, respectively.

Accounting for afterpulsing, we model the total probability of no click as
\begin{align}
    P_{\mathrm{nc},c}^{Z} & = p_{\mathrm{nc},c}^{Z} (1 - D_{\mu,Z}  p_{\mathrm{ap}}), \\
    P_{\mathrm{nc},w}^{Z} & = p_{\mathrm{nc},w}^{Z} (1 - D_{\mu,Z}  p_{\mathrm{ap}}), \\
    P_{\mathrm{nc},c}^{X} & = p_{\mathrm{nc},c}^{X} (1 - D_{\mu,X}  p_{\mathrm{ap}}), \\
    P_{\mathrm{nc},w}^{X} & = p_{\mathrm{nc},w}^{X} (1 - D_{\mu,X}  p_{\mathrm{ap}}),
\end{align}
where $p_{\mathrm{ap}}$ denotes the afterpulsing probability.

For the complementary veto arm, the condition for a valid detection requires both of its detectors to remain silent. The corresponding no-click probability is
\begin{equation}
    P_{\mathrm{veto}}^{(Z)} = (1 - p_{\mathrm{noise}}^{X})^2 \exp(-\eta_{\mathrm{sys}}s\mu) (1 - D_{\mu,X}  p_{\mathrm{ap}})^2,
\end{equation}
\begin{equation}
    P_{\mathrm{veto}}^{(X)} = (1 - p_{\mathrm{noise}}^{Z})^2 \exp(-\eta_{\mathrm{sys}}(1-s)\mu) (1 - D_{\mu,Z}  p_{\mathrm{ap}})^2.
\end{equation}

Consequently, in the asymmetric passive BB84 receiver a valid detection requires exactly one click in the matching arm together with silence in the veto arm. The corresponding basis-resolved gains for each basis are therefore
\begin{equation}
    Q_{\mu,Z} = \Bigl[\bigl(1 - P_{\mathrm{nc},c}^{Z}\bigr) P_{\mathrm{nc},w}^{Z} + P_{\mathrm{nc},c}^{Z}\bigl(1 - P_{\mathrm{nc},w}^{Z}\bigr)\Bigr]  P_{\mathrm{veto}}^{(Z)},
\end{equation}
\begin{equation}
    Q_{\mu,X} = \Bigl[\bigl(1 - P_{\mathrm{nc},c}^{X}\bigr) P_{\mathrm{nc},w}^{X} + P_{\mathrm{nc},c}^{X}\bigl(1 - P_{\mathrm{nc},w}^{X}\bigr)\Bigr]  P_{\mathrm{veto}}^{(X)}.
\end{equation}

The multi-click probability needed to evaluate \(\rv{n_{\mc}}\) includes both double clicks within one arm and cross-clicks between the two arms. Let
\begin{equation}
    P_{\emptyset,\mu}
    =
    P_{\mathrm{nc},c}^{Z}P_{\mathrm{nc},w}^{Z}
    P_{\mathrm{nc},c}^{X}P_{\mathrm{nc},w}^{X}
    \label{eq:passive_all_silent_probability}
\end{equation}
be the probability that all four detectors remain silent. Since the no-click event, the two single-click events with probabilities \(Q_{\mu,Z}\) and \(Q_{\mu,X}\), and the multi-click event are disjoint and exhaustive, the total multi-click probability is
\begin{equation}
    Q_{\mu,\mc}
    =
    1-P_{\emptyset,\mu}-Q_{\mu,Z}-Q_{\mu,X}.
    \label{eq:passive_multiclick_probability}
\end{equation}

Similarly, the probability of observing an error can be obtained by requiring that the wrong detector clicks, the correct detector remains silent, and the veto condition is satisfied:
\begin{equation}
    E_{\mu,Z} = \bigl(1 - P_{\mathrm{nc},w}^{Z}\bigr) P_{\mathrm{nc},c}^{Z}  P_{\mathrm{veto}}^{(Z)},
\end{equation}
\begin{equation}
    E_{\mu,X} = \bigl(1 - P_{\mathrm{nc},w}^{X}\bigr) P_{\mathrm{nc},c}^{X}  P_{\mathrm{veto}}^{(X)}.
\end{equation}
In the finite-key security analysis presented in Appendix~\ref{sec:bb84_passive_asym}, the vacuum single-click parameters $q_Z$ and $q_X$ are computed from the detector dark-count probability alone, because background-induced clicks are already accounted for in the arm-dependent single- and multi-click statistics through $p_{\mathrm{noise}}^{Z}$ and $p_{\mathrm{noise}}^{X}$.

\section{Symmetric passive BB84 receiver}

We devote this subsection to the \emph{symmetric} passive BB84 receiver, obtained by setting $s=1/2$. In this scenario, we consider that the secret key is extracted from both bases. The two basis arms of Bob's receiver are statistically identical, so for the simulations we derive the key length for one basis and then invoke symmetry for the other.

In contrast to Appendix~\ref{sec:detection_appendix}, in the symmetric passive receiver, matching-arm double clicks are retained and assigned a random bit value. Furthermore, the photon-number-subspace analysis based on Bob's virtual QND outcome and the associated vacuum/multi-photon subtraction terms are not necessary. Instead, parameter estimation follows essentially the same methodology as in the case of the asymmetric active BB84 receiver described in Appendix~\ref{sec:protocol_appendix} with a modification. In particular, because the secret key is now extracted from both bases, neither basis can be used entirely for parameter estimation: in each basis, a random fraction of the sifted detections is publicly disclosed for parameter estimation, and only the remaining undisclosed fraction is kept for key generation. Consequently, the bit strings used for key generation in both $Z$ and $X$ are shorter than the corresponding sifted strings. The finite-key post-processing is then carried out separately for the two basis-specific bit strings, and privacy amplification is applied separately to each of them.

For a total of $N$ transmitted rounds, the detection counts and error counts for each intensity are computed using these basis-resolved yields and error rates. Let $P_{\mathrm{nc},c}$ and $P_{\mathrm{nc},w}$ denote the correct- and wrong-detector no-click probabilities in the matching arm, and let $P_{\mathrm{veto}}$ denote the probability that both detectors in the complementary arm remain silent, all evaluated at $s=1/2$. Then, the overall gain associated with all matching-arm events with at least one click in one basis is given by
\begin{equation}
    Q_{\mu} = \bigl(1 - P_{\mathrm{nc},c} P_{\mathrm{nc},w}\bigr) P_{\mathrm{veto}},
\end{equation}
while the corresponding overall error probability includes both explicit wrong-detector single clicks and matching-arm double clicks with random bit assignment:
\begin{equation}
    E_{\mu} = \frac{1}{2}\bigl(1 - P_{\mathrm{nc},w}\bigr)\bigl(1 + P_{\mathrm{nc},c}\bigr) P_{\mathrm{veto}}.
\end{equation}
For the simulations we consider the expected symmetric scenario in which $Q_{\mu,Z}=Q_{\mu,X}=Q_{\mu}$ and $E_{\mu,Z}=E_{\mu,X}=E_{\mu}$. As already mentioned, a fraction $r$ of the detected signals in each basis is publicly disclosed for parameter estimation, while the remaining fraction $(1-r)$ is used for key generation. Denoting the key and test portions of the $Z$-basis detections as $n_{Z}^{\mathrm{key}}$ and $n_{Z}^{\mathrm{test}}$, respectively, we have
\begin{align}
    n_{Z}^{\mathrm{key}}  & = n_Z  (1 - r), \\
    n_{Z}^{\mathrm{test}} & = n_Z  r,
\end{align}
with analogous expressions for the $X$ basis. Here $n_Z$ is the size of the sifted key in the $Z$ basis.

For the $Z$-basis key extraction, the secret key is derived from the undisclosed $Z$-basis detections ($n_{Z}^{\mathrm{key}}$), and parameter estimation is performed using the disclosed test portion of the complementary $X$-basis ($n_{X}^{\mathrm{test}}$, $m_{X}^{\mathrm{test}}$). Here, $m_X^{\mathrm{test}}$ is the number of bit errors in $n_X^{\mathrm{test}}$. The analysis proceeds using the same decoy-state LPs as in the asymmetric active receiver (see Appendix~\ref{dsa}). Specifically, the LP in Eq.~\eqref{eq:LP_Z0} is applied to the $Z$-basis key portion to yield $n_{1,Z}^{\mathrm{key},L}$, and $n_{1,Z}^{\mathrm{key},U}$, and to the $X$-basis test portion to yield $n_{1,X}^{\mathrm{test},L}$ and $n_{1,X}^{\mathrm{test},U}$. Likewise, the LP in Eq.~\eqref{eq:LP_E} is applied to the $X$-basis test errors to obtain $m_{1,X}^{\mathrm{test},L}$ and $m_{1,X}^{\mathrm{test},U}$.

An upper bound on the single-photon bit error rate in the $X$-test set is given by $e_{1,X}^{\mathrm{U}} = m_{1,X}^{\mathrm{test},U}/n_{1,X}^{\mathrm{test},L}$. Using the complementarity principle, the phase error rate in the $Z$-key is upper bounded via Eq.~\eqref{phiupperbound}, with \(N_{\mathrm{H}}^{\mathrm{U}}:=n_{1,Z}^{\mathrm{key},U}+n_{1,X}^{\mathrm{test},U}\) and \(n_{\mathrm{H}}^{\mathrm{L}}:=n_{1,X}^{\mathrm{test},L}\):
\begin{equation}
    \label{eq:passive_phi}
    \begin{split}
        \phi^{\mathrm{U}}_{1,Z}=\frac{1}{n_{1,Z}^{\mathrm{key},L}} \bigg[ \left(n_{1,Z}^{\mathrm{key},U}+n_{1,X}^{\mathrm{test},U}\right) \\
            \times \mathcal{H}^{\mathrm{U}}_{N_{\mathrm{H}}^{\mathrm{U}},\,n_{\mathrm{H}}^{\mathrm{L}},\,\epsilon_{\mathrm{S}}}\left(e^{\mathrm{U}}_{1,X}\right)-m_{1,X}^{\mathrm{test},L} \bigg].
    \end{split}
\end{equation}

% The error correction cost requires an estimate of the QBER in the $Z$-key portion. This is obtained from the disclosed $Z$-test set using hypergeometric statistics (Theorem~\ref{Hyper}):
% \begin{equation}
%     e_Z^{\rm U} = \frac{n_Z\,\mathcal{H}^{\rm U}_{n_Z,\, n_{Z}^{\rm test},\, \epsilon_{\rm S}}\!\left(\frac{m_{Z}^{\rm test}}{n_{Z}^{\rm test}}\right) - m_{Z}^{\rm test}}{n_{Z}^{\rm key}}.
% \end{equation}

A lower bound on the secret-key length that can be extracted from the $Z$ basis is then given by
\begin{equation}
    \label{eq:passive_keylength}
    \begin{split}
        \ell_Z =
        \max\Biggl\{
        \bigg\lfloor n_{1,Z}^{\mathrm{key},L}\left[1-h(\phi^{\mathrm{U}}_{1,Z})\right]
        - \lambda_{\mathrm{EC}} \\
        - \log_2\left(\frac{1}{2\epsilon_{\mathrm{PA}}^2 \epsilon_{\mathrm{cor}}}\right)
        \bigg\rfloor,\,0
        \Biggr\}.
    \end{split}
\end{equation}

By symmetry, for the simulations we assume that a key of length $\ell_X$ can be extracted from the $X$ basis using the same procedure with the roles of $Z$ and $X$ interchanged: the key comes from the undisclosed $X$-basis portion, and parameter estimation uses the disclosed $Z$-test data. The total secret-key length is
\begin{equation}
    \ell_{\mathrm{total}} = \ell_Z + \ell_X.
\end{equation}

\section{Comparing receiver architectures}

Figure~\ref{fig:receiver_mode_comparison} compares the secret-key rate of the three receiver architectures as a function of the overall system loss after optimizing the parameters of each protocol at every loss point. The optimized parameters include the basis and intensity probabilities and the signal and decoy intensities for all architectures, with additional disclosure-rate and beam-splitting-ratio optimization for the symmetric and asymmetric passive receivers, respectively. As expected, the asymmetric active receiver keeps the highest rate and tolerates the highest loss in all three subfigures. Within the passive variants, the asymmetric design remains above the symmetric one across the plotted range for higher loss values. In the low-loss regime, the penalty induced by the modified analysis of the asymmetric case results in a better performance of the symmetric architecture.

\begin{figure}[htbp]
    \includegraphics[width=\linewidth]{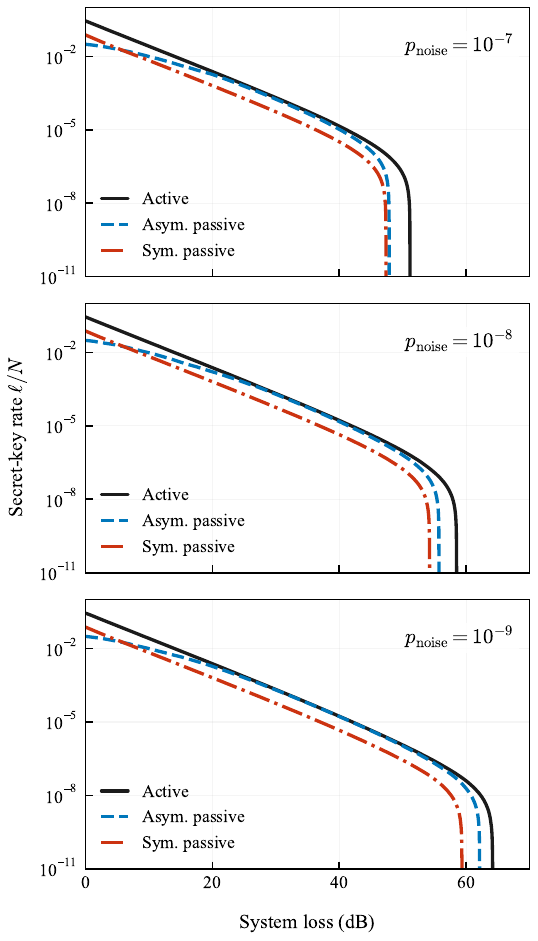}
    \caption{Finite-size secret-key rate $\ell/N$ (y axis) versus overall system loss (x axis) for asymmetric active, asymmetric passive, and symmetric passive BB84 receiver architectures. The three subfigures correspond, respectively, to background-noise probabilities $p_{\mathrm{noise}}=10^{-7}$, $10^{-8}$, and $10^{-9}$ per detector per gate. In all cases the optimization uses $N=10^{12}$, $\epsilon_{\mathrm{tot}}=10^{-8}$, misalignment $e_{\mathrm{mis}}=0.5\%$, and no afterpulsing.}
    \label{fig:receiver_mode_comparison}
\end{figure}

%% file: journal/sections/appendix_statistical_bounds.tex
\section{Statistical bounds}
\label{sec:bounds}

\begin{theorem}
    \label{PoissonBinomial}
    \cite{bancal2022simple}
    Let $\hat{p}$ be the average of $n$ independent Bernoulli variables, with expected value $\mathbb{E}\left[\hat{p}\right]=p$. Then, for all $\epsilon\in(0,1/4]$, we have that $\Pr[p<\mathcal{F}^{\mathrm{L}}_{n,\epsilon}(\hat{p})]\leq{\epsilon}$ for
    \begin{equation}
        \label{eq:poisslb}
        \mathcal{F}^{\mathrm{L}}_{n,\epsilon}(\hat{p})=\left\{
        \begin{array}{ll}
            0                                                                   & \mathrm{if}\hspace{.2cm}n\hat{p}=0,             \\
            \hat{p}-\frac{1-\epsilon}{n\left[1-\epsilon^{*}(n\hat{p},n)\right]} & \mathrm{if}\hspace{.1cm}n\hat{p}>0, \text{ and} \\
                                                                                & \epsilon^{*}\leq{}\epsilon\leq{}1,              \\
            I^{-1}_{\epsilon}(n\hat{p},n')                                      & \mathrm{if}\hspace{.1cm}n\hat{p}>0, \text{ and} \\
                                                                                & 0\leq{}\epsilon\leq{}\epsilon^{*},
        \end{array}
        \right.
    \end{equation}
    where $\epsilon^{*}=\epsilon^{*}(n\hat{p},n)=I_{(n\hat{p}-1)/n}(n\hat{p},n-n\hat{p}+1)$ with $n'=n-n\hat{p}+1$, and  where $I^{-1}_{\epsilon}(a,b)$ denotes the inverse regularized incomplete beta function, such that $I_{I^{-1}_{\epsilon}(a,b)}(a,b)=\epsilon$ for
    \begin{equation}
        \label{eq:betafunc}
        I_{x}(a,b)=\frac{\displaystyle{\int_{0}^{x}t^{a-1}(1-t)^{b-1}dt}}{\displaystyle{\int_{0}^{1}t^{a-1}(1-t)^{b-1}dt}}.
    \end{equation}
    Complementarily, for all $\epsilon\in(0,1/4]$, we have
    \begin{equation}
        \Pr[p>\mathcal{F}^{\mathrm{U}}_{n,\epsilon}(\hat{p})]\leq{\epsilon},
    \end{equation}
    where
    \begin{equation}
        \label{eq:poissub}
        \mathcal{F}^{\mathrm{U}}_{n,\epsilon}(\hat{p})
        =1-\mathcal{F}^{\mathrm{L}}_{n,\epsilon}(1-\hat{p}).
    \end{equation}
\end{theorem}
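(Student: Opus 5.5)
The plan is to prove the lower-tail statement for $\mathcal{F}^{\mathrm{L}}_{n,\epsilon}$ and then get the upper one by symmetry. Write $S:=n\hat{p}=\sum_{r=1}^n B_r$, where the $B_r$ are independent Bernoulli variables with parameters $p_r$ and $\sum_r p_r=np$.

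\emph{Step 1 (reduction to a tail bound).} I would first check that $c\mapsto\mathcal{F}^{\mathrm{L}}_{n,\epsilon}(c/n)$ is nondecreasing in the integer count $c$. Given that, for each fixed true mean $p$ the failure event $\{p<\mathcal{F}^{\mathrm{L}}_{n,\epsilon}(S/n)\}$ has the form $\{S\geq c_p\}$, where $c_p$ is the smallest count $c$ with $\mathcal{F}^{\mathrm{L}}_{n,\epsilon}(c/n)>p$. So it suffices to show $\Pr[S\geq c]\leq\epsilon$ whenever $\mathcal{F}^{\mathrm{L}}_{n,\epsilon}(c/n)>p$. The case $c=0$ is vacuous because $\mathcal{F}^{\mathrm{L}}=0$ there. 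For the upper bound, apply the lower bound to the complementary variables $1-B_r$, whose average is $1-\hat{p}$ with mean $1-p$; this gives Eq.~\eqref{eq:poissub} directly.

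\emph{Step 2 (binomial regime, $\epsilon\leq\epsilon^*$).} Here I would use Hoeffding's classical comparison for Poisson--Binomial sums: if $np\leq c-1$, then $\Pr[S\geq c]\leq\Pr[\mathrm{Bin}(n,p)\geq c]=I_p(c,n-c+1)$. The right-hand side is increasing in $p$. It equals $\epsilon^*$ at $p=(c-1)/n$ and equals $\epsilon$ at $p=I^{-1}_{\epsilon}(c,n')$. So whenever $\epsilon\leq\epsilon^*$ and $p<I^{-1}_{\epsilon}(c,n')$, we automatically have $p\leq (c-1)/n$, Hoeffding applies, and $\Pr[S\geq c]\leq\epsilon$. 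This is exactly the Clopper--Pearson argument, and it covers the third line of Eq.~\eqref{eq:poisslb}.

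\emph{Step 3 (intermediate regime, $\epsilon^*<\epsilon$), the main obstacle.} Now the threshold $c/n-(1-\epsilon)/[n(1-\epsilon^*)]$ lies in $[(c-1)/n,c/n)$, where Hoeffding's comparison no longer applies. The second line of Eq.~\eqref{eq:poisslb} is precisely the value of $p$ at which the straight line through $(np,\text{tail})=(c-1,\epsilon^*)$ and $(c,1)$ reaches height $\epsilon$. So the required lemma is the chord bound
\begin{equation*}
  \Pr[S\geq c]\leq 1-(1-\epsilon^*)(c-np),\qquad c-1\leq np\leq c.
\end{equation*}
I would prove it by characterizing the worst-case Poisson--Binomial distribution at fixed mean. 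By Hoeffding's extremal principle, the maximizer of $\Pr[S\geq c]$ has all $p_r\in\{0,x,1\}$ for a single value $x$. With that reduction, I would show that the resulting extremal tail, viewed as a function of the mean $np$ on $[c-1,c]$, lies below the chord. Concretely, I would treat the mixed configuration (some $p_r=1$, some equal to $x$) as an explicit one-parameter family and check monotonicity and convexity in the mean. This is the delicate step, and the restriction $\epsilon\leq 1/4$ may be needed to exclude degenerate configurations.

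\emph{Step 4 (assembly).} Once the chord lemma is in hand, whenever $\epsilon^*\leq\epsilon$ and $p<c/n-(1-\epsilon)/[n(1-\epsilon^*)]$, we get $1-(1-\epsilon^*)(c-np)<\epsilon$, hence $\Pr[S\geq c]\leq\epsilon$. Combining this with Steps 1 and 2 gives the lower bound $\Pr[p<\mathcal{F}^{\mathrm{L}}_{n,\epsilon}(\hat p)]\leq\epsilon$, and the symmetry argument of Step 1 then yields the upper bound.
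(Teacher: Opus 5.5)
The paper does not prove this statement; it imports it from the cited reference, so your argument has to stand on its own. Three parts of it are sound: the reduction in Step 1 via monotonicity in the count, the Clopper--Pearson argument of Step 2 via Hoeffding's comparison $\Pr[S\ge c]\le I_p(c,n-c+1)$ for $np\le c-1$, and the complementation giving $\mathcal{F}^{\mathrm{U}}_{n,\epsilon}$.

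The gap is Step 3. You announce the chord bound $\Pr[S\ge c]\le 1-(1-\epsilon^*)(c-np)$ on $c-1\le np\le c$ for general $c$, but you never prove it, and the mechanism you propose does not work. Take the mixed family you single out: $c-1$ of the $p_r$ equal to $1$ and $k\ge 2$ of them equal to $y/k$, with $y=np-c+1\in[0,1]$. Its tail is $1-(1-y/k)^k$, which is strictly concave in the mean, not convex. A family-by-family convexity check therefore cannot succeed. You would instead have to compare the whole upper envelope of the $\{0,x,1\}$ configurations with the chord, which is a nontrivial statement you leave open. Your guess that $\epsilon\le 1/4$ serves to exclude degenerate configurations also misses what this hypothesis actually does.

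The missing idea is that, for $\epsilon\le 1/4$, you never need the chord bound beyond $n\hat p=1$.
\begin{itemize}
\item Note that $\epsilon^*(c,n)=\Pr[\mathrm{Bin}(n,(c-1)/n)>c-1]$ is the probability that a binomial variable with integer mean $c-1$ strictly exceeds its mean.
\item For $c=1$ it vanishes.
\item For $2\le c\le n$ it is at least $1/4$. This is a known property of binomials: B.~Doerr (Statist.\ Probab.\ Lett., 2018) shows $\Pr[X>\mathbb{E}X]\ge 1/4$ for $X\sim\mathrm{Bin}(n,p)$ with $0.29/n\le p<1$. Equality holds at $c=n=2$; for instance $\epsilon^*(n,n)=((n-1)/n)^n$ and $\epsilon^*(2,n)\to 1-2/e$.
\end{itemize}

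Hence the second line of Eq.~\eqref{eq:poisslb} is only in force at $n\hat p=1$. (If $\epsilon=\epsilon^*$, both lines give $(c-1)/n$, and your Step 2 covers that case.) At $n\hat p=1$ the formula reads $\mathcal{F}^{\mathrm{L}}_{n,\epsilon}(1/n)=\epsilon/n$. The required estimate is then just the union bound $\Pr[S\ge 1]\le\mathbb{E}[S]=np<\epsilon$. For every $c\ge 2$ the Clopper--Pearson line applies, and your Step 2 finishes the job.

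Monotonicity in Step 1 then reduces to showing $\epsilon/n\le I^{-1}_{\epsilon}(2,n-1)$. This holds because $\Pr[\mathrm{Bin}(n,\epsilon/n)\ge 2]\le\binom{n}{2}(\epsilon/n)^2<\epsilon$. Combine it with the standard monotonicity of the Clopper--Pearson bound in $c$.

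With this substitution your proof closes. As written, it rests on an unproved lemma that the theorem, as stated, does not need.
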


\begin{theorem}
    \label{Binomial}
    Let $\hat{p}$ be the average of $n$ independent and identically distributed Bernoulli variables, with expected value $p$. Then for all $\epsilon>0$,
    \begin{equation}
        \Pr[\hat{p}\geq \mathcal{G}^{\mathrm{U}}_{n,\epsilon}(p)]\leq\epsilon,
    \end{equation}
    where,
    \begin{multline}
        \label{eq:binub}
        \mathcal{G}^{\mathrm{U}}_{n,\epsilon}(p)=\frac{1}{n}\min\Bigl\{k\in\{-1, \dots, n+1\} : \\
        I_p(k, n-k+1)\leq\epsilon\Bigr\}.
    \end{multline}
    Similarly,
    \begin{equation}
        \Pr[\hat{p}\leq \mathcal{G}^{\mathrm{L}}_{n,\epsilon}(p)]\leq\epsilon,
    \end{equation}
    where
    \begin{multline}
        \label{eq:binlb}
        \mathcal{G}^{\mathrm{L}}_{n,\epsilon}(p)=\frac{1}{n}\max\Bigl\{k\in\{-1, \dots, n+1\} : \\
        I_{1-p}(n-k, k+1)\leq\epsilon\Bigr\},
    \end{multline}
    and  $I_x(a,b)$ is given by Eq.~\eqref{eq:betafunc}.
\end{theorem}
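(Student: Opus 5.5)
The plan is to reduce both statements to the classical identity between binomial tails and the regularized incomplete beta function. Since $\hat p$ is the average of $n$ i.i.d.\ Bernoulli$(p)$ variables, $n\hat p\sim\mathrm{Bin}(n,p)$. The first step is to recall, or re-derive by differentiating in $p$ and telescoping, the identities
\begin{equation}
\Pr[\mathrm{Bin}(n,p)\ge k]=I_p(k,n-k+1),\qquad \Pr[\mathrm{Bin}(n,p)\le k]=I_{1-p}(n-k,k+1),
\end{equation}
valid for $1\le k\le n$ and $0\le k\le n-1$, respectively. The derivative route uses
\begin{equation}
\frac{d}{dp}\Pr[\mathrm{Bin}(n,p)\ge k]=n\binom{n-1}{k-1}p^{k-1}(1-p)^{n-k},
\end{equation}
which is exactly the beta density with parameters $(k,n-k+1)$. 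Both sides vanish at $p=0$, so integrating gives the first identity. The second identity follows from the first by the symmetry $k\mapsto n-k$, $p\mapsto 1-p$.

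Next I would fix conventions at the boundary indices appearing in Eqs.~\eqref{eq:binub} and \eqref{eq:binlb}. Interpret $I_p(k,n-k+1)$ as the tail probability $\Pr[\mathrm{Bin}(n,p)\ge k]$ for every integer $k$. This gives the value $1$ for $k\le 0$ and $0$ for $k=n+1$, and analogously for the lower tail. With these conventions, $k\mapsto \Pr[\mathrm{Bin}(n,p)\ge k]$ is nonincreasing and reaches $0\le\epsilon$ at $k=n+1$. Hence the set in Eq.~\eqref{eq:binub} is nonempty, and its minimum $k^\ast=n\,\mathcal G^{\mathrm U}_{n,\epsilon}(p)$ exists. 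Similarly, the lower-tail CDF is nondecreasing in $k$ and equals $0$ at $k=-1$, so the maximum $k_\ast=n\,\mathcal G^{\mathrm L}_{n,\epsilon}(p)$ in Eq.~\eqref{eq:binlb} exists.

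The conclusion is then immediate. Since $k^\ast$ is an integer and $n\hat p$ is integer-valued,
\begin{equation}
\Pr[\hat p\ge \mathcal G^{\mathrm U}_{n,\epsilon}(p)]=\Pr[\mathrm{Bin}(n,p)\ge k^\ast]=I_p(k^\ast,n-k^\ast+1)\le\epsilon
\end{equation}
by the defining property of $k^\ast$. Likewise, $\Pr[\hat p\le \mathcal G^{\mathrm L}_{n,\epsilon}(p)]=I_{1-p}(n-k_\ast,k_\ast+1)\le\epsilon$.

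The only real obstacle is bookkeeping rather than analysis. The beta-function expression is literally undefined when one of its arguments is nonpositive, that is, at $k\in\{-1,0,n+1\}$ or $n-k\le 0$. I would handle this by stating explicitly that $I$ is to be read as the corresponding binomial tail at those indices, which is consistent with the limits $I_x(a,b)\to 1$ as $a\to 0^+$ and $\to 0$ as $b\to 0^+$. I would also check the degenerate cases $p\in\{0,1\}$ separately.
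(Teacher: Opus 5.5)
Your proposal is correct and takes the same route as the paper. The paper also sets the threshold to be the smallest (largest) integer $k$ whose binomial upper (lower) tail is at most $\epsilon$, concludes from the fact that $n\hat p$ is integer-valued, and then identifies the tails with $I_p(k,n-k+1)$ and $I_{1-p}(n-k,k+1)$. Your derivative-based derivation of these identities, and your explicit convention at the boundary indices $k\in\{-1,0,n+1\}$ where the beta expression is formally undefined, supply details the paper leaves implicit.
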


\begin{proof}
    For the proof, it suffices to observe that
    $n\mathcal{G}^{\mathrm{U}}_{n,\epsilon}(p)$ is the smallest integer threshold
    $k$ for which the binomial upper tail
    $\Pr[X\ge k]$, with $X\sim\mathrm{Bin}(n,p)$, is at most $\epsilon$.
    Let
    \begin{equation}
        \begin{split}
            f_{n, \epsilon} (p) = \min \bigg\{ k & \in \{-1, \dots, n+1\} \;\bigg|                                  \\
                                                 & \sum_{j=k}^n \binom{n}{j} p^j (1-p)^{n-j} \leq \epsilon \bigg\}.
        \end{split}
    \end{equation}

    By definition of $f_{n, \epsilon}(p)$, we have that
    $$ \Pr \left[ \hat{X} \geq f_{n, \epsilon} (p) \right] \leq \epsilon \quad \text{for } \hat{X} = n\hat{p}. $$

    Equivalently,
    $$ \Pr \left[ \hat{p} \geq f_{n, \epsilon} (p)/n \right] \leq \epsilon, $$
    and the claim follows identifying that
    $$ \sum_{j=k}^n \binom{n}{j} p^j (1-p)^{n-j} = I_p(k, n-k+1). $$
    The lower bound follows similarly by defining $g_{n, \epsilon} (p) = \max \{ k \in \{-1, \dots, n+1\} \mid \sum_{j=0}^k \binom{n}{j} p^j (1-p)^{n-j} \leq \epsilon \}$ and identifying the sum as $I_{1-p}(n-k, k+1)$.
\end{proof}

\begin{theorem}
    \label{Hyper}
    \cite{mannalath2025sharp}
    A test sample of size $n$ is drawn at random from a binary population with $N$ elements. Let $\hat{p}$ ($\hat{q}$) denote the frequency of ones in the test (complementary) sample. Then for all $\epsilon>0$,
    \begin{equation}
        \Pr\left[\hat{q}\geq q^{\mathrm{U}}_{N,n,\epsilon}(\hat{p})\right]\leq{}\epsilon
    \end{equation}
    for
    \begin{equation}\label{q_th}
        q^{\mathrm{U}}_{N,n,\epsilon}(x)=\frac{N\mathcal{H}^{\mathrm{U}}_{N,n,\epsilon}(x)-nx}{N-n},
    \end{equation}
    where $\mathcal{H}^{\mathrm{U}}_{N,n,\epsilon}(x)$ is defined as,
    \begin{equation}
        \label{eq:hyperub}
        \begin{split}
            \mathcal{H}^{\mathrm{U}}_{N,n,\epsilon}(x)=\min\{1\geq y\geq x\big\rvert{} \hat{Y}\sim \mathrm{Hyper}(N,Ny,n), \\
            \Pr[\hat{Y}\leq n x]\leq\epsilon\},
        \end{split}
    \end{equation}
    if the target set of the minimization is non-empty, and via $\mathcal{H}^{\mathrm{U}}_{N,n,\epsilon}(x)=(N+1)/N$ otherwise.
\end{theorem}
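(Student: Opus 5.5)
The plan is to reduce the statement to a single tail bound for one fixed hypergeometric distribution, using monotonicity of its cumulative distribution function (CDF). I assume, as the definition suggests, that $y$ in Eq.~\eqref{eq:hyperub} ranges over multiples of $1/N$ so that $Ny$ is an integer.

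\emph{Step 1: reduce to an event about $K/N$.} Let $K$ be the (unknown, fixed) number of ones in the population, and write $y^*:=K/N$. Then $\hat Y:=n\hat p\sim\mathrm{Hyper}(N,K,n)$, and the number of ones in the complementary sample is $K-n\hat p$, so $\hat q=(K-n\hat p)/(N-n)$. Using the linear form of Eq.~\eqref{q_th}, the event $\hat q\ge q^{\mathrm U}_{N,n,\epsilon}(\hat p)$ is equivalent to $K-n\hat p\ge N\mathcal H^{\mathrm U}_{N,n,\epsilon}(\hat p)-n\hat p$. This is the same as
\[
\mathcal H^{\mathrm U}_{N,n,\epsilon}(\hat p)\le y^* .
\]
It therefore suffices to show that this event has probability at most $\epsilon$. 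If the target set in Eq.~\eqref{eq:hyperub} is empty, the convention $\mathcal H^{\mathrm U}=(N+1)/N>y^*$ puts such outcomes outside the event, so only non-empty cases matter.

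\emph{Step 2: monotonicity.} Let $F_y(j):=\Pr[\mathrm{Hyper}(N,Ny,n)\le j]$. I will use two facts.
\begin{enumerate}
\item[(i)] For fixed $j$, $F_y(j)$ is nonincreasing in $y$. This follows from the standard coupling: turning one zero of the population into a one can only increase the sampled count, so the hypergeometric law is stochastically increasing in the number of ones.
\item[(ii)] For fixed $y$, $F_y(j)$ is nondecreasing in $j$.
\end{enumerate}
By (i), the feasible set $\{y\in[x,1]:F_y(nx)\le\epsilon\}$ is upward closed in $y$. Hence if $\mathcal H^{\mathrm U}(x)\le y^*$, then $y^*\ge\mathcal H^{\mathrm U}(x)\ge x$ lies in the feasible set, which gives $F_{y^*}(nx)\le\epsilon$.

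\emph{Step 3: conclude.} Define $j_0:=\max\{j\in\{-1,\dots,n\}:F_{y^*}(j)\le\epsilon\}$. By Step 2 and (ii), the event $\{\mathcal H^{\mathrm U}(\hat p)\le y^*\}$ is contained in $\{\hat Y\le j_0\}$. Since $\hat Y$ has exactly the law $\mathrm{Hyper}(N,Ny^*,n)$,
\[
\Pr[\hat Y\le j_0]=F_{y^*}(j_0)\le\epsilon ,
\]
which proves the claim. This mirrors the argument used for Lemma~\ref{lem:pne_exact}.

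\emph{Main obstacle.} The only non-routine ingredient is the stochastic monotonicity (i) in the number of successes, together with the care needed with the discretization of $y$ and the boundary cases $y^*<x$ and the empty feasible set. I would establish (i) first, via the explicit coupling that adds one success to the population, before assembling the inclusion of events in Step 3.
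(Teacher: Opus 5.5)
Your proof is correct. The paper does not prove Theorem~\ref{Hyper} itself; it imports the result from Ref.~\cite{mannalath2025sharp}, so there is no in-paper argument to compare against.

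What you give is the standard Clopper--Pearson inversion. It has the same structure as the paper's proof of Lemma~\ref{lem:pne_exact}. Your fact~(i), the stochastic monotonicity of $\mathrm{Hyper}(N,K,n)$ in $K$, plays the role there of the monotonicity of the binomial CDF in the number of trials, which that proof uses implicitly.

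Two minor remarks:
\begin{itemize}
\item Step~1 divides by $N-n$, so you should state $n<N$. This is needed anyway for $\hat q$ to be defined.
\item Fact~(ii) is not needed in Step~3. The condition $F_{y^*}(\hat Y)\le\epsilon$ already places $\hat Y$ in the set whose maximum is $j_0$.
\end{itemize}

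Restricting $y$ to the grid $\{0,1/N,\dots,1\}$ is the right reading of Eq.~\eqref{eq:hyperub}. It also guarantees that the minimum is attained whenever the target set is non-empty.
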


%% file: journal/sections/appendix_channel_model.tex
\section{Channel model}
\label{sec:channel_appendix}

In this appendix, we describe the satellite-to-ground optical channel model that determines the transmission efficiency $\eta_{\mathrm{sys}}$ and background noise parameters entering the security analysis.

For this, we consider the most relevant physical phenomena that impact signal transmission. This includes beam diffraction due to finite aperture sizes, pointing errors, turbulence effects, atmospheric absorption and scattering,  coupling losses between the receiver and detection setup and background noise at the receiver.

\begin{figure}
  \centering
  \includegraphics[width=0.8\linewidth]{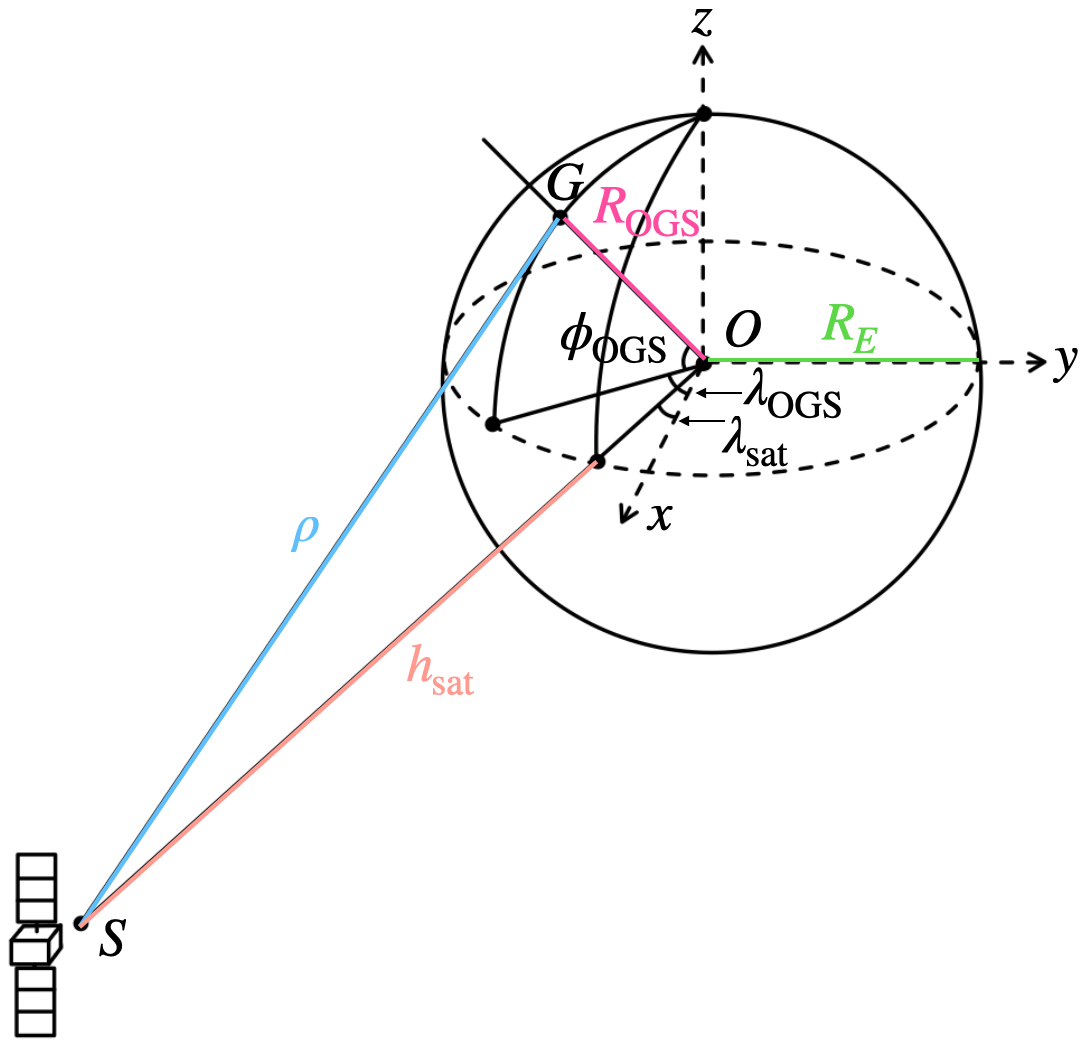}
  \caption{Geometry of the satellite-to-ground downlink scenario.
    The Earth is modeled as a sphere of radius $R_E$ centered at the origin $O$.
    The OGS is located at point $G$ with latitude $\phi_{\mathrm{OGS}}$
    and longitude $\lambda_{\mathrm{OGS}}$, at a radial distance $R_{\mathrm{OGS}}$ from the center ($h_{\mathrm{OGS}}$ is negligible compared to $R_E$, hence $R_{\mathrm{OGS}}\approx{}R_E$).
    The geostationary satellite $S$ is positioned at altitude $h_{\mathrm{sat}}$ and
    longitude $\lambda_{\mathrm{sat}}$. The slant range $\rho$ is the line-of-sight length from the
    satellite to the OGS.}
  \label{fig:earthsection}
\end{figure}

\begin{figure}
  \centering
  \includegraphics[width=0.8\linewidth]{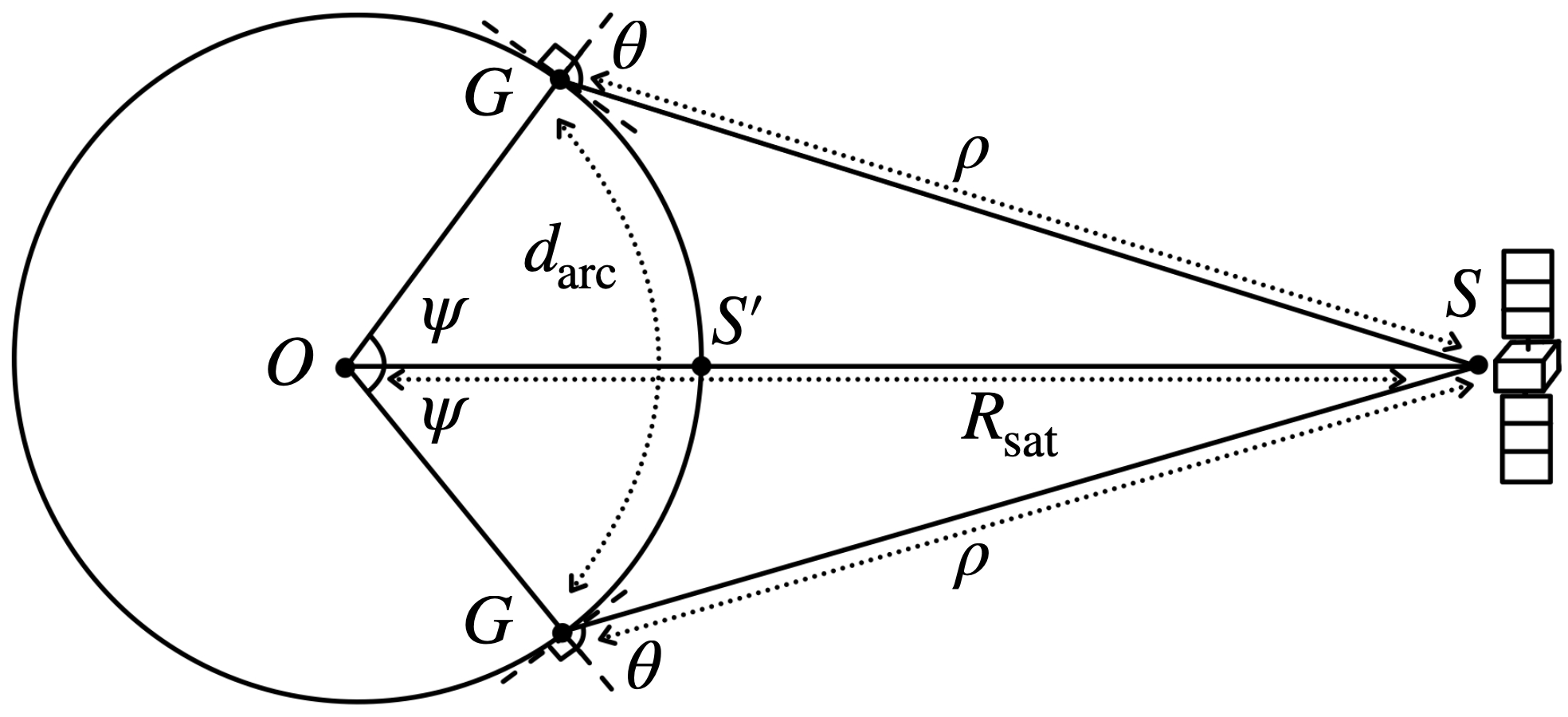}
  \caption{Cross-sectional geometry used to derive the maximum ground separation distance $d_{\mathrm{arc}}$ between two OGSs positioned symmetrically on the Earth's surface relative to the sub-satellite point $S'$, such that they both observe the geostationary satellite $S$ at a fixed zenith angle $\theta$. The central angle subtended by each station from the Earth's center $O$ is denoted by $\psi$.}
  \label{fig:arcgeom}
\end{figure}

\label{sec:channel_appendix_geometry}
We begin by modeling the geometry of the satellite-to-ground link, specifically calculating the zenith and azimuth viewing angles and the slant range \cite{soler1994determination,ilcev2016satellite}. We assume a perfectly spherical Earth of radius $R_E$, and an optical ground station (OGS) located at latitude $\phi_{\mathrm{OGS}}$, longitude $\lambda_{\mathrm{OGS}}$, and altitude $h_{\mathrm{OGS}}$. The geostationary satellite is orbiting above the equator at a longitude $\lambda_{\mathrm{sat}}$, and altitude $h_{\mathrm{sat}}$ (see Figure~\ref{fig:earthsection}). We define the parameters

\begin{equation}
  R_{\mathrm{OGS}} = R_E + h_{\mathrm{OGS}},
  \quad
  R_{\mathrm{sat}} = R_E + h_{\mathrm{sat}}.
\end{equation}
The unit vectors toward the OGS and the satellite in the Earth-centered frame are
\begin{equation}
  \hat{\bm r}_{\mathrm{OGS}} =
  \begin{bmatrix}
    \cos\phi_{\mathrm{OGS}}\cos\lambda_{\mathrm{OGS}} \\[3pt]
    \cos\phi_{\mathrm{OGS}}\sin\lambda_{\mathrm{OGS}} \\[3pt]
    \sin\phi_{\mathrm{OGS}}
  \end{bmatrix},
  \quad
  \hat{\bm r}_{\mathrm{sat}} =
  \begin{bmatrix}
    \cos\lambda_{\mathrm{sat}} \\[3pt]
    \sin\lambda_{\mathrm{sat}} \\[3pt]
    0
  \end{bmatrix},
\end{equation}
with position vectors,
\begin{equation}
  \bm r_{\mathrm{OGS}} = R_{\mathrm{OGS}}\hat{\bm r}_{\mathrm{OGS}},
  \qquad
  \bm r_{\mathrm{sat}} = R_{\mathrm{sat}}\hat{\bm r}_{\mathrm{sat}}.
\end{equation}

The central angle $\psi$ between the OGS and the sub-satellite point (i.e., the locations on the Earth's surface directly beneath the satellite) is given by (see Figure~\ref{fig:arcgeom})
\begin{equation}
  \begin{split}
    \cos\psi & = \hat{\bm r}_{\mathrm{OGS}}\!\cdot\!\hat{\bm r}_{\mathrm{sat}}              \\
             & = \cos\phi_{\mathrm{OGS}}\cos(\lambda_{\mathrm{sat}}-\lambda_{\mathrm{OGS}}) \\
             & = \cos\phi_{\mathrm{OGS}}\cos\Delta\lambda,
  \end{split}
\end{equation}
where we use the identity
$\cos(a-b)=\cos(a)\cos(b)+\sin(a)\sin(b)$.
Here, $\Delta\lambda = \lambda_{\mathrm{sat}} - \lambda_{\mathrm{OGS}}$.

The line-of-sight vector from the OGS to the satellite is (see Figure~\ref{fig:earthsection})
\begin{equation}
  \label{eq:vecLOS}
  \bm\rho = \bm r_{\mathrm{sat}} - \bm r_{\mathrm{OGS}}.
\end{equation}

\subsection{Zenith angle}

The vertical component magnitude of $\bm\rho$  is
\begin{equation}
  \label{eq:defu}
  u = \bm\rho\cdot\hat{\bm r}_{\mathrm{OGS}} = R_{\mathrm{sat}}\cos\psi - R_{\mathrm{OGS}},
\end{equation}
and the horizontal component magnitude is
\begin{equation}
  \label{eq:defh}
  h = \sqrt{\rho^2 - u^2} = R_{\mathrm{sat}}\sin\psi.
\end{equation}
The zenith angle is then given by
\begin{equation}
  \tan \theta = \frac{h}{u}
  = \frac{R_{\mathrm{sat}}\sin\psi}{R_{\mathrm{sat}}\cos\psi - R_{\mathrm{OGS}}}.
\end{equation}

\subsection{Slant range }

As already mentioned, the slant range is the length $\rho$ of the line-of-sight vector from the satellite to the OGS. The zenith angle $\theta$ at the OGS satisfies
\begin{equation}
  \rho\cos\theta = u, \qquad \rho\sin\theta = h.
\end{equation}
Using the relations for $u$ and $h$ given by Eqs.~(\ref{eq:defu})-(\ref{eq:defh}), we have that
\begin{equation}
  \begin{split}
    R_{\mathrm{sat}}^2 & = (R_{\mathrm{sat}}\cos\psi)^2 + (R_{\mathrm{sat}}\sin\psi)^2     \\
                       & = (\rho\cos\theta+R_{\mathrm{OGS}})^2 + (\rho\sin\theta)^2        \\
                       & = \rho^2 + 2\rho R_{\mathrm{OGS}}\cos\theta + R_{\mathrm{OGS}}^2.
  \end{split}
\end{equation}
This quadratic expression in $\rho$ yields
\begin{equation}
  \rho^2+2\rho R_{\mathrm{OGS}}\cos\theta+\big(R_{\mathrm{OGS}}^2-R_{\mathrm{sat}}^2\big)=0,
\end{equation}
whose relevant root is
\begin{equation}
  \label{eq:slant_range}
  \rho
  = \sqrt{R_{\mathrm{sat}}^2 + R_{\mathrm{OGS}}^2(\cos^2\theta - 1)} - R_{\mathrm{OGS}}\cos\theta.
\end{equation}
This parameter gives the geometric propagation distance between the OGS and the satellite, neglecting atmospheric refraction and Earth's oblateness~\cite{fowler1990solid}.
In practice, the true optical path deviates slightly from this quantity due to refractive bending. For example, one can integrate the ray curvature through a stratified atmospheric model to correct the geometric path length~\cite{vasylyev2019satellite}.
For a GEO satellite at a nominal altitude of $35{,}786$~km, the additional path length due to atmospheric refraction is typically less than $10$~km, corresponding to $<0.03\,\%$ of the total propagation distance. In addition, Earth's oblateness and orbital uncertainties introduce further small deviations. Specifically, the Earth's equatorial radius exceeds the polar radius by approximately $21$~km, corresponding to an oblateness of $\sim 0.3\,\%$, which modifies the ground-to-satellite range by less than $\pm15$~km depending on the OGS latitude~\cite{hofmann2005physical}.
Typical orbit-control limits for a well-maintained GEO spacecraft are $\pm0.05^{\circ}$ in longitude and inclination, producing slant range variations of $\pm30$--$40$~km~\cite{wertz2011space}.
Together, these non-refractive uncertainties contribute less than $0.2\,\%$ of the total optical path. Given their negligible magnitude relative to the total path length, these effects are omitted in the present analysis for simplicity.

\subsection{Azimuth angle}
Let us now derive the azimuth angle of the satellite from the OGS. To keep the discussion compact, we use the standard local east-north-up (ENU) basis at the OGS~\cite{soler1994determination,navipedia_ecef_enu,ilcev2016satellite}. In the Earth-centered frame, the local east and north unit vectors at latitude $\phi_{\mathrm{OGS}}$ and longitude $\lambda_{\mathrm{OGS}}$ are
\begin{align}
  \hat{\bm e}=
  \begin{bmatrix}
    -\sin\lambda_{\mathrm{OGS}}           \\[2pt]
    \phantom{-}\cos\lambda_{\mathrm{OGS}} \\[2pt]
    0
  \end{bmatrix},
  \qquad
  \hat{\bm n}=
  \begin{bmatrix}
    -\sin\phi_{\mathrm{OGS}}\cos\lambda_{\mathrm{OGS}} \\[2pt]
    -\sin\phi_{\mathrm{OGS}}\sin\lambda_{\mathrm{OGS}} \\[2pt]
    \phantom{-}\cos\phi_{\mathrm{OGS}}
  \end{bmatrix},
\end{align}
which fulfill
% As already discussed, the line-of-sight vector is given by Eq.~\eqref{eq:vecLOS}.
% Since $\hat{\bm e}$ and $\hat{\bm n}$ are tangent to the Earth at the OGS, they are orthogonal to $\bm r_{\mathrm{OGS}}$, so that
\begin{equation}
  \bm r_{\mathrm{OGS}}\cdot\hat{\bm e}=0,
  \qquad
  \bm r_{\mathrm{OGS}}\cdot\hat{\bm n}=0.
\end{equation}
Figure~\ref{fig:azimuth_proof} shows the decomposition of the line-of-sight vector $\rho$ onto the local ENU basis.

\begin{figure}[t]
  \centering
  \resizebox{0.92\columnwidth}{!}{%
    \begin{tikzpicture}[
        >=Latex,
        line cap=round,
        line join=round,
        font=\footnotesize,
        basis/.style={->, very thick},
        los/.style={->, very thick, red!70!black},
        patch/.style={draw=black!45, line width=0.5pt},
        proj/.style={draw=black!70, dashed},
        every node/.style={inner sep=1.2pt}
      ]
      \coordinate (P1) at (0.20,0.45);
      \coordinate (P2) at (5.55,0.45);
      \coordinate (P3) at (6.70,2.15);
      \coordinate (P4) at (1.35,2.15);
      \fill[blue!3] (P1) -- (P2) -- (P3) -- (P4) -- cycle;
      \draw[patch] (P1) -- (P2) -- (P3) -- (P4) -- cycle;

      \coordinate (G)  at (1.45,0.92);
      \coordinate (GE) at (2.90,0.92);
      \coordinate (GN) at (2.18,1.88);
      \coordinate (P)  at (3.92,0.92);
      \coordinate (H)  at (4.78,2.05);
      \coordinate (U)  at (1.45,2.95);
      \coordinate (S)  at (4.78,3.52);

      \draw[basis] (G) -- (GE) node[above right=-1pt] {$\hat{\bm e}$};
      \draw[basis] (G) -- (GN) node[left=1pt] {$\hat{\bm n}$};
      \draw[basis] (G) -- (U) node[left=1pt] {$\hat{\bm r}_{\mathrm{OGS}}$};
      \draw ($(G)+(0.14,0)$) -- ($(G)+(0.14,0.14)$) -- ($(G)+(0,0.14)$);

      \draw[los] (G) -- (S) node[midway, above left=-1pt] {$\bm\rho$};
      \draw[proj] (S) -- (H);
      \draw[thick] (G) -- (H);
      \draw[thick] (G) -- (P);
      \draw[thick] (P) -- (H);

      \node[below=3pt] at ($(G)!0.70!(P)$) {$\rho_E$};
      \node[right=4pt] at ($(P)!0.56!(H)$) {$\rho_N$};

      \fill (G) circle (1pt) node[below left] {$G$};
      \fill (H) circle (0.9pt);
      \fill (S) circle (0.9pt) node[above right=-1pt] {$S$};

      \pic[
        draw,
        ->,
        "$\alpha$",
        angle eccentricity=1.22,
        angle radius=6.5mm
      ] {angle = GE--G--H};
    \end{tikzpicture}%
  }
  \caption{Local ENU geometry used in the azimuth derivation. The line-of-sight vector $\bm\rho$ is decomposed on the local horizontal basis $(\hat{\bm e},\hat{\bm n})$, yielding the east and north components $\rho_E$ and $\rho_N$. The dashed segment indicates the projection of the satellite position onto the local horizontal plane. The azimuth angle $\alpha$ is measured counterclockwise from east through north.}
  \label{fig:azimuth_proof}
\end{figure}

To obtain the azimuth angle  $\alpha$ (counterclockwise from east through north), we project $\bm\rho$ onto the local horizontal basis
$(\hat{\bm e},\hat{\bm n})$:
\begin{align}
  \rho_E & = \bm\rho\cdot\hat{\bm e}
  = \bm r_{\mathrm{sat}}\cdot\hat{\bm e}
  = R_{\mathrm{sat}}\sin\Delta\lambda,
  \\[4pt]
  \rho_N & = \bm\rho\cdot\hat{\bm n}
  = \bm r_{\mathrm{sat}}\cdot\hat{\bm n}
  = -R_{\mathrm{sat}}\sin\phi_{\mathrm{OGS}}\cos\Delta\lambda.
\end{align}
The azimuth angle $\alpha$ is then given by
\begin{equation}
  \begin{split}
    \tan \alpha & =\frac{\rho_N}{\rho_E}
    =-\frac{\sin\phi_{\mathrm{OGS}}\cos\Delta\lambda}{\sin\Delta\lambda}.
  \end{split}
\end{equation}

\subsection{Ground separation distance}
\label{sec:ground_separation}
The long-distance benchmark in Fig.~\ref{fig:plob_bounds_wavelength_grid} requires converting a GEO viewing zenith angle into the maximum ground separation that can be served by two OGSs using the same satellite.
The locus of all ground points on a spherical Earth observing the GEO satellite at a fixed zenith angle $\theta$ forms a circle on the Earth's surface centered at the sub-satellite point. The maximum separation between two such points occurs when they are diametrically opposite to  each other relative to the sub-satellite point.
Due to the rotational symmetry of the spherical model, this distance is invariant with respect to the azimuth. For simplicity, we derive its value by considering two OGSs located along the same meridian as the sub-satellite point (see Figure~\ref{fig:arcgeom}). The resulting surface distance, denoted by \(d_{\mathrm{arc}}\) below, is the end-to-end ground separation served by the GEO link and
provides the horizontal-axis distance used in the main-text long-distance benchmark figure comparing the GEO
results with the PLOB~\cite{pirandola2017fundamental} and repeater-assisted fiber bounds~\cite{pirandola2019end}.

Let the sub-satellite point lie at latitude \( \phi=0 \) and longitude \( \lambda_{\mathrm{sat}} \). Two OGSs are positioned symmetrically at latitudes \( \pm\phi_{\mathrm{OGS}} \), both at longitude \( \lambda_{\mathrm{sat}} \) and altitude $h_{\mathrm{OGS}}$.
Applying the law of sines in the triangle given by the points $O, G,$ and $S$ (see Fig.~\ref{fig:arcgeom}), we have that
\begin{equation}
  \frac{\sin\psi}{\rho} = \frac{\sin\theta}{R_{\mathrm{sat}}},
\end{equation}
where $\psi$  is again the central angle, which is given by
\begin{equation}
  \label{eq:central_angle}
  \psi = \arcsin\!\left(\frac{\rho\sin\theta}{R_{\mathrm{sat}}}\right).
\end{equation}

This means that the corresponding arc distance from the sub-satellite point to one OGS at altitude $h_{\mathrm{OGS}}$ is given by
\begin{equation}
  \label{eq:one_ground_distance}
  R_{\mathrm{OGS}}\,\psi
  = R_{\mathrm{OGS}}\,\arcsin\!\left(\frac{\rho\sin\theta}{R_{\mathrm{sat}}}\right).
\end{equation}

Because the two stations are symmetrically located around the sub-satellite point, their total separation is twice this value:
\begin{equation}
  \label{eq:ground_separation}
  d_{\mathrm{arc}}
  = 2R_{\mathrm{OGS}}\,\arcsin\!\left(\frac{\rho\sin\theta}{R_{\mathrm{sat}}}\right).
\end{equation}

\subsection{Diffraction}

The transmitted optical beam from the satellite is modeled as a collimated Gaussian field with waist radius $\omega_0$ at the aperture $a_T$ of the transmitting telescope.
At a propagation distance $\rho$, the spot radius expands due to diffraction according to \cite{siegman1986lasers}
\begin{equation}
  \label{diffwidth}
  \omega_{\mathrm{d}} = \omega_0
  \sqrt{1 + \left(\frac{M^2 \rho}{z_R}\right)^2},
\end{equation}
Here, the spot radius $\omega_{\mathrm{d}}$ denotes the Gaussian beam radius in the receiver plane, defined as the transverse distance from the beam axis at which the irradiance has fallen to $1/e^2$ of its on-axis value.
The parameter $M^2 \ge 1$ is the beam quality factor that quantifies any deviation from an ideal TEM$_{00}$ Gaussian beam \cite{siegman1993defining}, and $z_R = \pi \omega_0^2 n / \lambda$ is the Rayleigh range in a medium of refractive index $n$. Throughout this work we will assume a refractive index $n=1$, since nearly the entire propagation path is in vacuum and the near-ground atmospheric refractive index at optical wavelengths deviates from unity only at the $\sim 10^{-4}$ level.
Representative commercial laser sources and their specified beam-quality factors are summarized in Table~\ref{tab:M2}.

\begin{table}[h]
  \centering
  \caption{Representative commercial laser sources with specified beam-quality factor $M^2$.}
  \begin{tabularx}{\columnwidth}{>{\raggedright\arraybackslash}Xcc}
    \toprule
    \textbf{Source} & \textbf{Wavelength [nm]} & \textbf{Specified $M^2$} \\
    \midrule
    CrystaLaser \cite{crystalaser_red_dpss}
                    & $656.5 \pm 3$            & $< 1.1$                  \\
    SolsTiS \cite{solstis_tisapph}
                    & $700$--$1000$            & $< 1.1$                  \\
    Menhir Photonics \cite{menhir_1550_series}
                    & $1545$--$1565$           & $< 1.05$                 \\
    \bottomrule
  \end{tabularx}
  \label{tab:M2}
\end{table}

Reducing diffraction-induced broadening can in principle be achieved by
(i) increasing the launch waist $\omega_0$,
(ii) decreasing the wavelength $\lambda$,
or (iii) minimizing $M^2$.
However, $\omega_0$ cannot be increased arbitrarily because the
physical telescope aperture truncates the Gaussian profile.
This clipping modifies the far-field divergence from ideal Gaussian
behavior.

This truncation can be  characterized by the ratio $a_T / \omega_0$.
If it is  too small, a significant portion of the Gaussian tail is clipped, resulting in power loss, diffraction sidelobes, and an expanded far-field pattern; if it is too large, the beam becomes unnecessarily narrow relative to the aperture, increasing divergence and reducing on-axis intensity at the receiver \cite{pirandola2021limits}.
An intermediate value therefore provides the best balance between transmitted power and beam quality \cite{stutzman2012antenna,andrews2005laser}. In this work, we choose
\begin{equation}
  \omega_0 = \frac{a_T}{4}.
  \label{eq:apertrun}
\end{equation}
This value preserves nearly all of the Gaussian power while avoiding the divergence penalty associated with overfilling the aperture, ensuring an effectively Gaussian far-field distribution and high coupling efficiency at the receiver.
The residual diffraction-induced broadening is then fully captured by Eq.~\eqref{diffwidth} through the beam quality factor $M^2$, which, as shown in Table~\ref{tab:M2}, is typically close to unity for laser sources. We choose  a conservative value of $M^2=1.2$ for our simulations.

\subsection{Turbulence}
\label{sec:turb}

\begin{figure}
  \centering
  \includegraphics[width=\linewidth]{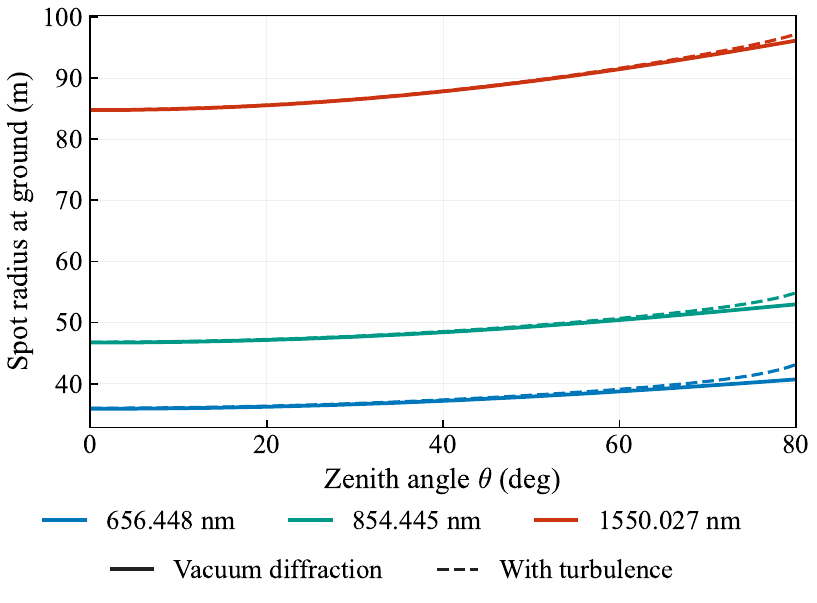}
  \caption{
    Spot radius at the ground (y axis) versus zenith angle (x axis) for three optical communication wavelengths ($\lambda = 656.448$~nm, $854.445$~nm, and $1550.027$~nm), assuming a coastal atmospheric model and a transmitter aperture $a_T = 1$~m.
    Solid lines represent the beam spreading in the absence of turbulence, while dashed lines represent the long-term spot broadening due to atmospheric turbulence. Both lines essentially overlap for most values of the zenith angle.}
  \label{fig:spot_width_analysis}
\end{figure}

This refers to irregular variations in the atmospheric refractive index, driven by temperature gradients, wind shear, pressure and humidity fluctuations, which perturb the propagation of optical waves.
These effects increase the apparent beam radius at the receiver beyond its diffraction-limited value \cite{andrews2005laser}. In a satellite-to-ground downlink scenario, the optical beam travels through vacuum before encountering the atmosphere only in the final few kilometers of the path. Consequently, atmospheric turbulence does not have sufficient propagation distance to significantly expand the physical beam footprint on the ground.
Therefore, beam broadening due to turbulence is expected to be very limited, and the effective spot size is mainly dominated by the diffraction limit of the satellite aperture. In any case, for completeness, we have included its effect in our simulations.

Specifically, the long-term average beam radius at the receiver, denoted by  $\omega_t$, can be expressed as
\begin{equation}
  \omega_t = \omega_d \sqrt{1 + T},
  \label{eq:beam_turb}
\end{equation}
where $\omega_d$ is the diffraction-limited beam radius in vacuum given by Eq.~\eqref{diffwidth}, and $T$ is a dimensionless turbulence broadening parameter that accounts for cumulative phase perturbations along the slant path of the optical signal~\cite{andrews2005laser}.
For a downlink geometry, $T$ is given by
\begin{equation}
  \begin{split}
    T = 4.35
    \left(\frac{2 \rho}{k\omega_d^2}\right)^{5/6}
    k^{7/6}
    \left(H_{\mathrm{atm}} - h_{0}\right)^{5/6}
    \sec^{11/6}\theta \\
    \times \int_{h_{0}}^{H_{\mathrm{atm}}}
    C_n^2(h)
    \left(\frac{h - h_{0}}{H_{\mathrm{atm}} - h_{0}}\right)^{5/3}
    \mathrm{d}h,
  \end{split}
  \label{eq:Turbulence_integral}
\end{equation}
where $k = 2\pi/\lambda$, $\theta$ is the zenith angle, $h_0$ is the height of the receiver above ground and $H_{\mathrm{atm}}$ is the effective top of the turbulent atmospheric layer (taken to be $\sim 20$~km) \cite{andrews2005laser}.
The integral incorporates the vertical turbulence profile function $C_n^2(h)$, which quantifies the local turbulence strength as a function of altitude.
The factor $T$ increases with turbulence strength and optical path length and decreases with larger transmitter beam waist $\omega_d$, illustrating that wide initial beams are less sensitive to atmospheric spreading.

The function $C_n^2(h)$ can be modeled using the generalized Hufnagel--Valley (HV) parameterization \cite{hufnagel1964modulation,valley1980isoplanatic,hardy1998adaptive,vasylyev2017free}:
\begin{equation}
  \begin{split}
    C_n^2(h)
    = A \exp\!\left(-\frac{h}{100}\right)
    + B \exp\!\left(-\frac{h}{1500}\right) \\
    + C h^{10} \exp\!\left(-\frac{h}{1000}\right),
  \end{split}
  \label{eq:Cn2_profile}
\end{equation}
where the coefficients $A$, $B$, and $C$ represent turbulence contributions from the surface layer, the troposphere, and the tropopause, respectively. Typical parameter sets correspond to the HV~5--7, HV~10--10, and HV~15--12 profiles, describing conditions ranging from strong (sea level) to excellent (astronomical) seeing \cite{hufnagel1964modulation,pugh2020adaptive}. Representative values for the turbulence parameters considered in this work are listed in Table~\ref{tab:location_params}.

Figure~\ref{fig:spot_width_analysis} illustrates the spot size  at the ground for three wavelengths as a function of the zenith angle $\theta$ for a coastal atmosphere and a transmitter aperture of 1~m.

In free-space optical communication, the power collected by the receiver is fundamentally limited by the spatial spread of the beam relative to the receiver aperture and by the stochastic misalignment of the beam centroid. We analyze these two distinct attenuation mechanisms separately below.

\subsection{Geometric loss}

Geometric loss, often referred to as aperture truncation loss, occurs because the optical beam spreads due to diffraction and atmospheric turbulence as it propagates. Even with perfect alignment, a significant fraction of the transmitted beam power may fall outside the finite receiver aperture.

Assuming a circularly symmetric Gaussian intensity distribution, the geometric efficiency $\eta_{\mathrm{geo}}$ represents the fraction of the total optical power intercepted by the receiver aperture in the absence of pointing errors \cite{andrews2005laser}:
\begin{equation}
  \eta_{\mathrm{geo}} = 1 - \exp\!\left(-\frac{a_R^2}{2\omega_t^2}\right),
  \label{eq:geom_loss}
\end{equation}
where $a_R$ is the receiver aperture and $\omega_t$ is the turbulence-broadened beam radius at the receiver plane given by Eq.~\eqref{eq:beam_turb}. When $a_R \gg \omega_t$, we have that $\eta_{\mathrm{geo}} \to 1$, and truncation loss becomes negligible; conversely, when $a_R \ll \omega_t$, a significant portion of the beam is not captured, subsequently incurring high loss.

\subsection{Pointing loss}
Pointing loss arises from the residual mechanical jitter of the transmitter, which causes random displacements of the beam centroid relative to the center of the receiver aperture \cite{trinh2022statistical}. This misalignment reduces the effective power coupled into the system.

Modeling the pointing error as a Gaussian random variable with radial standard deviation $\sigma_p$, with the receiver at a distance $\rho$, we approximate the average pointing efficiency $\eta_{\mathrm{p}}$ as \cite{andrews2005laser,kaushal2017optical}:
\begin{equation}
  \eta_{\mathrm{p}} = \frac{\omega_t^2}{\omega_t^2 + 4(\sigma_p \rho)^2}.
  \label{eq:pointing_loss}
\end{equation}

This approximation relies on assumptions of small angular deviations and circular symmetry, which normally hold within the typical operating regime of satellite optical links. However, the pointing efficiency $\eta_{\mathrm{p}}$ degrades rapidly as the jitter $\sigma_p$ becomes significant relative to the beam radius $\omega_t$. In our simulations, we consider three different scenarios for the angular jitter $\sigma_p$: $0.5\,\mu\mathrm{rad}$ (low pointing error), $1.0\,\mu\mathrm{rad}$ (moderate), and $2.0\,\mu\mathrm{rad}$ (high).

\subsection{Coupling loss}

The coupling loss quantifies the fraction of transmitted optical power that falls outside the receiver field of view (FOV) and is not coupled into the detection system. In a free-space optical (FSO) receiver, the FOV defines the angular region over which the detector collects the incoming light. The choice of FOV directly determines the trade-off between signal collection efficiency and susceptibility to background noise from stray  photons \cite{born1999principles,goodman2005fourier,tyson2022principles}. A narrow FOV minimizes background photon flux but risks truncating the received signal, whereas a wide FOV improves signal collection at the cost of increased background noise \cite{lanning2021quantum}. Thus, the receiver FOV defines the optimal operating point for
background-limited FSO links
\cite{tyson2022principles,lanning2021quantum,hardy1998adaptive}.

For a circular, diffraction-limited telescope aperture $a_R$ observing at wavelength $\lambda$, the on-axis point-spread function (PSF) follows the Airy distribution \cite{born1999principles},
\begin{equation}
  I(\gamma_{\text{FOV}}) = I_0 \left[\frac{2 J_1(u_{\text{FOV}})}{u_{\text{FOV}}}\right]^2,
  \qquad u_{\text{FOV}} = \frac{ \pi a_R \sin\gamma_{\text{FOV}}}{\lambda},
  \label{eq:airy-pattern}
\end{equation}
where $I(\gamma_{\text{FOV}})$ is the intensity at an off-axis angle $\gamma_{\text{FOV}}$ relative to the optical axis, $I_0$ is the peak intensity at the center of the focal spot, and $J_1(\cdot)$ denotes the first-order Bessel function of the first kind. The Airy pattern represents the fundamental angular response of an ideal, unobstructed circular aperture.

The cumulative energy, $L_{\mathrm{FOV}}$, collected within a receiver FOV of half-angle $\gamma_{\mathrm{FOV}}$ is obtained by integrating the PSF over  $0$ to $\gamma_{\mathrm{FOV}}$. The result is the following analytic expression \cite{born1999principles,schroeder2000astronomical}:

\begin{equation}
  \label{eq:encircled-energy}
  L_{\mathrm{FOV}}(\gamma_{\mathrm{FOV}}) = 1 - \left[J_0(u_{\mathrm{FOV}})\right]^2 - \left[J_1(u_{\mathrm{FOV}})\right]^2,
\end{equation}
where $J_0(\cdot)$ is the zeroth-order Bessel function of the first kind.
The solid angle subtended by a receiver FOV of half-angle $\gamma_{\mathrm{FOV}}$ is
\begin{equation}
  \Omega_{\mathrm{FOV}} = 2\pi \left(1 - \cos \gamma_{\mathrm{FOV}}\right).
  \label{eq:solid-angle}
\end{equation}

The first minimum of the Airy pattern occurs at
\begin{equation}
  \gamma_{\mathrm{Airy}} \approx \frac{1.22\lambda}{a_R},
  \label{eq:airy-angle}
\end{equation}
which corresponds to the angular radius of the first dark ring.
Substituting $\gamma_{\mathrm{Airy}}$ into Eq.~\eqref{eq:encircled-energy}
yields under small angle approximation,
\begin{equation}
  L_{\mathrm{FOV}}(\gamma_{\mathrm{Airy}}) \approx 0.838.
\end{equation}

This means that approximately $83.8\%$ of the total optical power from an
on-axis diffraction-limited point source is contained within the first
Airy disk.
This defines the conventional diffraction-limited collection efficiency
used as a benchmark in FSO system design
\cite{born1999principles,goodman2005fourier,andrews2005laser}.
For narrower FOVs, we have that $L_{\mathrm{FOV}} < 0.838$, reflecting truncation of
the PSF core; for larger FOVs, it follows that $L_{\mathrm{FOV}} \to 1$, though
background photon collection increases with $\Omega_{\mathrm{FOV}}$.

In realistic atmospheric propagation conditions, random refractive-index fluctuations impose phase distortions across the telescope pupil, broadening the PSF and reducing the fraction of light remaining in the  core. The degree of degradation is commonly quantified by the \emph{Strehl ratio} $S$, defined as the ratio of the on-axis intensity in the aberrated image to that of a perfect diffraction-limited system. An empirical form for $S$ frequently adopted in FSO and astronomical optics is \cite{lanning2021quantum,gruneisen2016adaptive,sasiela2012electromagnetic}
\begin{equation}
  S = \left[1 + \left(\frac{a_R}{r_0}\right)^{5/3}\right]^{-6/5},
  \label{eq:strehl}
\end{equation}
where $r_0$ is the Fried coherence parameter, given by
\begin{equation}
  r_{0} = \left[ 0.423 \, k^{2} \, \sec(\theta)
    \int_{h_{0}}^{H_{\mathrm{atm}}} C_{n}^{2}(h) \, dh
    \right]^{-3/5},
\end{equation}
where $k$ is the wavenumber, $\theta$ is the zenith angle, $H_{\mathrm{atm}}$ is the effective height of the turbulent atmospheric layer, $h_{0}$ is the height of the OGS above ground, and $C_n^2(h)$ is given by Eq.~\eqref{eq:Cn2_profile}.
As turbulence strength increases, $S$ decreases, representing the fraction of optical power that remains coherently focused.

Combining the independent effects of turbulence-induced phase
degradation and finite-FOV truncation, the mean coupling efficiency can
be written as \cite{tyson2022principles,roddier1981effects}
\begin{equation}
  \eta^{\mathrm{FS}}_{\mathrm{cpl}}(\gamma_{\mathrm{FOV}}) = L_{\mathrm{FOV}}(\gamma_{\mathrm{FOV}}) \, S,
  \label{eq:eta_cpl_general}
\end{equation}
where the superscript FS is used to denote that we assume free-space coupling.

The collected signal could also be relayed into single-mode fiber (SMF) before detection. The SMF acts as a spatial filter by admitting only the fundamental transverse mode, thereby suppressing incoherent background in the received field.  In this scenario, the overall coupling efficiency quantifies the fraction of optical power transferred from the telescope focus into the guided mode. This quantity depends on the overlap between the focal-plane field Airy pattern and the fiber’s Gaussian mode \cite{shaklan1988coupling,coude1994integrated,jovanovic2017efficient}. Unlike a detector with a hard-edged angular acceptance, the SMF couples light according to its Gaussian amplitude weighting, and can be characterized by the effective FOV half-angle \cite{scriminich2022optimal,acosta2023quantum}

\begin{equation}
  \gamma_{\mathrm{SMF}}\approx\frac{2.24\lambda}{\pi a_R}.
\end{equation}
We note that here $\gamma_{\mathrm{SMF}}$ is an effective angular acceptance used for background/throughput modeling via Eq.~\eqref{eq:solid-angle}, and not a hard-edged truncation angle that can be  substituted into Eq.~\eqref{eq:encircled-energy}.

Even without turbulence or alignment errors, the intrinsic mismatch between the focused Airy pattern and the fundamental fiber mode limits the achievable coupling. For optimal coupling one should choose a Gaussian matched to the size of the Airy core, rejecting all the outer rings. Under this Gaussian-mode approximation, the maximum efficiency is typically quoted as $\eta_{\mathrm{max}} \approx 0.81$ \cite{shaklan1988coupling,jovanovic2017efficient}. Using the exact eigenmode of a step-index fiber reduces the theoretical limit to $\eta_{0} = 0.786$ \cite{fidler2008application}, which is what we choose for the simulations. Atmospheric turbulence further reduces the coherent fraction of the incoming wavefront, decreasing fiber coupling proportionally to the Strehl ratio $S$. Thus, we model the mean coupling efficiency as,
\begin{equation}
  \eta^{\mathrm{SMF}}_{\mathrm{cpl}} = \eta_0 \, S.
  \label{eq:eta_smf}
\end{equation}

\subsection{Adaptive optics compensation}

Adaptive optics (AO) systems are implemented to reduce turbulence-induced phase distortions by actively correcting phase aberrations across the receiver aperture \cite{tyson2022principles}.
In this framework, AO correction is modeled as an effective enhancement of the Fried coherence value from its uncorrected value $r_0$ to a closed-loop corrected value $r_0^{\mathrm{CL}} > r_0$.
This improvement directly increases the Strehl ratio defined previously, thereby reducing coupling loss.

Following the model of Lanning \textit{et al.}~\cite{lanning2021quantum}, the AO-compensated Fried parameter is expressed as
\begin{equation}
  \begin{split}
    r_0^{\mathrm{CL}}
    = 1.03^{3/5} a_R
    \bigg[
      \left(\frac{f_G}{f_c}\right)^{5/3}
      +\left(\frac{\pi}{2}\frac{f_{\mathrm{TG}}}{f_{tc}}\right)^{2}
      \bigg]^{-3/5},
  \end{split}
  \label{eq:r0cl}
\end{equation}
where  $f_c$ and $f_{tc}$ are the closed-loop correction bandwidths of the deformable mirror and tip--tilt subsystems in the AO system, respectively.
The parameters $f_G$ and $f_{TG}$, on the other hand, are the Greenwood and tracking--Greenwood frequencies \cite{hardy1998adaptive,tyson2022principles}, that characterize the temporal dynamics of atmospheric turbulence, defined as
\begin{align}
  \begin{split}
    f_G & = \bigg[\,0.1022\,k^{2}\sec\theta                                                                        \\
        & \quad \times \int_{h_{\mathrm{OGS}}}^{h_{\mathrm{atm}}} C_n^2(h)\,V^{5/3}(h)\,\mathrm{d}h\,\bigg]^{3/5},
  \end{split}
  \label{eq:fG} \\[6pt]
  \begin{split}
    f_{\mathrm{TG}} & =
    5.268\times10^{-2}\,a_R^{-1/6}\,k                 \\
                    & \quad \times \left[\,\sec\theta
      \int_{h_{\mathrm{OGS}}}^{h_{\mathrm{atm}}} C_n^2(h)\,V^{2}(h)\,\mathrm{d}h\,\right]^{1/2},
  \end{split}
  \label{eq:fTG}
\end{align}
where  $V(h)$ is the wind speed profile. This function is commonly represented using
the Greenwood wind model \cite{greenwood1977bandwidth}, which assumes a
Gaussian-shaped velocity peak near the tropopause.
The model captures the transition between lower-altitude boundary-layer
winds and strong high-altitude jet streams. In doing so, we have that $V(h)$ can be expressed as
\cite{greenwood1977bandwidth,trinh2022statistical,andrews2005laser}
\begin{equation}
  V(h) = v_{\mathrm{OGS}} + 30
  \exp\!\left[-\left(\frac{h +h_{\mathrm{OGS}}-12448}{4800}\right)^{2}\right],
  \label{eq:greenwood_wind}
\end{equation}
where $v_{\mathrm{OGS}}$ is the wind speed at the OGS and $h_{\mathrm{OGS}}$ is the altitude of the OGS above sea level.
For the case of a GEO downlink, the satellite has negligible apparent motion relative to the ground observer, and therefore the pseudo-wind term associated with satellite motion is omitted from the model.

Equations~\eqref{eq:r0cl}--\eqref{eq:fTG} show that increasing the AO control bandwidths $f_c$ and $f_{tc}$ increases $r_0^{\mathrm{CL}}$, improving wavefront coherence and raising on-axis intensity. Following~\cite{lanning2021quantum}, our simulations consider $f_{tc}=60~\mathrm{Hz}$ and $f_c\in\{130,200,500\}\,\mathrm{Hz}$, corresponding to low, moderate, and advanced AO correction.

\subsection{Absorption/scattering loss}
To model the channel loss arising from atmospheric absorption and scattering processes, we employ a radiative transfer code, via the \texttt{libRadtran} package \cite{emde2016libradtran,mayer2005libradtran}. We note that the MODTRAN radiative transfer code \cite{berk2014modtran} provides an alternative, widely used implementation with comparable capabilities. Both codes offer suites of standardized atmospheric profiles that differ in vertical temperature distributions and trace-gas abundances, which strongly affect infrared absorption. Among these, the widely adopted 1976 U.S. Standard Atmosphere \cite{united1976us} serves as a median climatological reference specifying altitude-dependent pressure, temperature, and constituent concentrations, which is what we use in this work.

\texttt{libRadtran} incorporates aerosol contributions to the losses using Shettle--Fenn type models \cite{shettle1979models,shettle1990models}. These models parameterize losses in terms of visibility, and prescribe distinct optical and microphysical properties for rural, urban, and maritime air masses \cite{shettle1979models}, our three representative aerosol types. For each case, the aerosol loading is controlled via the \texttt{libRadtran} visibility parameter, which we set to $5$, $10$, and $23$ km for coastal, urban, and rural conditions, respectively. These settings span a practical range of operational scenarios, from hazy coastal channels to clear-air rural links. The full set of aerosol types and visibility parameters adopted in this work is summarized in Table~\ref{tab:location_params}. Operationally, these visibility regimes may be related to prevailing conditions described by METAR and TAF aviation weather reports \cite{metarandtaf}.

For each atmospheric and aerosol scenario, we use \texttt{libRadtran} to compute the wavelength-dependent atmospheric transmittance $\eta_{\mathrm{atm}}$ for different zenith angles $\theta$. The results inherently account for the geometric path length through the atmosphere as well as altitude-dependent extinction.
% To provide intuition, one can consider the standard secant approximation, which assumes that the effective optical path length scales as $\sec\theta$ \cite{P21},
% \begin{equation}
%   \eta_{\rm atm}
%   = \bigl(\eta_{\rm atm}^{\rm zen}\bigr)^{\sec\theta},
% \end{equation}
% where $\eta_{\rm atm}^{\rm zen}$ is the transmittance at zenith. This relation gives a useful qualitative sense of how the transmittance decreases with increasing zenith angle. 

\texttt{libRadtran} also simulates downlink transmittance under cloudy skies
by incorporating clouds as layers with specified optical thickness.
The required cloud optical thickness (COT) data can be obtained from climate satellites, which we discuss below.

\subsubsection{Clouds}
\label{sec:clouds}
\begin{figure}[t]
  \centering
  \includegraphics[width=\linewidth]{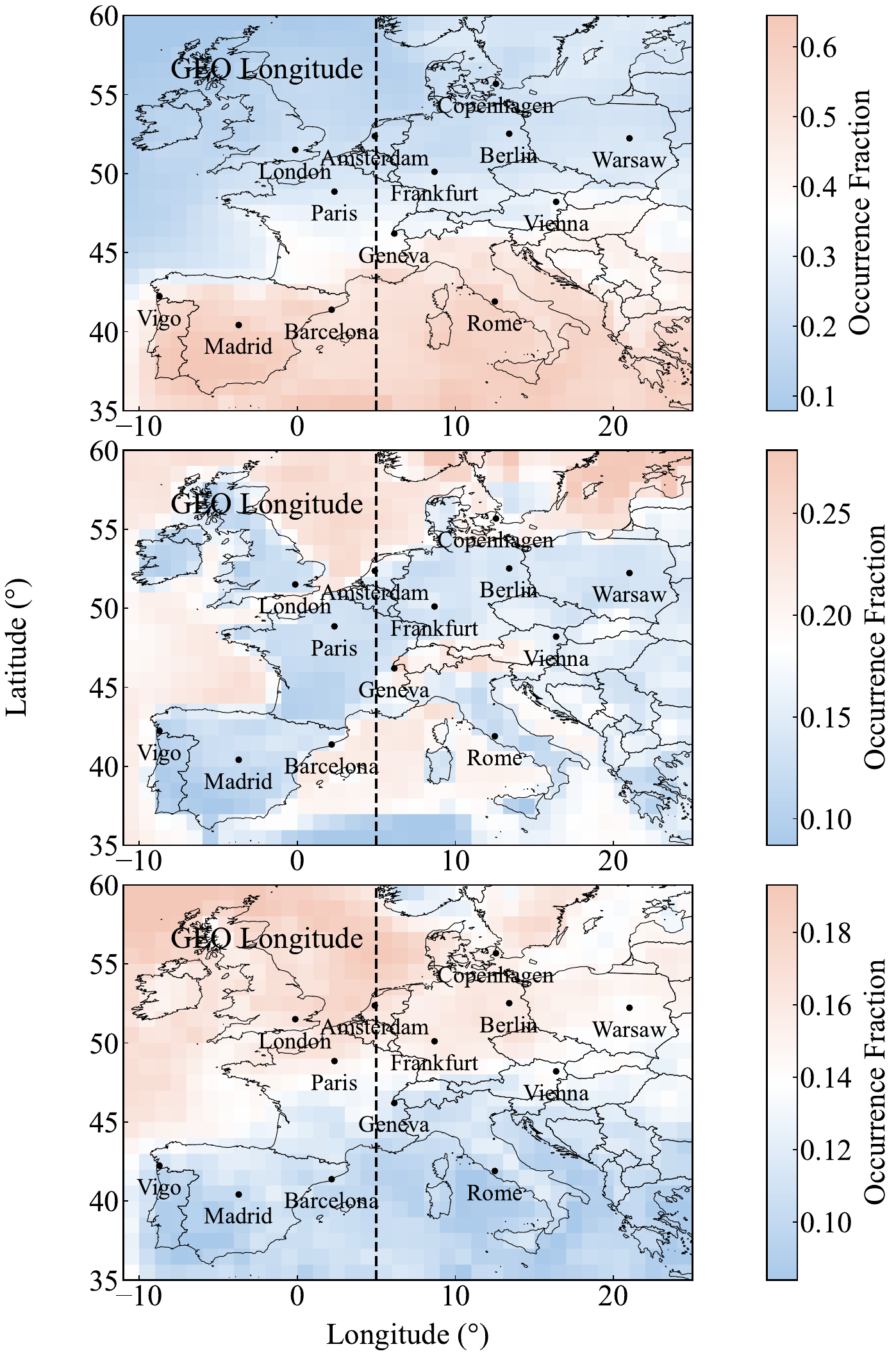}

  \caption{Climatological mean fraction of time for different sky conditions across Europe, derived from CRAAS satellite data. The maps display the occurrence probability for clear-sky conditions (top), thin-cloud conditions (center), and thick-cloud conditions (bottom). The color bars indicate the fractional occurrence for each regime, highlighting regional variations in optical transmission availability.}
  \label{fig:clouds}
\end{figure}

To account for cloud-induced effects in the QKD downlink analysis, we employ the CRAAS dataset based on the CLAAS-2.1 climate data record~\cite{tzallas2022craas,EUMETSAT93:online}. CRAAS provides a 14-year (2004--2017) satellite-derived climatology of cloud properties. These data come from the Spinning Enhanced Visible and InfraRed Imager (SEVIRI) onboard the geostationary Meteosat Second Generation (MSG) satellites MSG-1, MSG-2, and MSG-3, with 15-minute temporal and $1^\circ \times 1^\circ$ spatial resolution, spanning the European region.

For each grid cell, the dataset provides the mean fractional occurrence of six cloud optical thickness (COT) categories. In this study, we use the average coverage fractions, from the 2017 dataset, for the clear sky, first COT bin, and second COT bin, which we call clear-sky, thin-cloud, and thick-cloud regimes, respectively, as shown in Fig.~\ref{fig:clouds}. The losses of all remaining COT categories are prohibitive for GEO-QKD and therefore we exclude them from the study. The first and second COT bins have midpoints at 0.65 and 2.45, respectively. The clear-sky fraction of a given grid cell, $f_{\text{clear}}$, is defined as
\begin{equation}
  f_{\text{clear}} = 1 - \sum_{i=1}^6 f_{\text{COT},i},
\end{equation}
where $f_{\text{COT},i}$ is the mean fractional coverage of COT bin $i$ at the given grid cell.

For cloud-attenuated scenarios, the optical thickness of the first two COT bins is passed to \texttt{libRadtran} for calculating the zenith angle dependent transmission losses, using the default cloud settings in this package. The atmospheric and cloud input files used for these calculations are archived in Ref.~\cite{mannalath2026atmospheric}.

\subsection{Overall system loss}
Combining all the effects detailed so far, the overall transmittance $\eta_{\mathrm{sys}}$ of the satellite downlink system is given by

\begin{equation}
  \label{eq:totloss}
  \eta_{\mathrm{sys}}=\eta_{\mathrm{geo}} \eta_{\mathrm{p}}\eta_{\mathrm{cpl}} \eta_{\mathrm{atm}} \eta_{\mathrm{R}} \eta_{\mathrm{D}},
\end{equation}
where $\eta_{\mathrm{R}}$ denotes the total internal transmittance of the receiver subsystem excluding the single-photon detectors. This term captures the aggregate efficiency of the telescope optics (primary and secondary mirrors), relay mirrors, and the narrowband spectral filter used for background suppression. Throughout this work we adopt a nominal receiver optical loss of 4~dB, representative of typical ground-based receivers utilized in space-QKD studies \cite{liorni2019satellite,dirks2021geoqkd,lu2022micius,sidhu2023finite}. Finally, $\eta_{\mathrm{D}}$ represents the quantum efficiency of the single-photon detectors, which we assume is the same for all the detectors in a receiver module.

Figure~\ref{fig:zenith_loss_components_grid_sm} decomposes the total attenuation into the physical terms that drive the earlier contour plots. The solid curve $\eta_{\mathrm{sys}}$ sits far above any individual constant offset, showing that the link is dominated by the large geometric baseline loss together with the zenith-dependent growth of atmospheric and coupling losses. Each column represents one bundled urban system specification (Systems 1--3) that combines detector choice with different pointing-jitter assumptions, aperture sizes, and adaptive-optics strength. To keep the visual comparison compact, the plotted receiver-side term combines the fixed $4$ dB receiver-optics contribution with the detector-efficiency loss as $\eta_{\mathrm{R}}\eta_{\mathrm{D}}$.
\begin{figure}[htbp]
  \includegraphics[width=\linewidth]{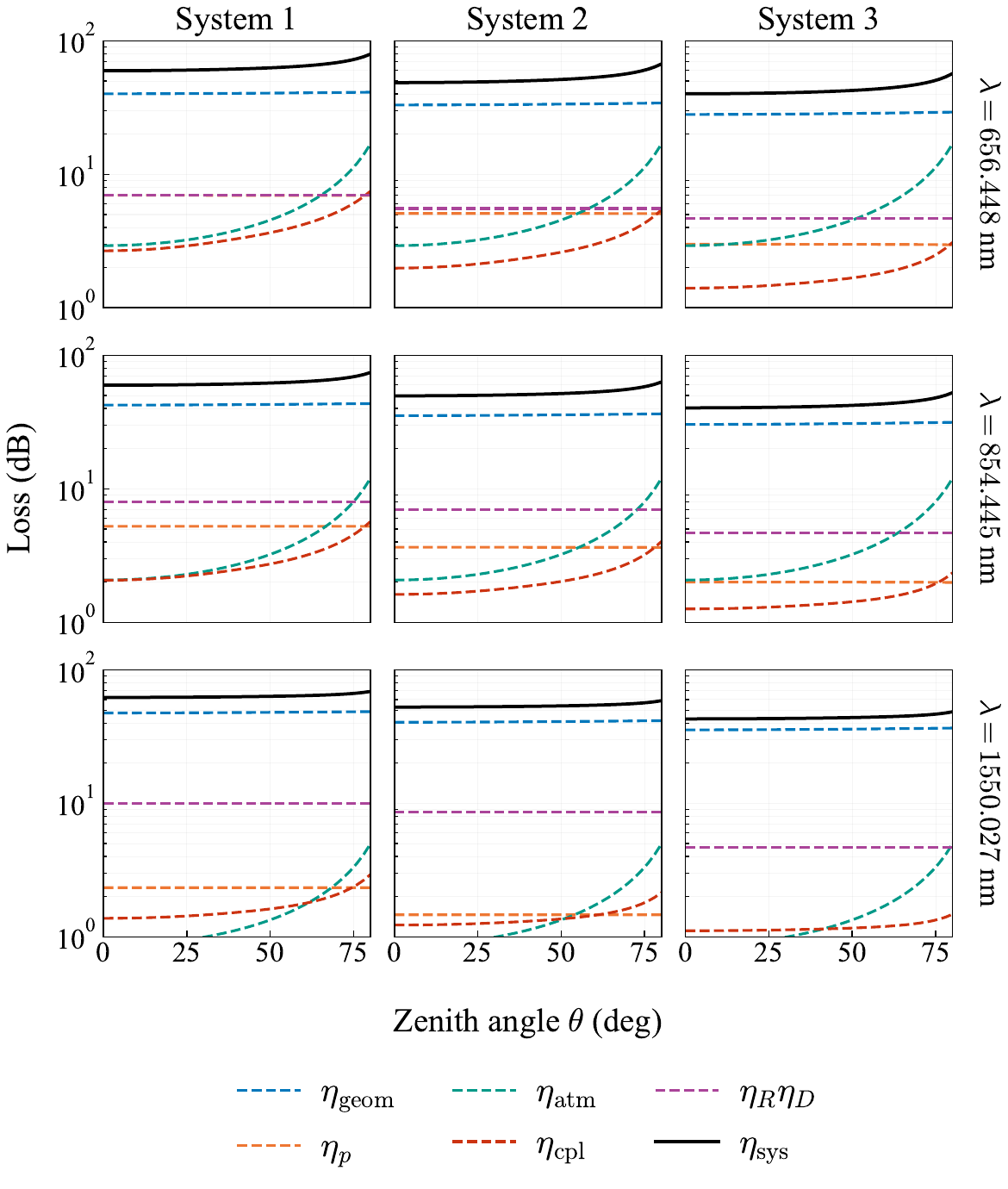}
  \caption{Loss-budget decomposition with loss in dB (y axis, logarithmic scale) versus zenith angle $\theta$ (x axis) for an urban location baseline, using SMF coupling and a fixed 4~dB receiver optics/filter loss. Dashed colored curves show the dB losses corresponding to the component transmission terms $\eta_{\mathrm{geo}}$ (geometric collection), $\eta_{\mathrm{p}}$ (pointing), $\eta_{\mathrm{atm}}$ (atmospheric transmission), $\eta_{\mathrm{cpl}}$ (receiver coupling), and $\eta_{\mathrm{R}}\eta_{\mathrm{D}}$ (receiver optics times detector efficiency), while the solid black curve gives the total system loss corresponding to $\eta_{\mathrm{sys}}$ (overall system transmission). Rows correspond to $\lambda=656.448$ nm, $\lambda=854.445$ nm, and $\lambda=1550.027$ nm. Columns correspond to bundled urban system specifications: System 1 [detector: APD Spec A; $a_T=0.50$ m; $a_R=1.00$ m; pointing jitter: high; AO: low ($f_c=130$ Hz)], System 2 [detector: APD Spec B; $a_T=0.75$ m; $a_R=1.50$ m; pointing jitter: moderate; AO: moderate ($f_c=200$ Hz)], and System 3 [detector: SNSPD Spec A; $a_T=1.00$ m; $a_R=2.00$ m; pointing jitter: low; AO: high ($f_c=500$ Hz)].}
  \label{fig:zenith_loss_components_grid_sm}
\end{figure}

\subsection{Background noise}
\label{sec:background_noise}

This subsection gives the explicit calculation of the mean background-photon level $\bar{n}_{\mathrm{B}}$ and the derived noise-click probability $p_{\mathrm{noise}}$ that define the nighttime and daytime noise scenarios used in Sec.~\ref{sec:feas}.

For a receiver with angular field of view $\Omega_{\mathrm{FOV}}$, using a detector within a temporal window $\Delta t$ and a spectral filter of bandwidth $\Delta \lambda$ centered around wavelength $\lambda$ (see Table~\ref{tab:filters}), the average number of background photons received within the detection window is given by \cite{liorni2019satellite,pirandola2021limits,P21}.

\begin{table}[ht]
  \centering
  \small
  \begin{tabularx}{\columnwidth}{>{\raggedright\arraybackslash}X@{\hspace{1em}}cccc}
    \toprule
    \textbf{Reference} & \textbf{\shortstack{Cntr.                                     \\(nm)}} & \textbf{\shortstack{BW\\(GHz)}} & \textbf{\shortstack{BW\\(nm)}} & \textbf{\shortstack{Loss\\(dB)}} \\
    \midrule
    Thorlabs FPQFA~\cite{thorlabs_fpqfa_fpqsa_fabry_perot_filters_2025}
                       & 656                       & 30        & $\sim$0.043 & $<$0.97 \\
                       & 854                       & 30        & $\sim$0.072 & $<$0.97 \\
                       & 1550                      & 30        & $\sim$0.24  & $<$0.97 \\
    \midrule
    Alluxa ULTRA~\cite{alluxa-6564-0p1-od4,alluxa_854_0.75_od6_ultra_narrow_bandpass,alluxa_1550.6_1.8_od4_2025}
                       & 656                       & $\sim$ 70 & 0.1         & $<$1    \\
                       & 854                       & $\sim$310 & 0.75        & $<$0.7  \\
                       & 1550                      & $\sim$225 & 1.8         & $<$0.46 \\
    \midrule
    OptiGrate BragGrate~\cite{optigrate_quantum_tech_2025,lumeau2010ultra}
                       & 656                       & 10        & $\sim$0.014 & $<$0.46 \\
                       & 854                       & 10        & $\sim$0.024 & $<$0.46 \\
                       & 1550                      & 10        & $\sim$0.08  & $<$0.46 \\
    \midrule
    Exail IXC-FBG-PS ~\cite{Exail_UltraNarrowBandwithFilters}
                       & 1550                      & 1         & $\sim$0.006 & $<$1    \\
    \bottomrule
  \end{tabularx}
  \caption{Comparative specifications of  bandpass filters. Loss in dB is reported or computed from stated peak transmission: $L=-10\log_{10}(T)$. Bandwidths were converted from frequency to length using   $\,\Delta\lambda=(\lambda^2/c)\Delta\nu\,$. In the table, Cntr. is the central wavelength and BW is the bandwidth}
  \label{tab:filters}
\end{table}

\begin{equation}
  \label{eq:bgphot}
  \bar{n}_{\mathrm{B}} = H_\lambda^{\mathrm{sky}} \, \Gamma_{\mathrm{R}},
\end{equation}
where the effective geometric-spectral-temporal acceptance factor is given by
\begin{equation}
  \Gamma_{\mathrm{R}} := \Delta \lambda \, \Delta t \, \Omega_{\mathrm{FOV}} \, a_R^2/4.
\end{equation}
We note that representative narrowband filter options used to parameterize $\Delta\lambda$ are listed in Table~\ref{tab:filters}.

The parameter $H_\lambda^{\mathrm{sky}}$ represents the photon flux spectral density from the sky background. It is expressed as
\begin{equation}
  H_\lambda^{\mathrm{sky}} := \frac{\lambda \, \tilde{H}_\lambda^{\mathrm{sky}} 10^6}{2 \hbar c},
\end{equation}
where $\hbar$ is the reduced Planck’s constant, $c$ is the speed of light in vacuum, and $\tilde{H}_\lambda^{\mathrm{sky}}$ denotes the spectral radiance of the sky (units of mW m$^{-2}$ nm$^{-1}$ sr$^{-1}$). To obtain realistic values for $\tilde{H}_\lambda^{\mathrm{sky}}$ under different observing conditions, we employ radiative transfer and astronomical sky background models for daytime and nighttime conditions.

\subsubsection{Daytime conditions}

\begin{figure}[htbp]
  \includegraphics[width=\linewidth]{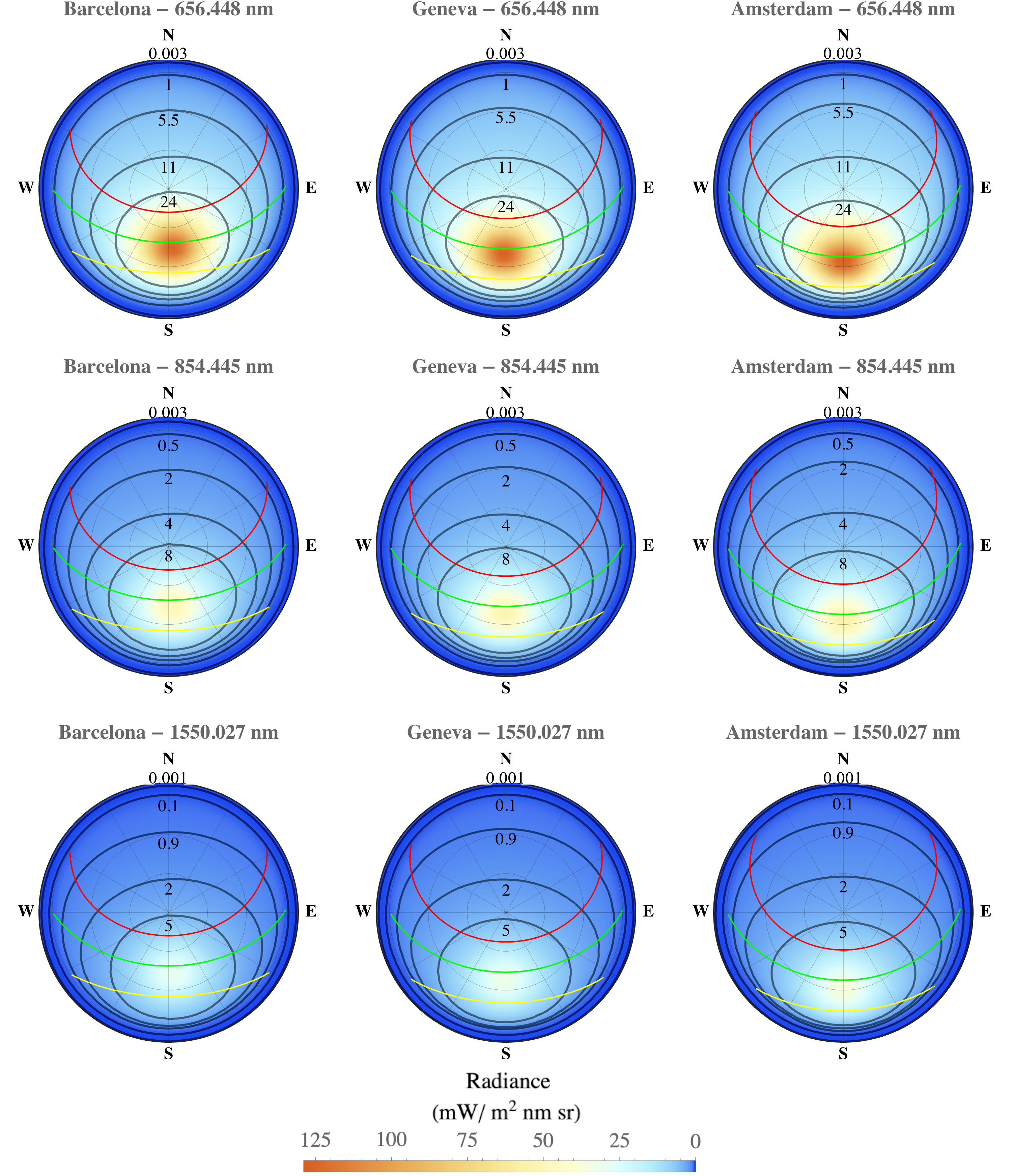}
  \caption{Sky-radiance solar position maps for Barcelona, Geneva, and Amsterdam with the line of sight fixed toward a GEO satellite at longitude $5^\circ$E. Solar paths for summer solstice (red), equinox (green), and winter solstice (yellow) are overlaid on radiance contours. The similarity in radiance distribution across latitudes justifies the use of a unified background noise model for annual key estimation.}
  \label{fig:sky_radiance}
\end{figure}

For the daytime analysis, we use the sky-radiance maps shown in Fig.~\ref{fig:sky_radiance}, simulated with \texttt{libRadtran} \cite{emde2016libradtran} software package with the atmospheric and aerosol models mentioned before. The solar source is parameterized by an extraterrestrial solar spectrum data \cite{kurucz1994synthetic}, and the solar zenith and azimuth angles relative to the receiver field of view are input to determine the sky radiance. The line of sight is fixed toward a GEO satellite at $5^\circ$E, while the solar tracks for Barcelona ($41^\circ$N), Geneva ($46^\circ$N), and Amsterdam ($52^\circ$N) show how the Sun moves relative to that fixed pointing direction over the solstices and equinox. Although the solar arc shifts with latitude, the relative Sun-satellite geometry remains qualitatively similar across these representative European cities. Each case contains a bright inner region where the Sun passes sufficiently close to the line of sight to drive the sky radiance to its highest values, together with broader surrounding regions of lower daytime radiance.
We use a unified European daytime partition in which the innermost high-radiance region is treated as an exclusion zone where key generation is paused, while the remaining usable daylight interval is represented by the low-, moderate-, and high-background scenarios associated with the outer radiance contours.

Daytime background is modeled for three representative solar zenith angles: $0^{\circ}$, $60^{\circ}$, and $90^{\circ}$, with azimuth fixed toward the east. These correspond to high, moderate, and low daylight conditions, respectively. The receiver is assumed to observe the satellite at a zenith angle of 60$^{\circ}$ and southward azimuth, representative of a mid-elevation downlink.

The diffuse sky spectral radiance,
$\tilde{H}_{\lambda}^{\mathrm{sky}}$,
is computed using \texttt{libRadtran} for each environment
(rural, urban, coastal) and wavelength
(656.448\,nm, 854.445\,nm, and 1550.027\,nm).
These radiances directly feed into the photon-flux expression used to
compute the mean number of background photons $\bar{n}_{\mathrm{B}}$.

\subsubsection{Nighttime conditions}
\label{sec:noise_extrapolation}

\begin{figure}[h]
  \centering
  \includegraphics[width=\linewidth]{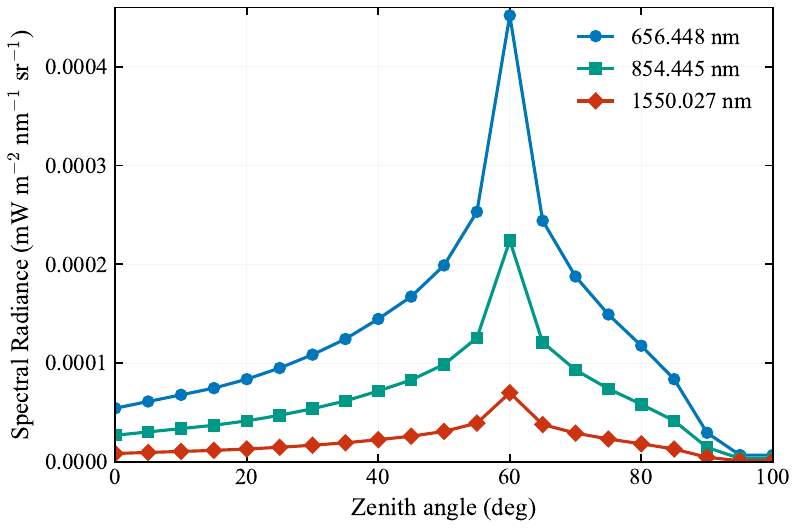}
  \caption{Spectral radiance (y axis) versus the full moon's zenith angle (x axis), simulated using ESO SkyCalc for wavelengths of 656.448 nm, 854.445 nm, and 1550.027 nm. The receiver is fixed at a zenith angle of $60^\circ$, resulting in a peak radiance when the moon aligns with the receiver's line of sight. For the nighttime radiance analysis presented in this work, the values at a lunar zenith angle of $30^\circ$ are used as a baseline and extrapolated to varying atmospheric conditions.}
  \label{fig:lunar_radiance}

\end{figure}

Nighttime background noise under full Moon is evaluated using the ESO SkyCalc tool~\cite{ESO_SkyCalc_2025} for Cerro Paranal, adopted as a representative of a rural dark-sky site~\cite{noll2013cerro,jones2013advanced}. This model accounts for multiple background contributions including airglow continuum and emission lines, scattered starlight, zodiacal light, and reflected moonlight. The tool allows configuration of the lunar phase, the zenith and azimuth angles of the Moon relative to the receiver’s pointing direction. The Moon is placed at the same azimuth as the satellite line of sight, and the satellite is observed at a zenith angle of $60^{\circ}$. Figure~\ref{fig:lunar_radiance} shows the resulting diffuse sky spectral radiance, $\tilde{H}_{\lambda,\mathrm{rural}}^{\mathrm{moon}}(z_{\mathrm{moon}})$, as a function of the Moon zenith angle $z_{\mathrm{moon}}$.

For the simulation, a single representative case is adopted at
$z_{\mathrm{moon}} = 30^{\circ}$.
This simplification is justified because nighttime illumination levels
remain well below the critical noise regime for QKD, and finer
discretization has negligible impact on system performance.

Urban and coastal values are obtained by scaling the rural full-Moon radiances using the ratios of the daytime solar radiances between locations at the same wavelength. We define the scaling factor $f_{\mathrm{loc}}$ as:
\begin{align}
  \tilde{H}_{\lambda,\mathrm{loc}}^{\mathrm{moon}}(30^{\circ})
   & =
  \tilde{H}_{\lambda,\mathrm{rural}}^{\mathrm{moon}}(30^{\circ})\,
  f_{\mathrm{loc}},
  \\
  f_{\mathrm{loc}}
   & =
  \frac{\tilde{H}_{\lambda,\mathrm{loc}}^{\mathrm{sun}}(30^{\circ})}
  {\tilde{H}_{\lambda,\mathrm{rural}}^{\mathrm{sun}}(30^{\circ})},
\end{align}
where $\mathrm{loc} \in \{\mathrm{urban},\mathrm{coastal}\}$. The corresponding scaling factors calculated for each wavelength and environment are listed in Table~\ref{tab:moon_scaling_factors}.

\begin{table}[ht]
  \centering
  \caption{Full-Moon radiance scaling factors used to obtain urban and coastal nighttime background radiances from the rural baseline.}
  \label{tab:moon_scaling_factors}
  \begin{tabular}{ccc}
    \toprule
    \textbf{Wavelength [nm]} & \textbf{$f_{\mathrm{urban}}$} & \textbf{$f_{\mathrm{coastal}}$} \\
    \midrule
    656.448                  & 1.03                          & 2.15                            \\
    854.445                  & 1.17                          & 2.86                            \\
    1550.027                 & 1.58                          & 7.55                            \\
    \bottomrule
  \end{tabular}
\end{table}

This scaling approximation is physically motivated by the fact that both sunlight and moonlight are broadband extraterrestrial sources interacting with identical atmospheric constituents. Since the scattering phase functions and extinction properties of the aerosols dominating the urban and coastal contributions are independent of the source intensity, the relative enhancement in sky radiance due to local aerosol loading is expected to remain consistent across both illumination regimes. We tabulate the radiance values used for the simulations in Table~\ref{tab:combined_radiance}.

\section{Simulation parameters}
\label{sec:params_appendix}

In this appendix, we specify the parameter values used in the numerical simulations. These parameters span the orbital geometry, atmospheric conditions, hardware specifications, and protocol settings that collectively define the GEO-QKD system under study.

\subsection{Locations}\label{sec:locations}

To assess the feasibility of GEO-QKD under diverse operational conditions, we consider three representative  locations for the OGS that span a practical range of atmospheric environments. These scenarios are characterized by their altitude, visibility conditions, and atmospheric turbulence profiles, as summarized in Table~\ref{tab:location_params}.

The three locations cover different altitude configurations that influence the atmospheric path length and aerosol distribution. The coastal scenario is situated at sea level (0 km altitude), maximizing the atmospheric column thickness and aerosol interaction. The moderate urban location is positioned at 0.2 km altitude, representing a slightly elevated OGS that reduces the lower-altitude aerosol layer traversed by the optical beam. The rural scenario is located at 0.4 km altitude, which significantly decreases the interaction with dense near-surface aerosols.

Visibility conditions (VIS) vary substantially across the three scenarios, reflecting different aerosol loading environments. The coastal site has a severely degraded visibility of 5 km, typical of near-urban coastal scenario with high particulate matter concentrations due to human activity as well as aerosols from the ocean. The moderate urban location exhibits an improved visibility of 10 km, representing urban environments with moderate aerosol content. The rural site features an excellent visibility of 23 km, characteristic of clear rural atmospheres with minimal aerosol contamination. For reduced visibility conditions (i.e., when \(2 < \mathrm{VIS} < 10\) km), the aerosol extinction coefficient is modeled as constant up to approximately 1 km altitude, while for clear conditions (i.e., when \(23 < \mathrm{VIS} < 50\) km), an exponential vertical distribution with rural-type scale heights is employed \cite{shettle1979models,metarandtaf}.

Atmospheric turbulence is described by site-specific Hufnagel--Valley (HV) $C_n^2(h)$ profiles. The wind field is taken from Eq.~\eqref{eq:greenwood_wind}, which sets the adaptive-optics temporal dynamics. The coastal, urban, and rural scenarios use the $(A,B,C)$ coefficients listed in Table~\ref{tab:location_params}, with ground-wind values $v_{\mathrm{OGS}}=25$, $10$, and $5$~m/s, respectively. The coastal scenario is represented by a stronger turbulence profile than the urban baseline, while the rural one adopts a weaker HV 10--10-like profile representative of average astronomical sites~\cite{hufnagel1964modulation,valley1980isoplanatic,hardy1998adaptive,pugh2020adaptive}. In this implementation, the AO-related Greenwood frequencies are therefore set by the combination of the location-dependent $C_n^2(h)$ profile and the Bufton-type high-altitude wind enhancement in Eq.~\eqref{eq:greenwood_wind}, rather than by a single fixed high-altitude wind speed.

\subsection{Wavelengths}\label{sec:wavelengths}

We focus on three representative wavelengths: 656.448~nm (H$\alpha$ line), 854.445~nm (Ca~II line), and 1550.027~nm (C-band representative). The first two lie on prominent Fraunhofer lines in the solar spectrum, where the Sun itself is significantly dimmer because a substantial fraction of the radiation at those wavelengths is absorbed before it leaves the solar atmosphere. This creates narrow depressions in the daylight spectrum at Earth (see Fig.e~\ref{fig:radandtrans}), which can be exploited for daylight free-space QKD when combined with narrowband filtering. Ref.~\cite{abasifard2024ideal} identifies such Fraunhofer depressions over 400--1700~nm; particularly H$\alpha$ and Ca~II, as local optima when one accounts for solar background, atmospheric transmission and  realistic detectors. Indeed, the H$\alpha$ window around 656~nm was one of the earliest Fraunhofer-based proposals for daylight QKD~\cite{rogers2006daylight}.

As for atmospheric transmittance, 656 nm and 854 nm lie in comparatively transparent windows
%  656 nm is below the O$_2$ B-band (686–690 nm) and away from H$_2$O features near 718–740 nm; 854 nm sits beyond the O$_2$ A-band (759–770 nm) and short of the strong H$_2$O complex at 930–960 nm, with only weak H$_2$O structure around 820–840 nm. 
across the 650–900 nm window~\cite{tomaello2011link}. Moreover, these wavelengths remain within the effective detection range of Si APDs. Daylight free-space demonstrations of QKD at visible and near-infrared wavelengths have routinely operated in the 700--900~nm region precisely because it balances high Si APD efficiency with manageable atmospheric loss and background~\cite{hughes2002practical, peloso2009daylight,ko2018experimental,cai2024free}. This alignment with the Fraunhofer-line strategy justifies the inclusion of these two wavelengths for detailed modeling and comparison.

The third wavelength, 1550.027~nm, is chosen as a representative telecom C-band point. We chose this specific value because it is the nearest tabulated wavelength to 1550~nm in the extraterrestrial solar spectrum used for the radiance calculations \cite{kurucz1994synthetic}. The C-band is extensively used in modern optical communications and has become increasingly relevant to satellite-based QKD~ \cite{avesani2021full}. Operation in this band leverages high atmospheric transmittance and reduced molecular (Rayleigh) scattering compared to visible bands~\cite{hearne2025wavelength, tomaello2011link}.  Rayleigh scattering decreases as $\lambda^{-4}$ \cite{bodhaine1999rayleigh,liou2002atmospheric}, while aerosol scattering efficiencies also diminish with increasing wavelength for the particle size distributions described in the Shettle--Fenn models \cite{shettle1979models}. From a systems perspective, it also offers superior single-mode fiber coupling into the OGS back-end infrastructure and full compatibility with existing telecom hardware.

\begin{figure}[h]
  \centering
  \includegraphics[width=\linewidth]{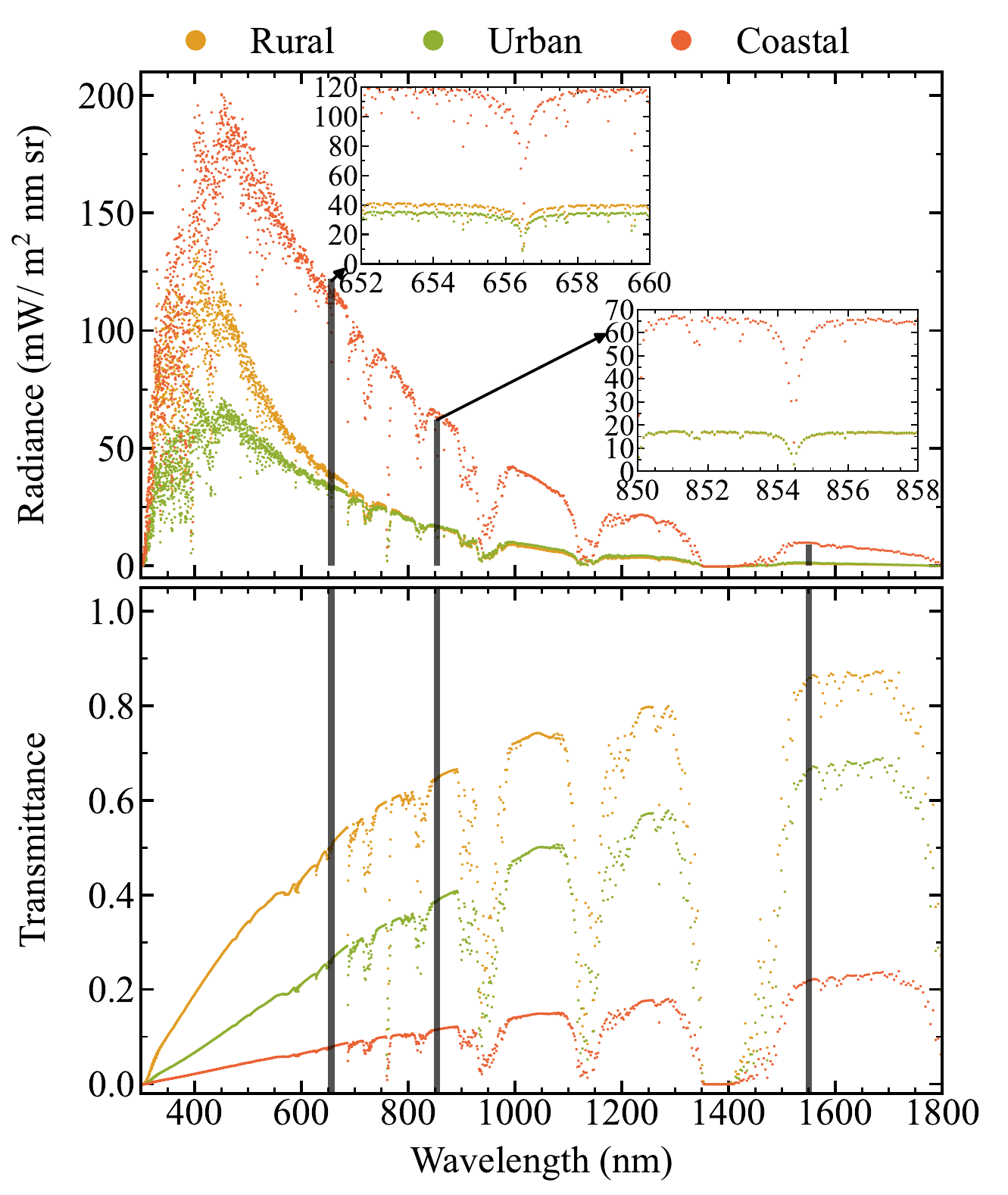}
  \caption{Spectral radiance (top, y axis) and atmospheric transmittance (bottom, y axis) versus wavelength (x axis) for three site types: Rural, Urban, and Coastal. The simulations were performed using the \texttt{libRadtran} software package. The receiver is pointed at $60^\circ$ from zenith, and the solar zenith angle is $45^\circ$. Black vertical lines mark the three  wavelengths considered in this work; 656.448 nm (H$\alpha$), 854.445 nm (Ca~II), and 1550.027 nm (C-band representative). Insets in the radiance panel magnify the depressions near H$\alpha$ and Ca~II absorption lines. Radiance units: mW m$^{-2}$ nm$^{-1}$ sr$^{-1}$; transmittance is unitless.}
  \label{fig:radandtrans}
\end{figure}

\subsection{Single-photon detection}

The performance of a satellite-to-ground QKD link critically depends on the receiver's detection hardware. We consider  Si APDs for visible and near-infrared wavelengths, InGaAs/InP APDs for telecom wavelengths, and SNSPDs across the spectrum. A comparative survey of commercially available modules and state-of-the-art research devices is summarized in Table~\ref{tab:masteRAT_pd}.

\begin{table*}[t]
  \centering
  \caption{Comparative specifications of single-photon detectors. In the table: DCR, dark count rate; Temp., temperature. }
  \label{tab:masteRAT_pd}
  \scriptsize
  \setlength{\tabcolsep}{4pt}
  \renewcommand{\arraystretch}{0.95}
  \begin{adjustbox}{center,max width=\textwidth}
    \begin{tabularx}{\textwidth}{>{\raggedright\arraybackslash}Xcccccc}
      \toprule
      \textbf{Detector }                                                          & \textbf{Efficiency}       & \textbf{DCR (Hz)}    & \textbf{Dead Time}   & \textbf{Afterpulse} & \textbf{Jitter} & \textbf{Temp.}  \\
      \midrule
      \multicolumn{5}{l}{\textbf{Si APDs}}                                                                                                                                                                            \\
      Laser Components COUNT Series \cite{lasercomponents_count_series_2023}
                                                                                  & $\sim$70\% @656\,nm       & 10--250              & 45\,ns               & 0.2\%               & 1000 ps         & 10--40$^\circ$C \\
                                                                                  & $\sim$40\% @854\,nm       & \multicolumn{5}{c}{}                                                                                  \\
      \addlinespace
      Laser Components COUNT T-Series  \cite{lasercomponents_count_t_series_2023} & \quad $\sim$70\% @656\,nm & 100--250             & 45\,ns               & 1\%                 & 350 ps          & 10--40$^\circ$C \\
                                                                                  & $\sim$40\% @854\,nm       & \multicolumn{5}{c}{}                                                                                  \\
      \addlinespace
      Laser Components COUNT NIR Series \cite{lasercomponents_count_nir_2024}     & \quad$\sim$65\% @656\,nm  & 50--500              & 45\,ns               & 0.2\%               & 1000 ps         & 10--40$^\circ$C \\
                                                                                  & $\sim$50\% @854\,nm       & \multicolumn{5}{c}{}                                                                                  \\
      \addlinespace
      Excelitas SPCM-NIR \cite{excelitas_spcm_nir_2019}                           & $\sim$75\% @656\,nm       & 10--1500             & 22--35\,ns           & 1\%                 & 350 ps          & 5--70$^\circ$C  \\
                                                                                  & $\sim$55\% @854\,nm       & \multicolumn{5}{c}{}                                                                                  \\
      \addlinespace
      Excelitas SPCM-AQRH \cite{excelitas_spcm_aqrh_2019}                         & $\sim$70\% @656\,nm       & 25--1500             & 22--42\,ns           & 0.5\%               & 350 ps          & 5--70$^\circ$C  \\
                                                                                  & $\sim$55\% @854\,nm       & \multicolumn{5}{c}{}                                                                                  \\
      \addlinespace
      Thorlabs SPDMH \cite{thorlabs_spdmh2f_2025}                                 & $\sim$70\% @656\,nm       & 100                  & 45\,ns               & 0.2\%               & 1000 ps         & 10--40$^\circ$C \\
                                                                                  & $\sim$55\% @854\,nm       & \multicolumn{5}{c}{}                                                                                  \\
      \addlinespace
      PicoQuant tau-SPAD \cite{ picoquant_tau_spad_discontinued_2014}             & $\sim$65\% @656\,nm       & $<20$                & 70\,ns               & $<1$\%              & 350-800ps       & 10--40$^\circ$C \\
                                                                                  & $\sim$55\% @854\,nm       & \multicolumn{5}{c}{}                                                                                  \\
      \addlinespace
      IDQ ID120~\cite{ID120fea43:online}                                          & $\sim$70\% @656\,nm       & $<300$               & 1\,$\mu$s            & ---                 & $<$400\,ps      & 10--40$^\circ$C \\
                                                                                  & $\sim$60\% @854\,nm       & \multicolumn{5}{c}{}                                                                                  \\
      \addlinespace
      Anisimova \emph{et\,al.} \cite{ anisimova2021low}                           & $\sim$50\% @854\,nm       & $<5$                 & 500\,ns              & 0.35\%              & 500-1000 ps     & -100$^\circ$C   \\
      \addlinespace
      \midrule
      \multicolumn{7}{l}{\textbf{InGaAs/InP APDs (1550\,nm)}}                                                                                                                                                         \\
      IDQ ID230 \cite{idquantique_id230_2023}                                     & 20\%                      & 200                  & 2--100\,$\mu$s       & ---                 & 200ps           & 10--25$^\circ$C \\
      \addlinespace
      IDQ ID Qube NIR \cite{idquantique_qube_nir_fr_2025}                         & 25\%                      & $1200$               & 0.1--80\,$\mu$s      & ---                 & 200 ps          & 10--35$^\circ$C \\
      \addlinespace
      Korzh \emph{et\,al.}  \cite{korzh2014free}                                  & 11\%                      & 1                    & 20\,$\mu$s           & 2.2\%               & 400 ps          & -110$^\circ$C   \\
      \addlinespace
      Comandar \emph{et\,al.} \cite{comandar2015gigahertz}                        & 55\%                      & ---                  & 10\,ns               & 10.2\%              & $\sim$100 ps    & 20$^\circ$C     \\
      \addlinespace
      Fang \emph{et\,al.} \cite{fang2020ingaas}                                   & 40\%                      & $3000$               & 88\,ns               & 5.5\%               & ---             & -20$^\circ$C    \\
      \addlinespace
      Xu \emph{et\,al.}  \cite{xu2024compact}                                     & 40\%                      & $2300$               & 10\,$\mu$s           & 8\%                 & 49 ps           & -80$^\circ$C    \\
      \addlinespace
      Yan \emph{et\,al.} \cite{yan2012ultra}                                      & 10\%                      & 100                  & ---                  & ---                 & 30 ps           & -60$^\circ$C    \\
      \addlinespace
      Signorelli \emph{et\,al.} \cite{signorelli2021ingaas}                       & 17\%                      & $2000$               & 1\,$\mu$s            & 4.5\%               & 319 ps          & -48$^\circ$C    \\
      \addlinespace
      Bouzid \emph{et\,al.} \cite{bouzid2014ingaas}                               & 20\%                      & 120                  & ---                  & 1.2\%               & ---             & -40$^\circ$C    \\
      \addlinespace
      Li \emph{et\,al.} \cite{li2022ultra}                                        & 20\%                      & 320                  & 1\,$\mu$s            & 0.57\%              & ---             & -40$^\circ$C    \\
      \addlinespace
      AUREA  \cite{aurea_spd_a_nir_2019}
                                                                                  & 30\%                      & $<800$               & 100\,ns--1\,ms       & $<0.1$\%            & 150\,ps         & 10--30$^\circ$C \\
      \addlinespace
      He \emph{et\,al.} \cite{he2022high_pde_ingaas}
                                                                                  & 55.4\%                    & 43800                & ---                  & ---                 & ---             & -26$^\circ$C    \\
      \addlinespace
      Ye \emph{et\,al.} \cite{ye2025room_temp_inp_ingaas}
                                                                                  & 20.1\%                    & 6820                 & ---                  & 0.15\%              & 120\,ps         & 20$^\circ$C     \\
      \midrule
      \multicolumn{7}{l}{\textbf{SNSPDs}}                                                                                                                                                                             \\
      \addlinespace
      Hu \emph{et\,al.}  \cite{hu2020detecting}                                   & \quad ~95\%               & 100                  & 42\,ns               & ---                 & 66\,ps          & -271$^\circ$C   \\
      \addlinespace
      IDQ ID281 Pro SNSPD \cite{idquantique_id281_pro_2025}                       & $>$90\% @810\,nm          & 1 - 5 @810\,nm       & 1--30\,ns            & ---                 & 3--40\,ps       & -270$^\circ$C   \\
                                                                                  & ~~$>$90\% @1550\,nm       & 1 - 100 @1550\,nm    & \multicolumn{4}{c}{}                                                           \\
      \addlinespace
      Marsili \emph{et\,al.} \cite{marsili2013detecting}
                                                                                  & $>90\% $                  & $\sim$1              & 40\,ns               & ---                 & 150\,ps         & -271$^\circ$C   \\
      \addlinespace
      Single Quantum Interleaved SNSPDs \cite{singlequantum_interleaved_2025}
                                                                                  & $>90\% $                  & $<100$               & 4\,ns                & ---                 & $<20$\,ps       & -270.7$^\circ$C \\
      \bottomrule
    \end{tabularx}
  \end{adjustbox}
\end{table*}

% Based on the survey, we observe that Si-APDs typically provide detection efficiencies in the range of 40--75\% with dark count rates (DCR) varying from 10 to 500\,Hz. In the telecom band, InGaAs APDs generally exhibit lower efficiencies (10--30\%) and higher noise (DCR up to several kHz), although active cooling and gating can improve performance and bring DCR down to hundreds of Hz. Conversely, SNSPDs consistently demonstrate superior performance across all wavelengths, with efficiencies exceeding 90\% and DCRs often below 10\,Hz, albeit at the cost of cryogenic operation.

In our simulations, we define two performance tiers for each detector class; denoted as \textit{Spec A} and \textit{Spec B} (see Table~\ref{tab:detector_params}). These tiers are designed to span the range from standard commercial off-the-shelf components to state-of-the-art research devices.
% For SNSPDs, Spec A represents a high-performance system with 85\% detection efficiency and a DCR of 10\,Hz, and Spec B is  an ultra-low-noise device with 95\% efficiency and a DCR of just 0.1\,Hz. Both tiers are assumed to exhibit negligible afterpulsing, a characteristic feature of superconducting nanowire technology. The specifications for APDs are wavelength-dependent; we consider  Si-APDs for the 656\,nm and 854\,nm bands and InGaAs/InP APDs for the 1550\,nm telecom band. For this technology, Spec A aligns with standard commercial modules featuring moderate efficiencies (25\%--50\%), higher dark counts (100--300\,Hz), and afterpulsing probabilities in the 2\%--3\% range. Spec B models higher-performance, optimized modules with improved efficiencies (35\%--70\%), reduced DCRs (10--50\,Hz), and minimized afterpulsing (0.5\%--1\%). 

Figure~\ref{fig:noise_probability_detector_scenarios_sm} shows the background noise as a function of the sites, detectors, and illumination conditions. At night, the lowest points are set by detector dark counts, so the SNSPD cases sit well below the APD cases, but as the daylight background rises the curves bunch together and the location dependence becomes dominant, with the coastal points consistently highest. The figure identifies the regimes in which better detectors materially help and when the sky background becomes the real bottleneck.

\begin{figure}[htbp]
  \includegraphics[width=\linewidth]{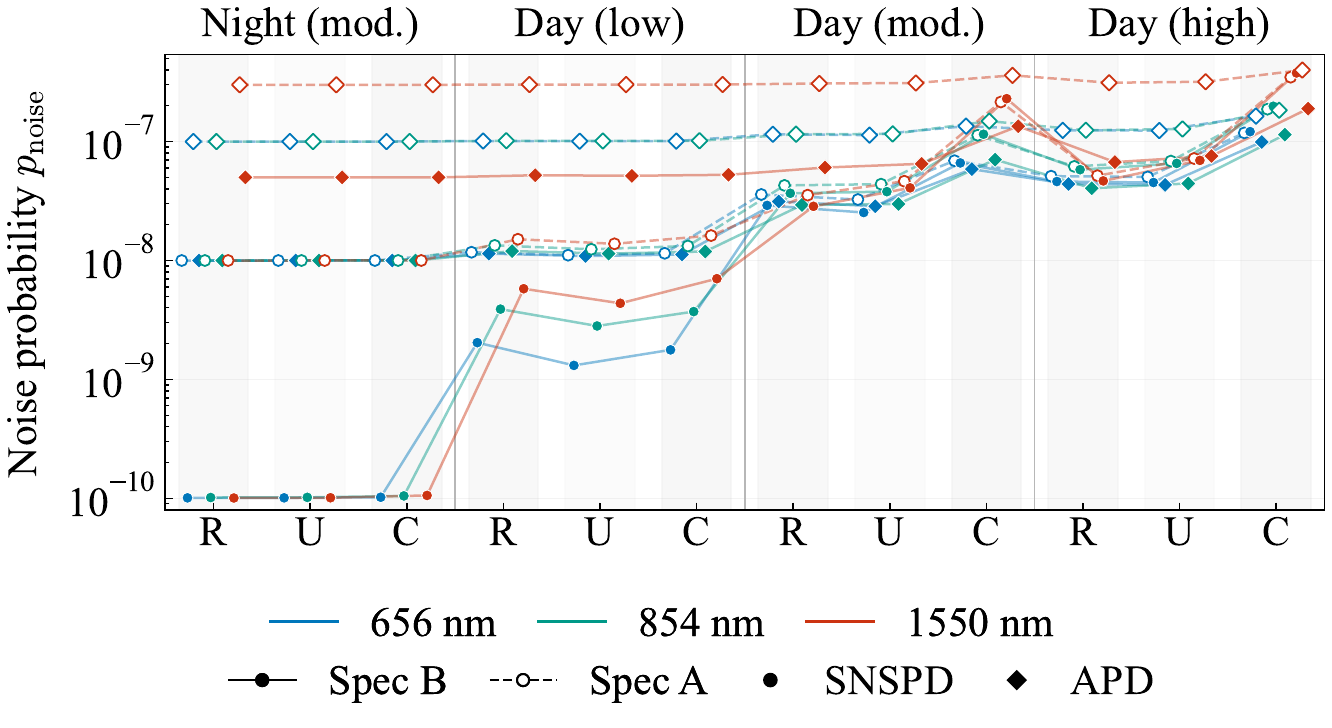}
  \caption{Noise-click probability $p_{\mathrm{noise}}$ per detector per gate (y axis) for different location, detector, and illumination scenarios (x axis). Precisely, the four scenario blocks correspond to nighttime (moderate), daytime low, daytime moderate, and daytime high background conditions. Within each block the x-axis labels R, U, and C denote rural, urban, and coastal sites. Curve color denotes wavelength $\{656.448,\,854.445,\,1550.027\}$ nm, marker shape distinguishes SNSPDs (circles) from APDs (diamonds), and line/marker fill distinguishes Spec A (dashed/hollow) from Spec B (solid/filled).}
  \label{fig:noise_probability_detector_scenarios_sm}
\end{figure}

\subsection{Misalignment}
\label{sec:misalignment}
Extensive experimental and theoretical work has demonstrated effective approaches for mitigating polarization misalignment in satellite-based QKD systems.
In the \textit{Micius} missions, active polarization control was implemented using reference laser pulses to monitor the evolving polarization state of the downlink channel, enabling real-time compensation of rotation caused by satellite motion and atmospheric birefringence~\cite{liao2017satellite,han2020polarization}.
This closed-loop stabilization maintained the received polarization frame to within a fraction of a degree and achieved an average intrinsic misalignment error of approximately $e_{\mathrm{mis}}\!\approx\!0.5$--$0.7\%$~\cite{liao2017satellite}.
Subsequent polarization-maintenance studies in free-space and satellite QKD systems have shown that polarization-maintaining optical trains, periscope-type telescope designs, and multi-waveplate compensation can sustain polarization extinction ratios on the order of $10^2$–$10^3$, implying intrinsic misalignment errors at or below the half-percent level~\cite{han2020polarization,Wu2020,Wu2022,Tan2024}.
At the network scale, long-baseline experiments, have combined trajectory-predicted polarization-frame rotation with automatic compensation using motorized waveplates, reporting overall intrinsic channel errors in the sub-percent regime~\cite{liao2017satellite,khmelev2024eurasian}.

%% file: journal/sections/appendix_plots_reduced.tex
\section{Supplementary plots}
\label{sec:appendix_plots}

This appendix first presents supplementary plots not present in the main text and then, Sec.~\ref{sec:appendix_plots_expanded} collects the additional panels from selected main-text figure families, excluding panels already shown in the main text.
Unless a quantity is explicitly swept on an axis or across panels, the remaining parameters follow the defaults summarized in Table~\ref{tab:sim_params2}.

Figure~\ref{fig:channel_loss_grid_sm} plots the system loss as a function of receiver aperture and pointing error for different wavelengths and transmitter apertures. On the fixed urban baseline, larger receiver apertures always move the operating point toward lower loss, while pointing jitter drives it back toward the high-loss regime. Increasing the transmitter aperture relaxes that trade throughout the grid, but the column comparison shows that the wavelength dependence is not one-dimensional: $656.448$ and $854.445$ nm reach the lowest best-case losses in the low-jitter corner, whereas $1550.027$ nm sacrifices some best-case performance in exchange for contours that are less steep against pointing error and therefore more tolerant once jitter grows.

\begin{figure}[htbp]
    \includegraphics[width=\linewidth]{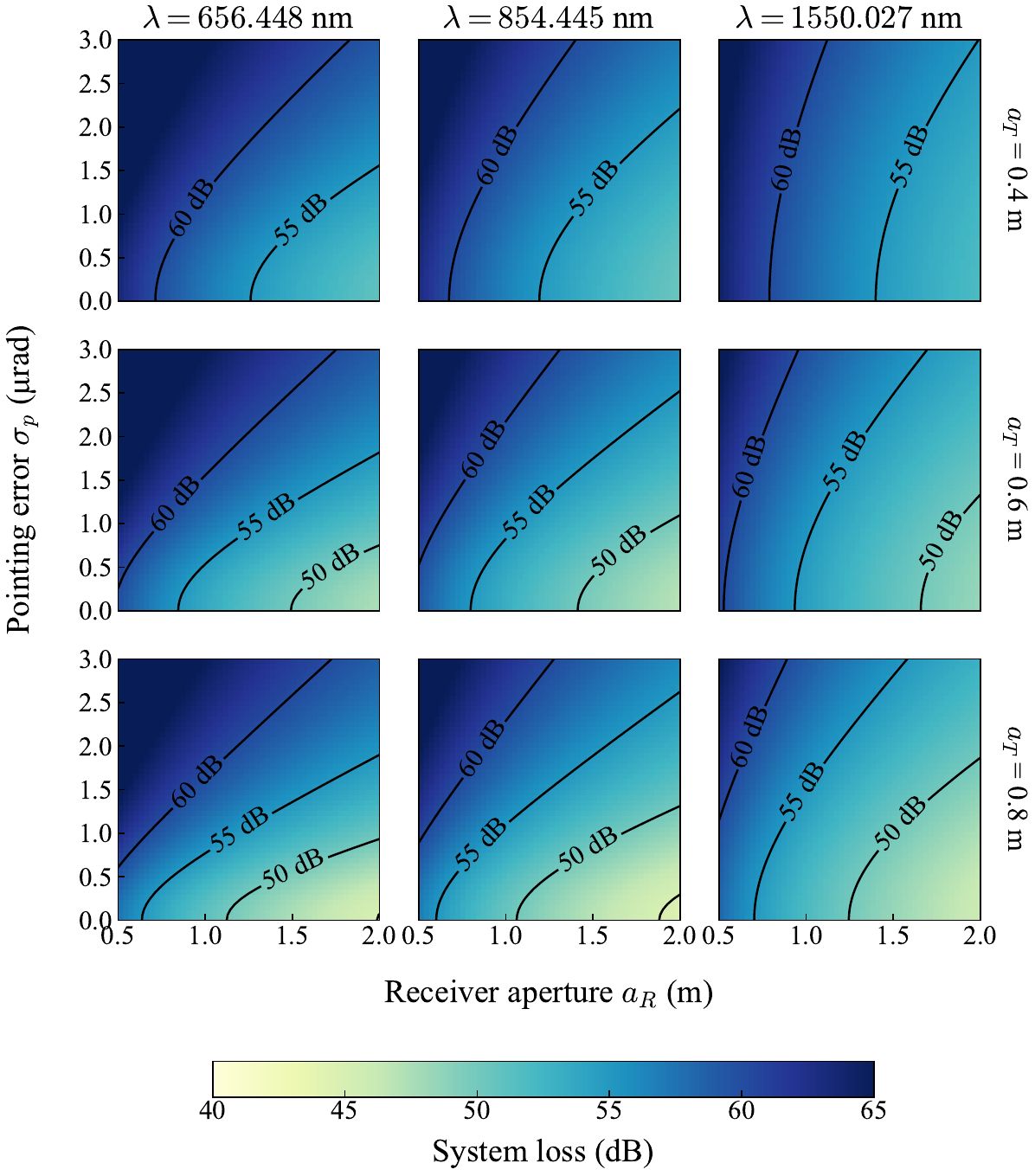}
    \caption{System loss (dB) as a function of receiver aperture $a_R$ (x axis) and pointing error $\sigma_p$ (y axis) on the fixed urban baseline with $\theta=60^\circ$, low AO correction ($f_c=130$ Hz), SMF coupling, and the default SNSPD Spec B receiver. The color map and labeled black contours indicate iso-loss levels. Columns correspond to $\lambda=656.448$ nm, $\lambda=854.445$ nm, and $\lambda=1550.027$ nm, while rows correspond to transmitter apertures $a_T$ of $0.4$ m, $0.6$ m, and $0.8$ m.}
    \label{fig:channel_loss_grid_sm}
\end{figure}

Figure~\ref{fig:zenith_rate_noise_scenarios_sm} fixes the site and receiver aperture and varies daytime noise and transmitter aperture. For 656.448 and 854.445 nm, increasing the daytime background mainly acts by pulling the zenith cutoff leftward; the rate level at small zenith angles changes less dramatically than the accessible-angle range. Increasing the transmitter aperture from $0.75$ m to $1.0$ m recovers part of that lost margin, but not enough to offset the high-noise penalty completely. The 1550.027 nm row degrades more gently across the same columns, consistent with the narrower telecom filter bandwidth assumed in the baseline model and higher transmission at large zenith angles.

\begin{figure}[htbp]
    \includegraphics[width=\linewidth]{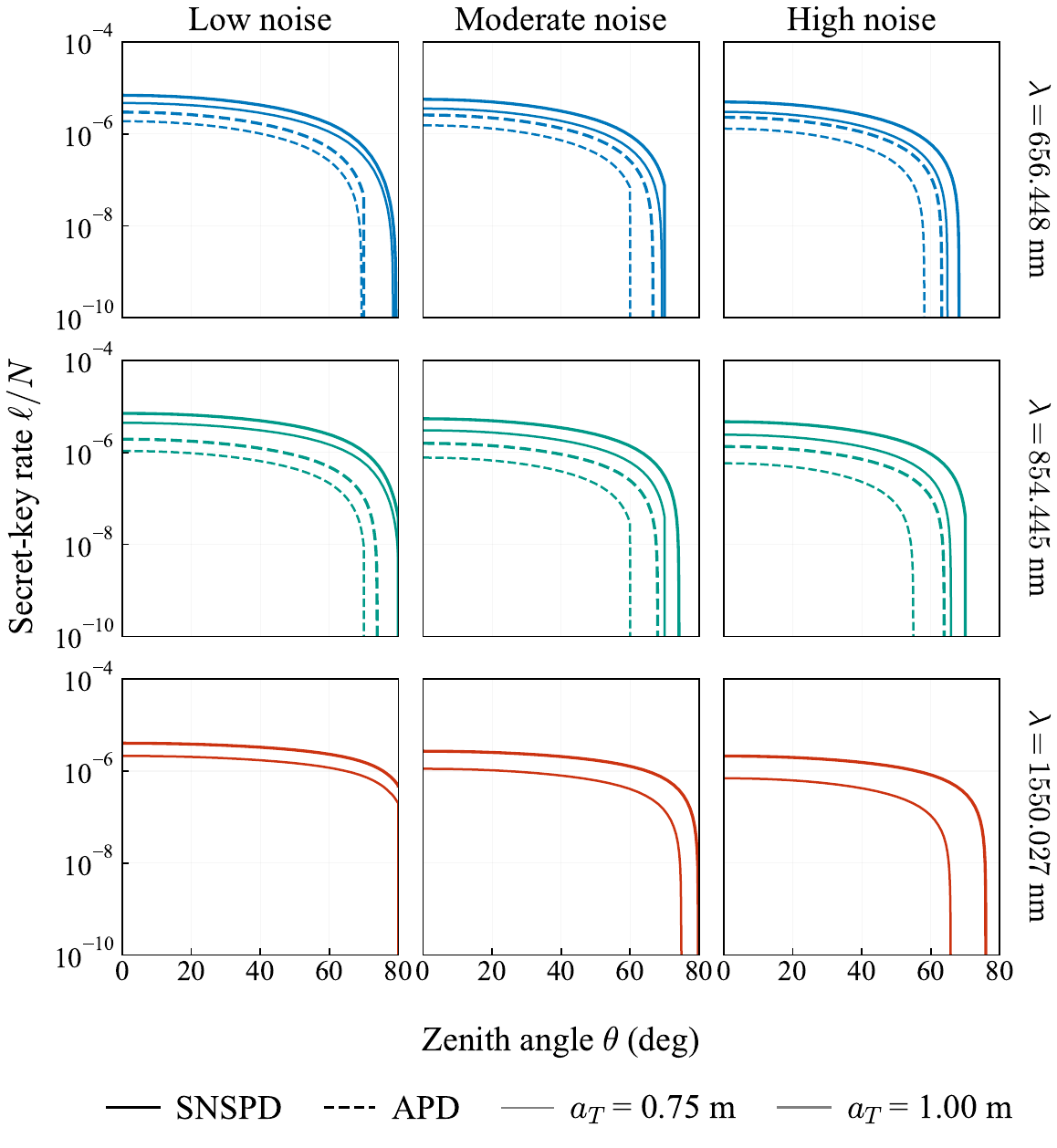}
    \caption{Finite-size secret-key rate $\ell/N$ (y axis) versus zenith angle (x axis) under daytime background-noise conditions for the default urban geometry with receiver aperture $a_R=1.5$ m, low AO correction ($f_c=130$ Hz), SMF coupling, and Spec B detectors. Columns correspond to low, moderate, and high noise, and rows to $\lambda=656.448$, $\lambda=854.445$, and $\lambda=1550.027$ nm. Solid curves denote SNSPDs, dashed curves denote APDs, and line width distinguishes transmitter apertures $a_T=0.75$ m and $1.0$ m.}
    \label{fig:zenith_rate_noise_scenarios_sm}
\end{figure}

Figure~\ref{fig:zenith_rate_ao_impact_sm}  shows that the improvement from no AO to $f_c=130$ Hz opens most of the feasible region, whereas the further increase to $500$ Hz mainly adds margin. Without AO, the urban and coastal cases are nearly impossible. Once AO is active, the rural and urban rows recover broad SNSPD operating windows, while coastal links remain the hardest regime and therefore benefit the most from stronger correction.

\begin{figure}[htbp]
    \includegraphics[width=\linewidth]{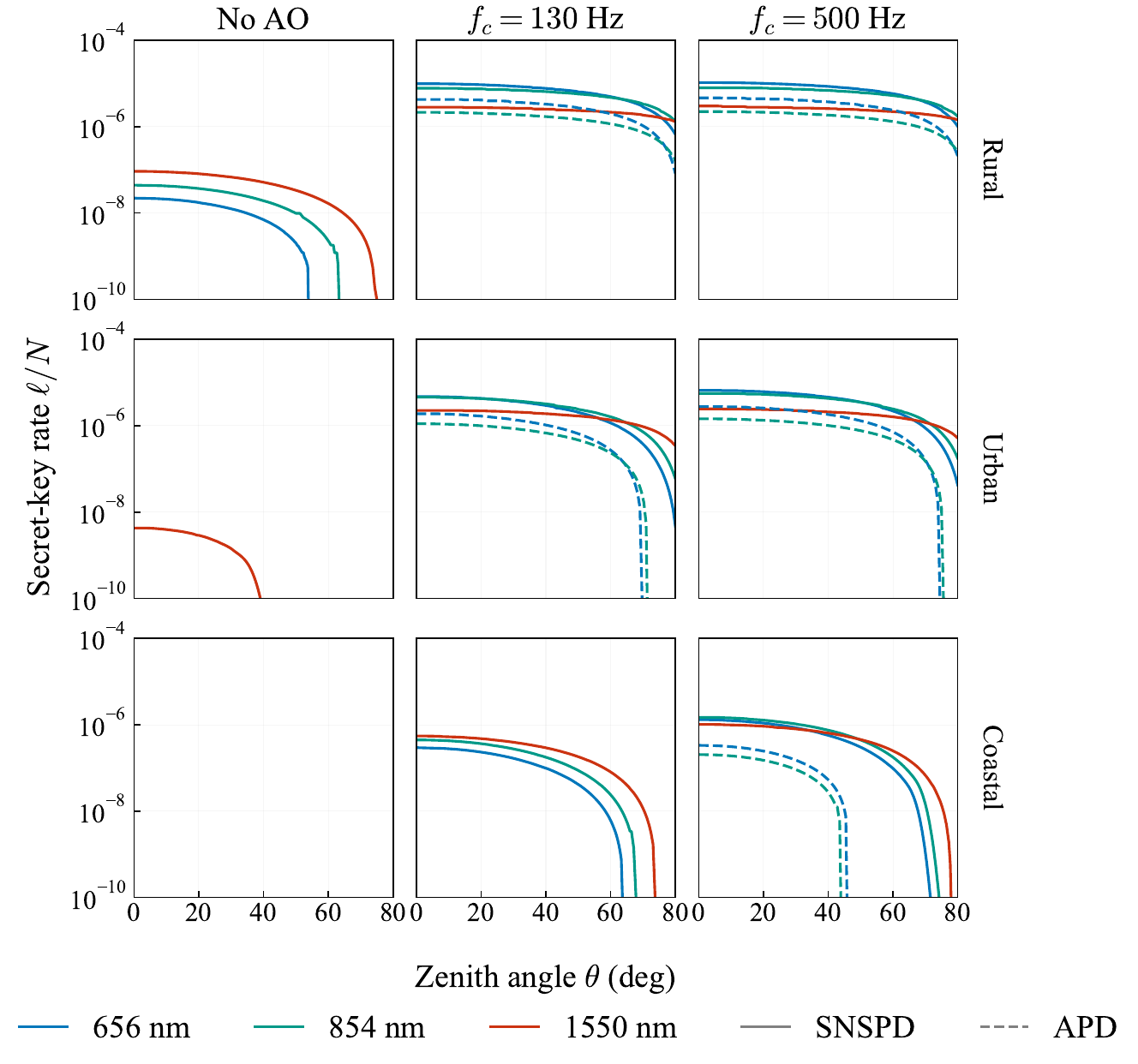}
    \caption{Finite-size secret-key rate $\ell/N$ (y axis) versus zenith angle (x axis) as a function of the bandwidth of the AO under nighttime operation at fixed apertures $a_T=0.75$ m and $a_R=1.5$ m, with SMF coupling and Spec B detectors. Rows correspond to rural, urban, and coastal locations, and columns compare no AO, low-bandwidth AO ($f_c=130$ Hz), and strong AO ($f_c=500$ Hz). Curve color denotes wavelength $\{656.448,\,854.445,\,1550.027\}$ nm, and line style distinguishes SNSPDs (solid) from APDs (dashed).}
    \label{fig:zenith_rate_ao_impact_sm}
\end{figure}

Figure~\ref{fig:fried_aperture_rate_contour_sm} illustrates the impact of different AO correction bandwidths in terms of the effective Fried parameter, making the dependence on atmospheric seeing explicit. The overlaid black dashed lines translate the closed-loop bandwidths $f_c=130$, $200$, and $500$ Hz into the corrected $r_0$ reached at each receiver aperture. The intersections of these lines with the white key-rate contours show how much aperture can be traded for stronger correction. In rural conditions the curves lie well inside the positive-key region, so increasing the AO bandwidth yields only a modest reduction in the required receiver aperture, with the gain somewhat more visible at longer wavelengths. In coastal conditions, by contrast, the AO curves move much more strongly across the contour map, showing that higher-bandwidth correction can substantially relax the aperture requirement.

\begin{figure}[htbp]
    \includegraphics[width=\linewidth]{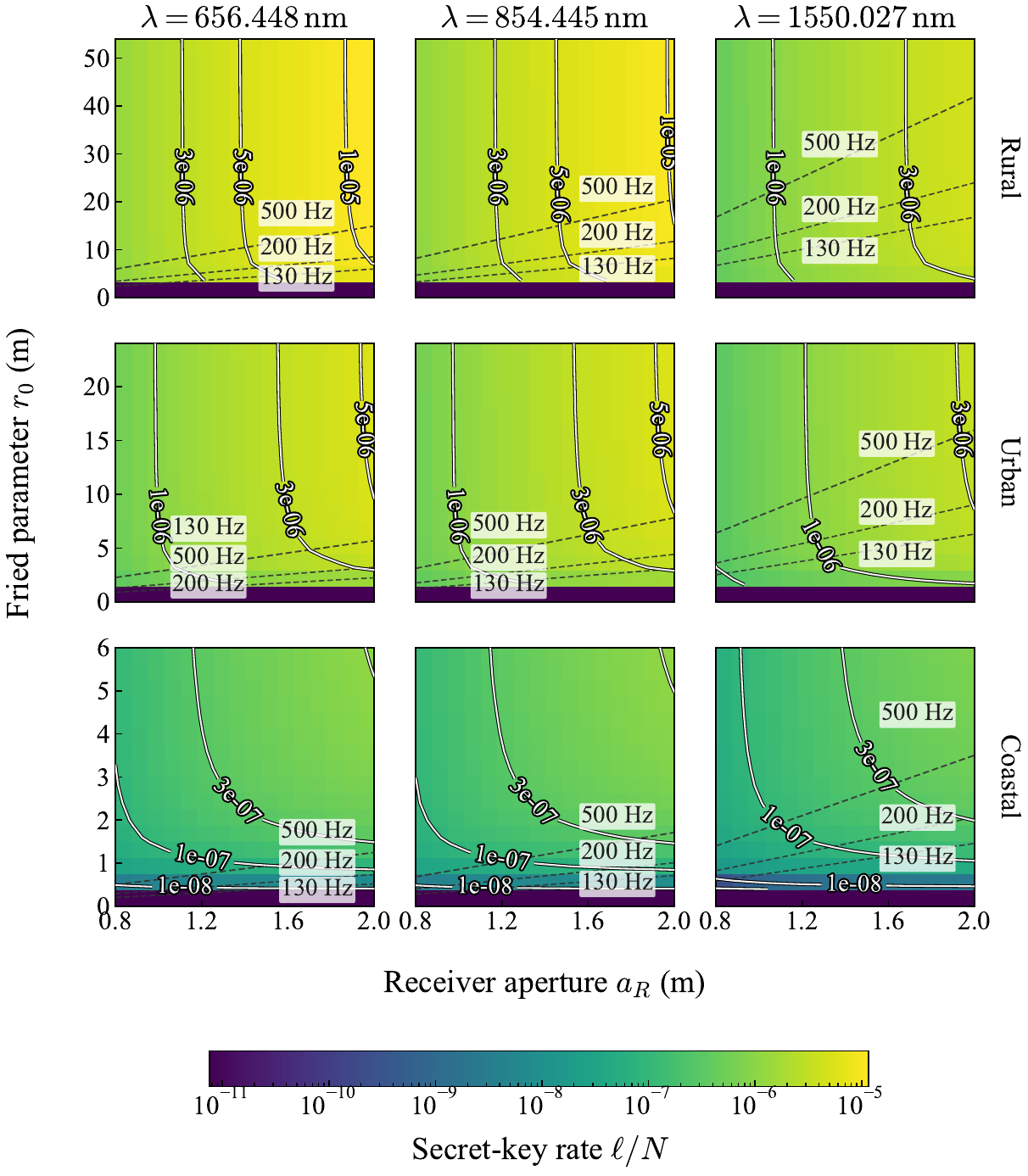}
    \caption{Finite-size secret-key rate $\ell/N$ as a function of receiver aperture $a_R$ (x axis) and Fried parameter $r_0$ (y axis) under nighttime operation, for fixed transmitter aperture $a_T=0.75$ m, SMF coupling, and the default SNSPD Spec B receiver. The color map and white contour labels show the finite-size secret-key rate, while the overlaid black dashed lines mark the effective AO-corrected Fried-parameter conditions for closed-loop bandwidths $f_c=130$, $200$, and $500$ Hz at fixed zenith angle $60^\circ$. Rows correspond to rural, urban, and coastal environments, and columns correspond to $\lambda=656.448$ nm, $\lambda=854.445$ nm, and $\lambda=1550.027$ nm.}
    \label{fig:fried_aperture_rate_contour_sm}
\end{figure}

Figure~\ref{fig:city_keyrate_aperture_lines_sm} shows the receiver-aperture dependence for a fixed cloud model and link parameters. This figure suggests that changing the aperture changes the spread: the curves fan out as $a_R$ increases, meaning that favorable sites extract a disproportionate benefit from larger terminals.
\begin{figure}[htbp]
    \centering
    \includegraphics[width=\linewidth]{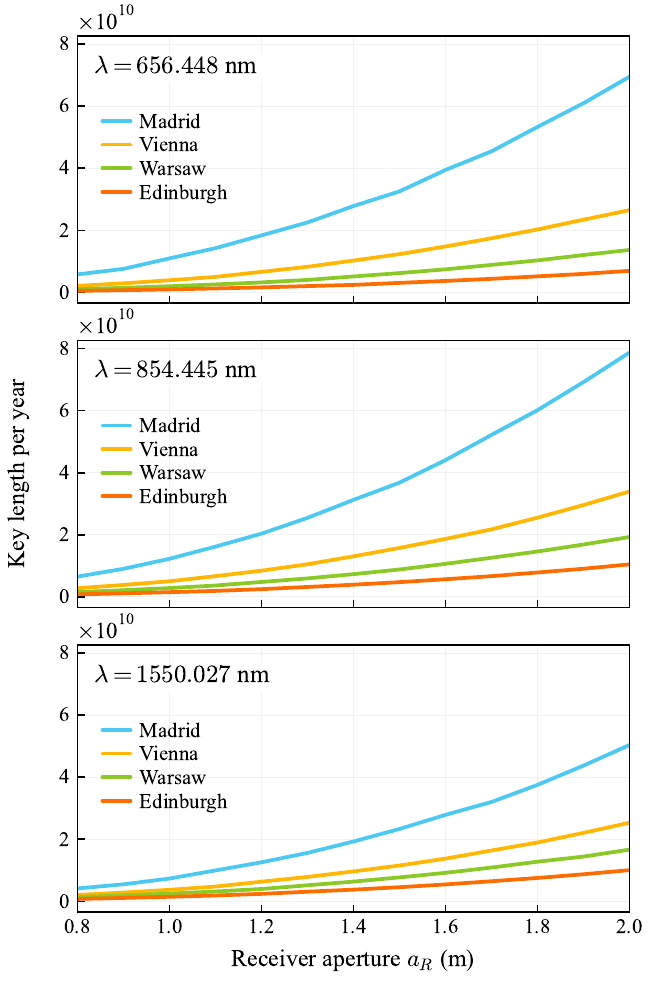}
    \caption{Annual finite-size secret-key volume (y axis) versus receiver aperture $a_R$ (x axis) for four representative cities using the cloud-weighted annual key model with the default finite-key settings, fixed transmitter aperture $a_T=0.75$ m, the urban atmospheric baseline, and the SNSPD Spec B receiver model. Rows correspond to $\lambda=656.448$ nm, $\lambda=854.445$ nm, and $\lambda=1550.027$ nm. Curve color distinguishes Madrid, Vienna, Warsaw, and Edinburgh.}
    \label{fig:city_keyrate_aperture_lines_sm}
\end{figure}

\clearpage
\subsection{Index of Expanded Main-Text Figures}
\label{sec:appendix_plots_expanded}

Finally, in Figs.~\ref{fig:loss_noise_rate_grid_sm}--\ref{fig:min_block_size_key_rate_grid_sm}, we present an expanded version of some of the figures presented in the main text. In particular, Figure~\ref{fig:loss_noise_rate_grid_sm} extends Fig.~\ref{fig:loss_noise_rate_grid} by adding a $1\%$ misalignment column and a $1\%$ afterpulsing row.
Figure~\ref{fig:aperture_loss_contour_sm} extends Fig.~\ref{fig:aperture_loss_contour} by adding a $656.448$ nm row and a $130$ Hz AO column.
Figure~\ref{fig:aperture_rate_contour_sm} extends Fig.~\ref{fig:aperture_rate_contour} by adding a $656.448$ nm row and a daytime low-noise column.
Figure~\ref{fig:zenith_aperture_loss_sm} extends Fig.~\ref{fig:zenith_aperture_loss} by adding a $656.448$ nm column and a daytime high-noise row.
Figure~\ref{fig:zenith_rate_finite_key_sm} extends Fig.~\ref{fig:zenith_rate_finite_key} by adding a coastal row and a $a_R=1.6$ m column. Finally,
Figure~\ref{fig:min_block_size_key_rate_grid_sm} extends Fig.~\ref{fig:min_b} by adding a nighttime row and a $656.448$ nm column.

\begin{figure}[htbp]
    \includegraphics[width=\linewidth]{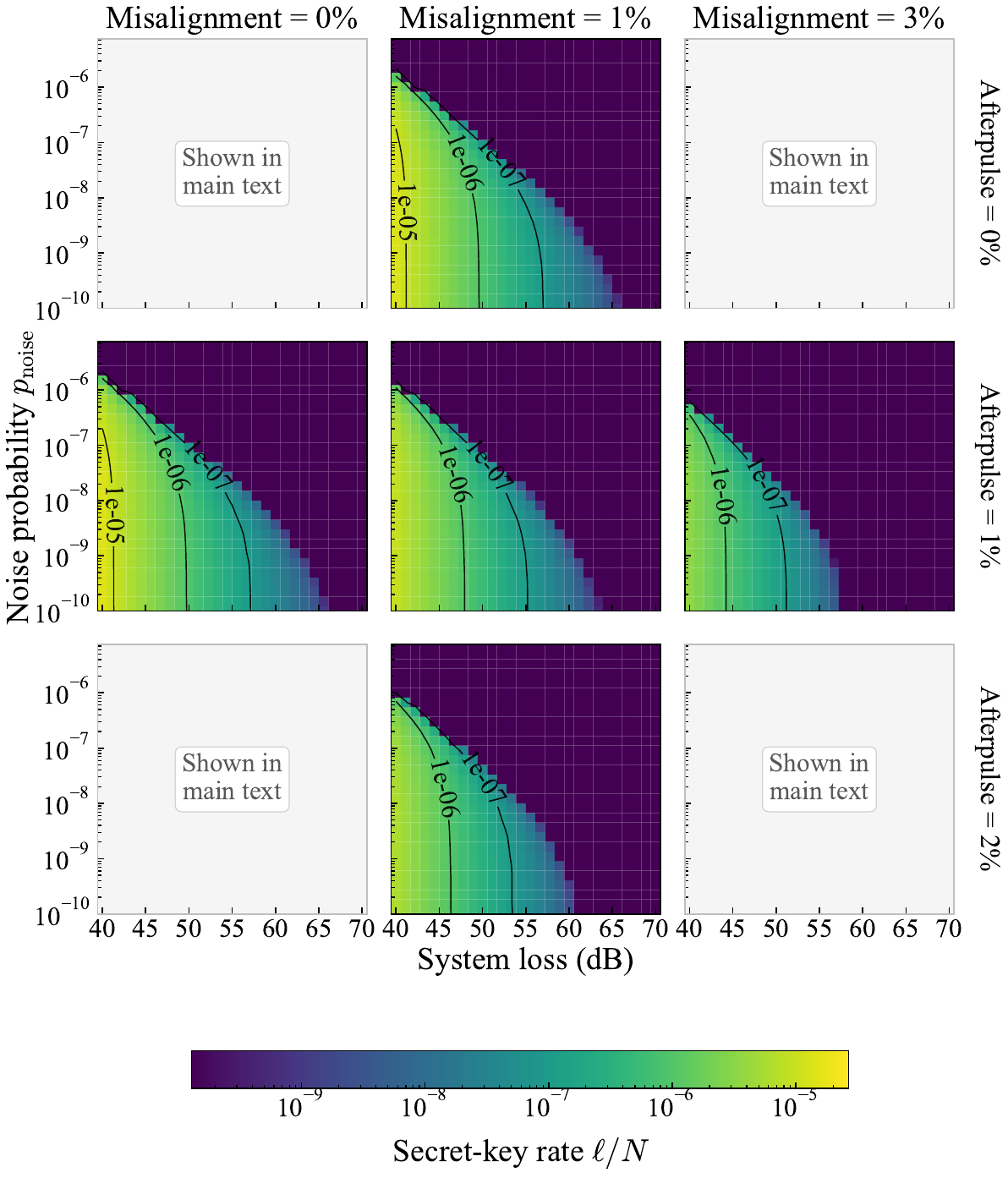}
    \caption{Finite-size secret-key rate $\ell/N$ as a function of total system loss (x axis, including receiver optics/filter and detector efficiency) and noise-click probability $p_{\mathrm{noise}}$ per detector per gate (y axis, log scale) for an asymmetric active receiver assuming the default finite-key settings, in particular $N=10^{12}$ and $\epsilon_{\mathrm{tot}}=10^{-8}$. The color map and black contour labels show the finite-size secret-key rate. Columns correspond to misalignment values of $0\%$, $1\%$, and $3\%$, while rows correspond to afterpulsing probabilities of $0\%$, $1\%$, and $2\%$. Subfigures already shown in the main text are omitted here.}
    \label{fig:loss_noise_rate_grid_sm}
\end{figure}

\begin{figure}[htbp]
    \includegraphics[width=\linewidth]{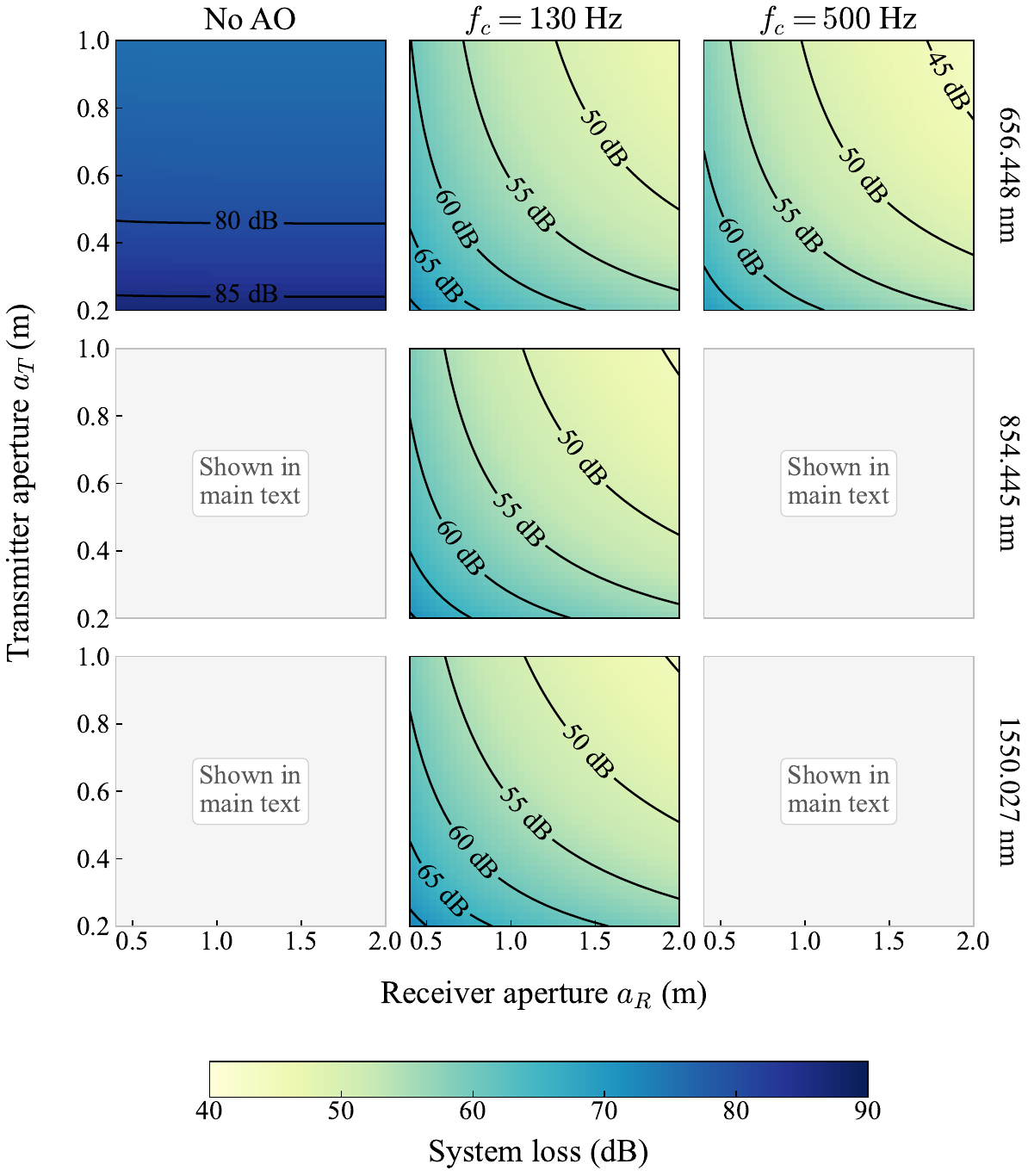}
    \caption{System loss (dB) as a function of the receiver aperture $a_R$ (x axis) and the transmitter aperture $a_T$ (y axis) on the fixed urban baseline with $\theta=60^\circ$, low pointing jitter ($0.5\,\mu\mathrm{rad}$), SMF coupling, and the default SNSPD Spec B receiver. The heat map and contour labels show iso-loss levels. Rows correspond to $\lambda=656.448$ nm, $\lambda=854.445$ nm, and $\lambda=1550.027$ nm, while columns compare no AO, low-bandwidth AO ($f_c=130$ Hz), and strong AO ($f_c=500$ Hz). Subfigures already shown in the main text are omitted here.}
    \label{fig:aperture_loss_contour_sm}
\end{figure}

\begin{figure}[htbp]
    \includegraphics[width=\linewidth]{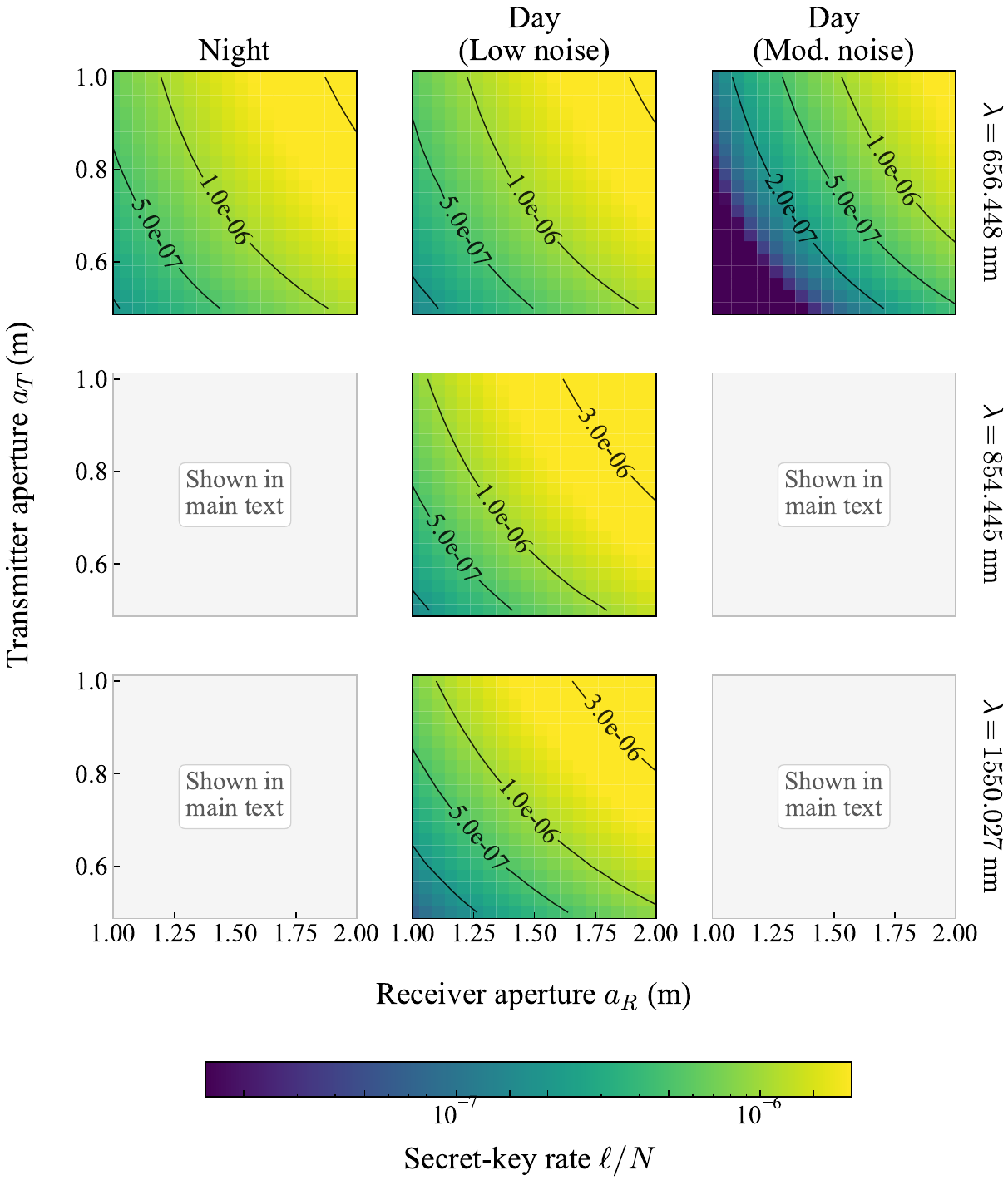}
    \caption{Finite-size secret-key rate $\ell/N$ as a function of receiver aperture $a_R$ (x axis) and transmitter aperture $a_T$ (y axis) for the default urban geometry with $\theta=60^\circ$, low AO correction ($f_c=130$ Hz), SMF coupling, and an SNSPD Spec B receiver. The color map and black contour labels show the finite-size secret-key rate. Rows correspond to $\lambda=656.448$ nm, $\lambda=854.445$ nm, and $\lambda=1550.027$ nm, while columns compare nighttime operation with daytime low-noise and daytime moderate-noise conditions. Subfigures already shown in the main text are omitted here.}
    \label{fig:aperture_rate_contour_sm}
\end{figure}

\begin{figure}[htbp]
    \includegraphics[width=\linewidth]{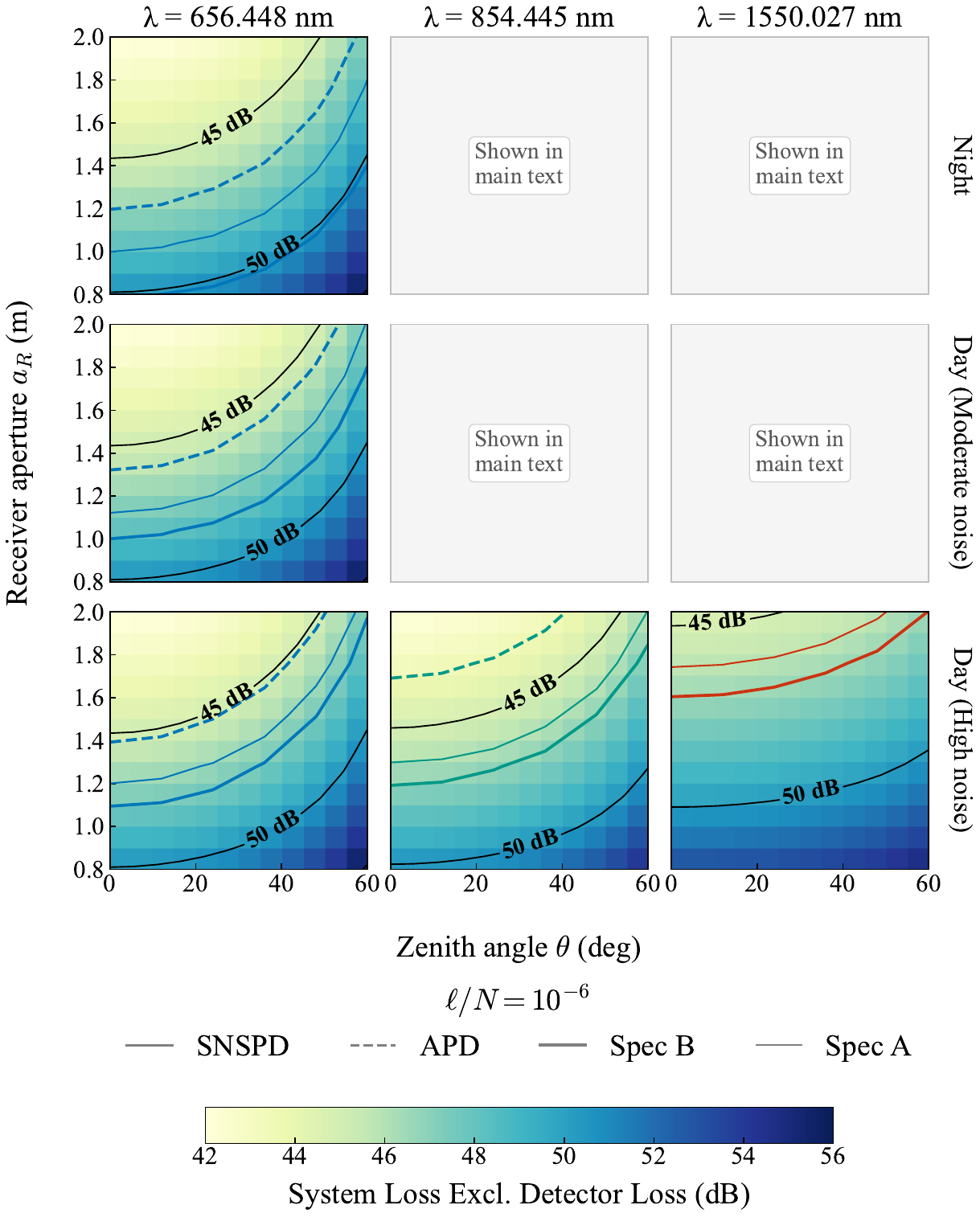}
    \caption{Combined loss and performance map as a function of zenith angle (x axis) and receiver aperture $a_R$ (y axis) at the fixed transmitter aperture $a_T=0.75$ m for the urban low-AO baseline with $f_c=130$ Hz, low pointing jitter ($0.5\,\mu\mathrm{rad}$), and SMF coupling. The background heat map and black contours show pre-detector system loss in dB, i.e., channel plus receiver optics/filter loss excluding detector efficiency, and the overlaid boundary curves mark the $\ell/N=10^{-6}$ threshold for the different detector models. Solid curves denote SNSPDs, dashed curves denote APDs, and thicker curves denote Spec B. Columns correspond to $\lambda=656.448$ nm, $\lambda=854.445$ nm and $\lambda=1550.027$ nm, while rows compare nighttime, daytime moderate-noise, and daytime high-noise operation. Subfigures already shown in the main text are omitted here.}
    \label{fig:zenith_aperture_loss_sm}
\end{figure}

\begin{figure}[htbp]
    \includegraphics[width=\linewidth]{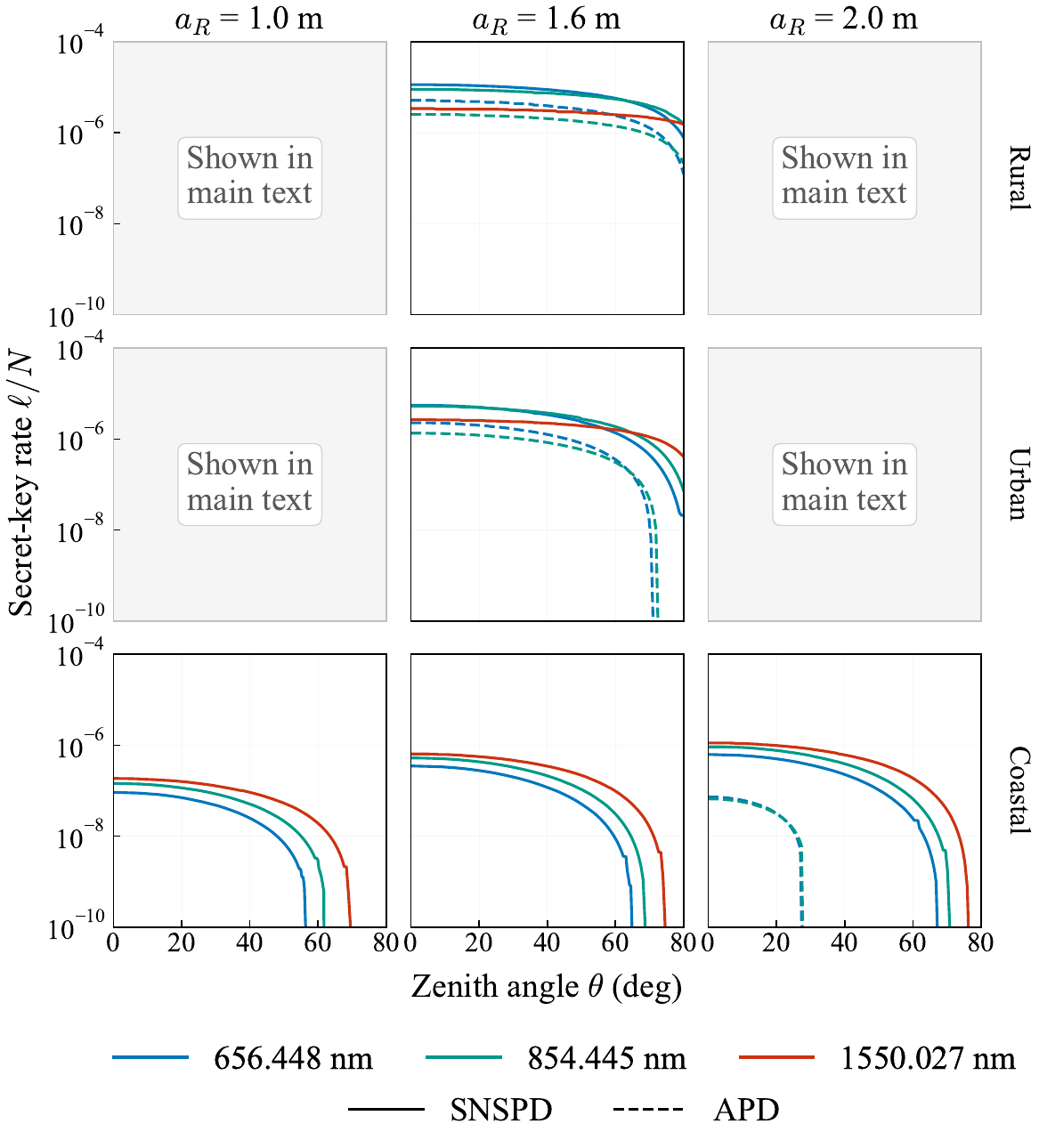}
    \caption{Finite-key rate $\ell/N$ (y axis) versus zenith angle (x axis) for different receiver apertures and operating environments under nighttime operation, with fixed transmitter aperture $a_T=0.75$ m, low AO correction ($f_c=130$ Hz), SMF coupling, and Spec B detectors. Rows correspond to rural, urban, and coastal conditions, while columns correspond to receiver apertures $a_R$ of $1.0$ m, $1.6$ m, and $2.0$ m. Curve color denotes wavelength $\{656.448,\,854.445,\,1550.027\}$ nm, and line style distinguishes SNSPDs (solid) from APDs (dashed). Subfigures already shown in the main text are omitted here.}
    \label{fig:zenith_rate_finite_key_sm}
\end{figure}

\begin{figure}[htbp]
    \includegraphics[width=\linewidth]{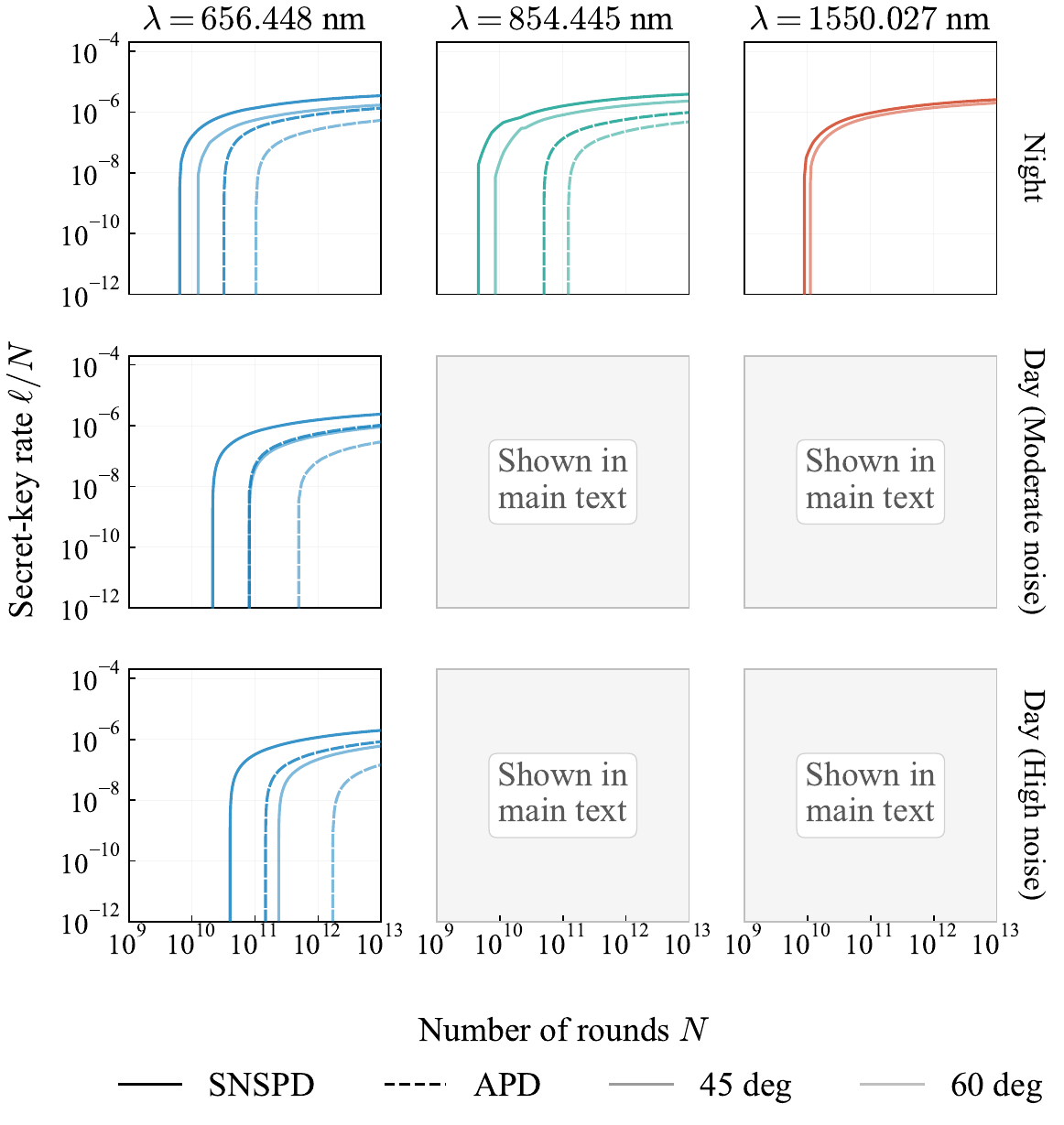}
    \caption{Finite-key rate $\ell/N$ (y axis) versus the number of transmission rounds $N$ (x axis), with both axes on logarithmic scales, for the default urban link with fixed transmitter aperture $a_T=0.75$ m, receiver aperture $a_R=1.5$ m, low AO correction ($f_c=130$ Hz), SMF coupling, and Spec B detectors. Columns correspond to $\lambda=656.448$ nm, $\lambda=854.445$ nm, and $\lambda=1550.027$ nm, while rows compare nighttime, daytime moderate-noise, and daytime high-noise operation. Solid curves denote SNSPD receivers, dashed curves denote APD receivers, and line opacity distinguishes zenith angles of $45^\circ$ and $60^\circ$. Subfigures already shown in the main text are omitted here.}
    \label{fig:min_block_size_key_rate_grid_sm}
\end{figure}